\newcommand{\nw}[0]{\lambda}
\newcommand{\dt}[0]{T^{\delta}}
\newcommand{\vJ}[0]{\mathcal{J}}
\newcommand{\vh}[0]{\varrho}
\newcommand{\vbeta}[0]{\eta}
\newcommand{\LN}[0]{{\rm{\ln}}\,}
\newcommand{\cD}[0]{\mathcal{D}}
\newcommand{\cB}[0]{\mathcal{B}}
\newcommand{\cN}[0]{\mathcal{N}}
\newcommand{\cT}[0]{\mathcal{T}}
\newcommand{\cA}[0]{\mathcal{A}}
\newcommand{\cZ}[0]{\mathcal{Z}}
\newcommand{\ket}[1]{\left\vert #1 \right\rangle}
\newcommand{\bra}[1]{\left\langle #1 \right\vert}
\newcommand{\bC}[0]{\mathbb{C}}
\newcommand{\bF}[0]{\mathbb{F}}
\newcommand{\bR}[0]{\mathbb{R}}
\newcommand{\bN}[0]{\mathbb{N}}
\newcommand{\PT}[0]{\mbox{\bf P}}
\newcommand{\Complete}[0]{\mbox{\bf complete}}
\newcommand{\Completeness}[0]{\mbox{\bf completeness}}
\newcommand{\BQP}[0]{\mbox{\bf BQP}}
\newcommand{\BPP}[0]{\mbox{\bf BPP}}
\newcommand{\hard}[0]{\mbox{\bf hard}}
\newcommand{\hardness}[0]{\mbox{\bf hardness}}
\newcommand{\QMA}[0]{\mbox{\bf QMA}}
\newcommand{\StoqMA}[0]{\mbox{\bf StoqMA}}
\newcommand{\PP}[0]{\mbox{\bf PP}}
\newtheorem{definition}{Definition}
\newtheorem{example}[definition]{Example}
\newtheorem{lemma}[definition]{Lemma}
\newtheorem{claim}[definition]{Claim}
\newtheorem{theorem}[definition]{Theorem}
\newtheorem{corollary}[definition]{Corollary}
\title{Positive bias makes tensor-network contraction\newline tractable}
\date{}
\author[1]{Jiaqing Jiang}
\email{jiaqingjiang95@gmail.com}
\affil[1]{CMS, Caltech, Pasadena,  CA, 91125, USA}
\author[2]{Jielun Chen}
\email{jchen9@caltech.edu}
\affil[2]{Department of Physics, Caltech, Pasadena, CA 91125, USA}
\author[3,4]{Norbert Schuch}
\email{norbert.schuch@univie.ac.at}
\affil[3]{University of Vienna, Faculty of Mathematics, Oskar-Morgenstern-Platz 1, 1090 Wien, Austria}
\affil[4]{University of Vienna, Faculty of Physics, Boltzmanngasse 5, 1090 Wien, Austria}
\author[5,6]{Dominik Hangleiter}
\email{mail@dhangleiter.eu}
\affil[5]{QuICS, University of Maryland \& NIST, College Park, MD 20742, USA}
\affil[6]{Simons Institute for the Theory of Computing, University of California, Berkeley, CA 94720, USA}
\begin{document}

\maketitle


\newcommand{\ns}[1]{\textcolor{red}{\textbf{[NS: #1]}}}

\begin{abstract}

Tensor network contraction is a powerful computational tool in quantum 
many-body physics, quantum information and quantum chemistry. The complexity of contracting a tensor network is thought to mainly depend on its entanglement properties, as reflected by the Schmidt rank across bipartite cuts.
Here, we study how the complexity of tensor-network contraction depends on a different notion of quantumness, namely, the sign structure of its entries.
We tackle this question rigorously  by investigating the complexity of contracting tensor networks whose   entries have a positive bias. 

We show that for intermediate bond dimension $d\gtrsim n$,  a small positive mean value $ \gtrsim 1/d$ of the tensor entries already dramatically decreases the computational complexity of \emph{approximately} contracting random tensor networks, enabling a quasi-polynomial time algorithm for arbitrary $1/\mathrm{poly}(n)$ multiplicative approximation.
At the same time \emph{exactly} contracting such tensor networks remains $\#\PT$-$\hard$, like for the zero-mean case~\cite{haferkamp2020contracting}.
The mean value $1/d$ matches the phase transition point observed  in~\cite{chen2024sign}. 
Our proof makes use of Barvinok's method for approximate counting and the technique of mapping random instances to statistical mechanical models. 
We further consider the worst-case complexity of approximate contraction of positive tensor networks, where all entries are non-negative.  
We first give a simple proof showing that  a multiplicative approximation  with error  exponentially close to one  is at least $\StoqMA$-$\hard$.
We then show that when considering  additive error in the matrix $1$-norm, the contraction of positive tensor network is $\BPP$-$\Complete$. 
This result compares to Arad and Landau's~\cite{arad2010quantum} result, which shows that for general tensor networks, approximate contraction up to  matrix $2$-norm additive error is $\BQP$-$\Complete$. 

Our work thus identifies new parameter regimes in terms of the positivity of the tensor entries in which tensor networks can be (nearly) efficiently contracted.
\end{abstract}

\newpage
\tableofcontents

\section{Introduction}

Tensor network contraction is a powerful computational tool for studying quantum information and quantum many-body systems. It is widely used in  estimating ground state properties~\cite{white1993density,white1992density,murg2007variational,vlaar2021simulation}, approximating partition functions~\cite{evenbly2015tensor,zhao2010renormalization}, simulating evolution of 
quantum circuits~\cite{markov2008simulating,pednault2017breaking,huang2020classical},   
as well as decoding for  quantum error correcting codes~\cite{ferris2014tensor,bravyi2014efficient}. 
Mathematically, a tensor network $T\coloneqq T(G,M)$ on a graph $G=(V,E)$ can be interpreted as an edge labeling model.
 Each edge can be labeled by one of $d$ different colors, where~$d$ is  called the \textit{bond dimension}.
 Each vertex~$v$  is associated with a function $M^{[v]}$, called \textit{tensor}, 
 whose value depends on the labels of edges adjacent to~$v$. The tensor $M^{[v]}$ can be represented as a vector by enumerating its values with respect to various edge labeling.
For any edge labeling $c$, denote the 
value (entry) of the tensor $M^{[v]}$ by  $M^{[v]}_c$. 
The \textit{contraction value of tensor network} 
 is defined to be
\begin{align}
	\chi(T)\coloneqq \sum_{\text{edge labeling $c$}\,\,\,}	 \prod_{v \in V} M^{[v]}_c\label{eq:TNDef}.
\end{align}
In applications of tensor networks, the contraction value represents the quantities of interest and the goal of   tensor-network contraction algorithms is to compute the contraction value to high precision.

It is therefore a fundamental question to determine when $\chi(T)$ can be computed efficiently.
Despite the practical and foundational importance of this question, unfortunately most  rigorous results show that tensor network contraction is extremely hard, with very few tractable cases known, that is, cases for which a 
(quasi-)polynomial time algorithm exists.  Specifically, it is well-known that
 computing $\chi(T)$ exactly is $\#\PT$-$\hard$~\cite{schuch2007computational} and therefore intractable in the worst case.
 The hardness can be further strengthened to the average case, where Haferkamp \textit{et al.}~\cite{haferkamp2020contracting}  showed that  even  for random tensor networks  on a 2D lattice, 
 computing $\chi(T)$ exactly remains $\#\PT$-$\hard$ for typical instances. 
 There, the randomness is modeled by sampling  the entries of the tensor network iid.\ from a  Gaussian distribution with \textit{zero} mean and unit variance. Conversely, (quasi-)polynomial time algorithms
 are only known for restricted cases, like tensor networks on simple graphs of small tree-width~\cite{markov2008simulating}, for example 1D line or tree; or for 
 restrictive symmetric tensor network~\cite{patel2017deterministic} where 
each entry is very close to $1$, which requires that $\forall c, v, |M^{[v]}_c -1|\leq 0.35/(\Delta+1)$, where $\Delta$ is the maximum degree of the graph. 
Besides, for tensor networks with  uniformly gapped parent Hamiltonians,   (quasi)-polynomial time algorithm is known for computing  local
expectation values~\cite{schwarz2017approximating}.   

But while efficient and provably correct tensor-network  contraction algorithms are rare, for many many-body physics applications, 
state-of-the-art numerical algorithms achieve desired accuracy in practice~\cite{orus2019tensor,banuls2023}.
To obtain a better understanding of when and why such heuristics  work, it is important to identify new tractable cases in tensor network contraction.  
With this goal in mind, a recent line of  
work suggests an interesting direction, namely, that the sign structure of the tensor entries influences the entanglement and therefore affects the the complexity of tensor network contraction~\cite{gray2022hyper,chen2024sign}. 
In particular, it has been observed  that there is a sharp phase transition in the entanglement thus the complexity of approximating random tensor networks, when the mean of the entries is shifted from zero to positive~\cite{gray2022hyper,chen2024sign}.

\subsection{Main results and technical highlights} 
In this work, we rigorously investigate the impact of  sign structure on the complexity of tensor network contraction in various regimes. We mainly focus on the contraction of the physically motivated 2D tensor networks, which 
 are widely used as ground state ansatzes for local Hamiltonians~\cite{corboz2016variational,vanderstraeten2016gradient} (Projected Entangled Pair States) and for the simulation of quantum circuits~\cite{guo2019general}.
 
Recall that for random 2D tensor network whose entry has zero mean, the exact contraction is $\#\PT$-$\hard$~\cite{haferkamp2020contracting}. We first show that a positive bias does not decrease the complexity of the exact contraction:
\begin{theorem}[Informal version of Theorem \ref{thm:exactH}]\label{intro:exact}
    The  exact contraction of random 2D tensor network 
    whose entries are   iid.\ sampled from a  Gaussian distribution with \underline{positive} mean and unit variance remains $\#\PT$-$\hard$.
\end{theorem}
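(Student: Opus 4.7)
The plan is to derive the positive-mean hardness from Haferkamp et al.'s zero-mean result~\cite{haferkamp2020contracting} via a polynomial-interpolation reduction. Assume for contradiction that there exists a polynomial-time algorithm $\mathcal{A}$ that exactly computes $\chi(T)$ on a $1-1/\mathrm{poly}(n)$ fraction of $T\sim N(\mu,1)^{\otimes}$. I would use $\mathcal{A}$ to construct an algorithm that computes $\chi(X)$ on a $1-1/\mathrm{poly}(n)$ fraction of $X\sim N(0,1)^{\otimes}$, contradicting~\cite{haferkamp2020contracting}.

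The key structural fact is that, by multilinearity of $\chi$ in each vertex tensor, the shift map $t\mapsto \chi(X+t\mathbf{1})$ is a univariate polynomial of degree $n=|V|$, where $\mathbf{1}$ denotes the all-ones tensor at each vertex. Thus $\chi(X)$ is the value of this polynomial at $t=0$, while the value at $t=\mu$ corresponds to an instance $X+\mu\mathbf{1}\sim N(\mu,1)^{\otimes}$ that $\mathcal{A}$ handles directly. To collect the $n$ further values needed to interpolate, I would use a Gaussian rotation trick: draw fresh independent samples $W_1,\ldots,W_n\sim N(\mu,1)^{\otimes}$ and, for each $i$ and parameter $\theta\in[0,2\pi)$, form
\begin{equation}
Z_i(\theta)=\cos\theta\cdot X+\sin\theta\cdot (W_i-\mu\mathbf{1})+\mu\mathbf{1}.
\end{equation}
By rotation invariance of iid Gaussians, each $Z_i(\theta)$ is marginally $N(\mu,1)^{\otimes}$-distributed and therefore answered correctly by $\mathcal{A}$ with high probability. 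As a function of $\theta$, $\chi(Z_i(\theta))$ is a degree-$n$ trigonometric polynomial whose bivariate-monomial expansion has $\chi(X)$ as the coefficient of $\cos^n\theta$; sampling at $2n+1$ values of $\theta$ reconstructs the trigonometric polynomial, and combining the responses across the $n$ fresh samples and the shift-identity anchor yields a linear system of $\mathrm{poly}(n)$ equations whose solution contains $\chi(X)$. A union bound over the $\mathrm{poly}(n)$ queries keeps the overall failure probability at $1/\mathrm{poly}(n)$.

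The principal obstacle is the degeneracy from the rotation constraint $\cos^2\theta+\sin^2\theta=1$: on the circle, the bivariate-monomial coefficient corresponding to $\chi(X)$ is not uniquely readable from the trigonometric Fourier spectrum alone, and a naive inversion leaves an ambiguity. I expect to resolve this by upgrading the single-circle rotation to a multi-dimensional spherical one, $Z(\alpha)=\alpha_0 X+\sum_{i\ge 1}\alpha_i(W_i-\mu\mathbf{1})+\mu\mathbf{1}$ on $\sum_i\alpha_i^2=1$, whose spherical-harmonic decomposition is rich enough to isolate the $\alpha_0^n$ coefficient via standard linear algebra. As a conceptual backup, one could instead adapt Haferkamp et al.'s worst-to-average reduction directly to the positive-mean distribution, checking that their gadget constructions and Lipton-style interpolation continue to function under a uniform shift of all tensor entries; since a constant shift only modifies entries in a benign, gadget-compatible way, this alternative path is likely the more straightforward one in practice.
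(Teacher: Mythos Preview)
Your main rotation approach has a real gap that the spherical fix does not close. The degeneracy you correctly flag on the circle persists on any sphere: restricting a degree-$n$ polynomial in $(\alpha_0,\ldots,\alpha_k)$ to the locus $\sum_i\alpha_i^2=1$ annihilates precisely the ideal generated by $\sum_i\alpha_i^2-1$, and since $(\sum_i\alpha_i^2-1)\,\alpha_0^{n-2}$ lies in that ideal and carries a nonzero $\alpha_0^n$ coefficient, the coefficient of $\alpha_0^n$ is simply not a well-defined functional of the restriction, in any dimension. Spherical harmonics cannot recover information the restriction map has already destroyed.

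The rotation machinery is also unnecessary for your own reduction. You can evaluate $p(t)=\chi(X+t\mathbf{1})$ directly at many points $t_1,\ldots,t_k$ clustered near $\mu$: each instance $X+t_j\mathbf{1}\sim N(t_j,1)^{\otimes}$ is within total-variation distance $O(d^4 n\,|t_j-\mu|)$ of $N(\mu,1)^{\otimes}$, so $\mathcal{A}$ answers each with nearly its base rate, and Berlekamp--Welch decoding tolerates the residual errors. This is exactly the Lipton-style argument your backup alludes to, and your backup \emph{is} what the paper does: it interpolates along $R(t)=tP+(1-t)Q$ between a fixed $\#\PT$-hard instance $P$ and a fresh positive-mean Gaussian sample $Q$, bounds the TV distance of $R(t)$ from the target ensemble for small $t$ via the shift-and-scale stability of Gaussians, and recovers the degree-$n$ polynomial $q(t)=\chi(R(t))$ by Berlekamp--Welch before evaluating $q(1)$. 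So your backup is correct and matches the paper; drop the rotation construction. One quantitative remark: the paper obtains hardness already at success probability $\tfrac{3}{4}+\tfrac{1}{n}$, whereas your stated assumption of $1-1/\mathrm{poly}(n)$ would yield only a weaker statement; the direct worst-to-average reduction recovers the stronger threshold.
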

While Theorem \ref{intro:exact} indicates 
the exact contraction remains hard, our  main result is proving that  a small positive mean   significantly decreases the computational complexity of multiplicative approximation, enabling a quasi-polynomial time algorithm. This provides   rigorous evidence that the sign structure of the tensor entries influences the contraction complexity, as  observed and conjectured in previous  works~\cite{gray2022hyper,chen2024sign}.  
In particular,  we show that
\begin{theorem}[Informal version of Theorem \ref{thm:random}]\label{intro:random}
    For random 2D tensor network with intermediate bond dimension $d \gtrsim n$, where the entries are iid.\  sampled from Gaussian distribution with mean 
    $\mu \gtrsim 1/d$ and unit variance, there exists  a quasi-polynomial time algorithm which with high probability  approximates the contraction value up to arbitrary $1/poly$ multiplicative error.  
\end{theorem}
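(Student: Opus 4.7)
The plan is to combine Barvinok's polynomial-interpolation method with a statistical-mechanical analysis of the second moment of the contraction value. Write each tensor entry as $M_c^{[v]}=\mu + Y_c^{[v]}$, where $Y_c^{[v]}$ is an iid centered Gaussian of unit variance, and introduce the one-parameter family
\begin{equation*}
p(z) \;\coloneqq\; \sum_{c}\,\prod_{v\in V}\bigl(\mu + z\, Y_c^{[v]}\bigr),
\end{equation*}
so that $p(1)=\chi(T)$, while the mean-field endpoint $p(0)=\mu^{|V|}d^{|E|}$ is trivially computable. Since $p$ is a polynomial of degree $n=|V|$ in $z$, it suffices to approximate $\log p(1)$ to additive error $\epsilon$; exponentiating then yields the desired multiplicative approximation of $\chi(T)$.

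Barvinok's method reduces this to two subgoals: (i) proving that $p(z)\neq 0$ on a disk $\{|z|\le R\}$ for some constant $R>1$, with high probability over the randomness in $T$; and (ii) efficiently computing the first $K=O(\log(n/\epsilon)/\log R)$ Taylor coefficients of $\log p$ at $z=0$. For (ii), the coefficient of $z^k$ in $p(z)$ equals
\begin{equation*}
\mu^{|V|-k}\sum_{\substack{S\subseteq V\\ |S|=k}}\sum_c \prod_{v\in S} Y_c^{[v]},
\end{equation*}
where the inner sum over $c$ factorizes into a $d$-factor for each edge disjoint from $S$ times a local tensor contraction on the subgraph of edges incident to $S$. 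Passing to $\log p$ via the standard exponential/cluster expansion then expresses each Taylor coefficient of $\log p$ as a sum over \emph{connected} subsets of at most $K$ vertices; on the constant-degree 2D lattice there are at most $n\cdot c^K$ such sets, and each contributes a term evaluable in time $d^{O(K)}$. With $K=O(\log(n/\epsilon))$ and $d=\mathrm{poly}(n)$ this totals a quasi-polynomial running time.

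The crux is subgoal (i), and here I would use the statistical-mechanical mapping of~\cite{haferkamp2020contracting} but applied to the \emph{second moment} of $p$ rather than to a hardness reduction. A direct Gaussian calculation gives
\begin{equation*}
\mathbb{E}\,|p(z)|^2 \;=\; \sum_{c,c'}\prod_{v\in V}\bigl(|\mu|^2 + |z|^2\,\mathbf{1}[c_v=c'_v]\bigr),
\end{equation*}
where $c_v,c'_v$ denote the color tuples on the edges incident to $v$ in the two independent copies. Expanding the product and summing over $(c,c')$ reorganizes this as
\begin{equation*}
\mathbb{E}\,|p(z)|^2 \;=\; |\mu|^{2|V|}d^{2|E|}\sum_{S\subseteq V}\Bigl(\tfrac{|z|^2}{|\mu|^2}\Bigr)^{|S|} d^{-|E_S|},
\end{equation*}
with $E_S$ the set of edges incident to $S$: an Ising-type spin model in which each ``active'' region $S$ pays an edge-incidence cost $d^{-|E_S|}$ against an entropic weight $(|z|^2/|\mu|^2)^{|S|}$. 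Under the assumed scaling $|\mu|\gtrsim 1/d$ and $d\gtrsim n$, a Peierls-type energy-entropy argument (using $|E_S|\gtrsim |S|$ for connected $S$ on the 2D lattice) shows that the empty configuration $S=\emptyset$ dominates on $|z|\le R$ for some constant $R>1$, yielding $\mathbb{E}|p(z)|^2=(1+o(1))|\mathbb{E}p(z)|^2$.

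The main obstacle will be promoting this \emph{pointwise} second-moment control to the required \emph{uniform} non-vanishing of $p(z)$ over the disk: since $p$ has degree $n$, a naive union bound over a polynomial-sized $\epsilon$-net combined with Chebyshev is too weak to kill all potential zeros. My intended workaround is to reinterpret $p(z)$ itself as a polymer partition function indexed by connected subsets of active vertices with random Gaussian-dependent activities, and to verify a Koteck{\'y}--Preiss type convergence criterion for the associated cluster expansion with high probability over $T$. Such a criterion simultaneously certifies analyticity and non-vanishing of $\log p(z)$ throughout $\{|z|\le R\}$ (settling (i)) and furnishes the Taylor coefficients required in (ii), so that no pointwise-net argument is needed. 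Establishing the probabilistic Koteck{\'y}--Preiss bound under the precise scaling $\mu\gtrsim 1/d$, $d\gtrsim n$ is where the bulk of the random-matrix and statistical-mechanics estimates will live.
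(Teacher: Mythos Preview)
Your second-moment formula is correct, but the conclusion you draw from it is not: at $|z|=1$ the empty configuration does \emph{not} dominate. On the periodic $L_1\times L_2$ torus the term $S=V$ has $|E_S|=|E|=2n$, so its contribution to $\sum_{S}(|z|^2/|\mu|^2)^{|S|}d^{-|E_S|}$ equals $(|z|^2 d^2)^{n}d^{-2n}=|z|^{2n}$ when $\mu\sim 1/d$. Thus $\mathbb{E}|p(z)|^2/|p(0)|^2\ge 1+|z|^{2n}$, which is $\ge 2$ at $|z|=1$ and exponentially large for any constant $R>1$; the claimed $(1+o(1))$ is false in the very regime you need it. The same obstruction kills the Koteck\'y--Preiss route: in the polymer representation of $p(z)/p(0)$ the single connected polymer $\gamma=V$ has activity $w(V)=(z/\mu)^n d^{-2n}X_V$ with $\mathbb{E}|X_V|^2=d^{2n}$, so typically $|w(V)|\sim |z|^n$, i.e.\ order one at $|z|=1$. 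More generally, bulk regions with $e(\gamma)\approx 2|\gamma|$ carry activity $|z|^{|\gamma|}d^{-O(\sqrt{|\gamma|})}$, and the entropy of connected sets beats this for $|\gamma|\gtrsim (\log d)^2$. So no KP criterion can hold on a disk of constant radius $>1$, and hence you cannot certify zero-freeness there. (The paper notes explicitly in Section~\ref{sec:open} that a direct cluster expansion is not absolutely convergent.)

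The paper's way around this is to abandon the disk altogether. Using Jensen's formula one converts the second-moment bound $\mathbb{E}|h_A(e^{i\theta})|^2/|h_A(0)|^2\le 2e^{3c}$ into a bound of $O(1)$ on the \emph{expected number of zeros} in the disk of radius $1-\lambda$ (Lemma~\ref{lem:root}, Corollary~\ref{cor:14}); Markov then gives at most $1/\lambda^2$ zeros with high probability. One now slices the disk into $1/\lambda^3$ thin sectors and observes that the Gaussian ensemble is invariant under $A\mapsto e^{i\theta}A$, which rotates the zero set; hence the particular strip $\mathcal T(1-2\lambda,w)$ aligned with the real axis is zero-free with probability $\ge (1-\lambda)\cdot\tfrac45$ (Lemma~\ref{lem:rf}). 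Finally one applies the Eldar--Mehraban variant of Barvinok (Lemma~\ref{lem:cir2strip}, Corollary~\ref{cor:strip}), which embeds a disk of radius $\beta>1$ into the strip via a low-degree polynomial $\phi$ and Taylor-expands $\ln G_A(\phi(z))$. Your derivative computation and runtime analysis are essentially right (and match Lemma~\ref{lem:dg_new}); what is missing is precisely this ``few zeros $+$ rotation $+$ strip'' mechanism that replaces the unattainable zero-free disk.
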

Here $a \gtrsim b$ means that $a$ scales at least as fast as $b$.

While it is expected that  tensor network contraction becomes easier  when all entries are positive so that there is  no sign problem, our result  is much more fine-grained than this belief since our tensor network is only \textit{slightly} positive, that is, a significant portion of the tensor entries are still negative.
In particular, note that
the mean value $\gtrsim 1/d$ is far less than the unit variance of the tensor entries. 
Compared to previous work~\cite{patel2017deterministic} which shows that tensor-networks whose all entries are close to $1$ can be contracted using Barvinok's method,\footnote{More precisely for 2D tensor network, it requires that $\forall c, v, |M^{[v]}_c-1|\leq 0.35/(4+1)=0.07$. } our result allows the entries to have significant fluctuations and to be a mixture of positive and negative values.  
We also note  that the threshold value $\gtrsim 1/d$ matches the phase transition point predicted in~\cite{chen2024sign}\footnote{To clarify, \cite{chen2024sign} draws each tensor from a Haar random distribution. If one does the same calculation for drawing each entry from Gaussian random distribution, the
predicted phase transition point will also be approximately $1/d$. }
  with respect to the  entanglement-based
contraction algorithm. The fact that two different methods (our algorithm and the entanglement-based algorithm) admit the same threshold might indicate that there is a genuine phase transition in the complexity of tensor network contraction  at this point. 
The requirement of $d \gtrsim n$ on the bond dimension in Theorem \ref{intro:random} is due to the fact that certain concentration effects set in at $ d \sim n$. 
One may wonder then whether the intermediate bond dimension and the nonzero mean make the mean contraction value  $\mu^n d^{2n}$ (attained when all entries in the tensor network take the mean value $\mu$) a precise guess for the contraction  value, that is $\chi(T)=\mu^nd^{2n}(1+1/poly(n))$.
This is not the case since a simple lower bound shows that the second moment of  $\chi(T)/(\mu^nd^{2n})$ is at least $2$. In comparison, our algorithm can achieve an arbitrary $1/poly(n)$ multiplicative error in quasi-polynomial runtime; recall Theorem~\ref{intro:random}.   
Besides, although Theorem \ref{intro:random} is formulated for random 2D tensor networks, the proposed algorithm is well-defined and runs in quasi-polynomial time for an arbitrary graph $G$ of constant degree, which
may inspire new heuristic algorithms for general tensor networks.

Besides studying the average case complexity for  approximating slightly positive tensor networks, we also investigate the complexity of approximating (fully) positive tensor networks, where all the entries are positive.  
 Approximate contraction of positive tensor network is directly related to approximate counting,  we  give a simple proof to show that    
\begin{theorem}[Informal version of Theorem \ref{thm:StoqMA}]
   $1/poly(n)$ multiplicative 
 approximation of positive tensor network is $\StoqMA$-$\hard$. The $\StoqMA$-$\hard$ remains even if we relax the multiplicative error from $1/poly(n)$ to a value exponentially close to one.
\end{theorem}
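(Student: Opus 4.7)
My plan is to reduce from the canonical $\StoqMA$-$\Complete$ problem: approximating the ground-state energy of a stoquastic local Hamiltonian. Recall that a local Hamiltonian $H = \sum_k H_k$ is stoquastic in the computational basis if every $H_k$ has only non-positive off-diagonal entries; the associated promise problem asks one to distinguish ground-state energy $\lambda_0 \leq a$ from $\lambda_0 \geq b$ with $b - a \geq 1/\mathrm{poly}(n)$, and it is $\StoqMA$-$\hard$.

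First I would shift and rescale $H$ so that, for a sufficiently small $\Delta\tau$, each $G_k \coloneqq I - \Delta\tau H_k$ is entrywise non-negative with spectrum in $[0,1]$. Then the quantity
\begin{equation}
Z_N \coloneqq \mathrm{Tr}\bigl((G_1 G_2 \cdots G_m)^N\bigr) \;=\; \sum_{x_0, \ldots, x_{Nm-1}} \prod_{i} \langle x_i | G_{(i \bmod m)+1} | x_{i+1}\rangle
\end{equation}
is a tensor network with all entries non-negative. For $N$ chosen polynomial in $n$ and $1/(b-a)$, the value of $Z_N$ is dominated by the largest eigenvalue of $G_1 \cdots G_m$, so a $(1 + 1/\mathrm{poly}(n))$ multiplicative approximation of $Z_N$ suffices to distinguish the YES and NO cases of the stoquastic Hamiltonian problem. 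This yields $\StoqMA$-$\hardness$ of $1/\mathrm{poly}(n)$ multiplicative approximation of positive tensor networks.

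For the stronger statement---hardness even when the multiplicative error is relaxed to be exponentially close to $1$ (so that the approximation factor approaches $2$)---I would invoke standard gap amplification via the tensor product. Suppose a candidate algorithm $\mathcal{A}$ outputs $\hat\chi$ satisfying $\chi(T)/(1+\epsilon) \leq \hat\chi \leq (1+\epsilon)\,\chi(T)$ for $\epsilon$ close to $1$. Apply $\mathcal{A}$ to $T^{\otimes k}$, the disjoint union of $k$ copies of $T$: then $\chi(T^{\otimes k}) = \chi(T)^k$, the graph size is $k|T|$, and all entries remain non-negative. Taking the $k$-th root of $\mathcal{A}$'s output yields an approximation of $\chi(T)$ within factor $(1+\epsilon)^{1/k} \leq 1 + \log(1+\epsilon)/k$, which for $k = \mathrm{poly}(n)$ sufficiently large falls below any prescribed $1 + 1/\mathrm{poly}(n)$ threshold. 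Hence such an $\mathcal{A}$ would also solve the $\StoqMA$-$\hard$ problem of the previous paragraph, completing the reduction.

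The main obstacle is calibrating the Trotterization so that $Z_N$ tracks the ground-state energy of $H$ with precision $1/\mathrm{poly}(n)$ while the tensor network remains polynomial-sized. The Trotter error introduces a perturbation of order $\Delta\tau^2 \cdot \mathrm{poly}(n)$ per step, whereas resolving the top eigenvalue requires $N \cdot \Delta\tau$ to exceed the inverse spectral gap. Reconciling these two conditions with a polynomial-size positive tensor network---and ensuring that the trace retains a nontrivial overlap with the dominant eigenspace---requires a careful parameter choice; once this calibration is fixed, the reduction is direct.
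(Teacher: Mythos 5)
Your proposal attacks the same target but by a genuinely different route. The paper works directly with the $\StoqMA$ verifier: for a verifier circuit $U$ built from $\mathrm{X}$, $\mathrm{CNOT}$, $\mathrm{Toffoli}$ gates with $\ket 0$ and $\ket +$ ancillas, the operator $M_x = \bra{\phi}U^\dagger \Pi_{\mathrm{out}}U\ket{\phi}$ is PSD and its tensor-network representation is already entrywise non-negative, with largest eigenvalue $\geq b$ in yes instances and $\leq a$ in no instances. The paper then represents $\mathrm{tr}(M_x^k)$ as a positive tensor network and chooses $k$ polynomially large so that the exponential gap $b^k$ vs.\ $2^{n_w}a^k$ swamps even a near-unity multiplicative error. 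This handles both error regimes in one stroke, with no Trotter error to control. You instead reduce from the stoquastic local-Hamiltonian ground-energy problem and build the positive tensor network from Trotterized imaginary-time evolution, $Z_N = \mathrm{Tr}\bigl((G_1\cdots G_m)^N\bigr)$ with $G_k = I - \Delta\tau H_k$. This is a reasonable alternative, but it buys you nothing relative to the circuit-based reduction and costs you the Trotter-calibration headache you flag at the end; the paper's construction is essentially the ``$\Delta\tau \to$ finite'' version with no error at all.

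Two concrete gaps remain in your argument. First, $G_1\cdots G_m$ is a product of more than two Hermitian PSD matrices and is not Hermitian; its eigenvalues need not be real, so ``dominated by the largest eigenvalue'' is not immediate. You should use a palindrome ordering $G_1\cdots G_{m-1} G_m G_{m-1}\cdots G_1$, which is PSD by conjugation, before taking powers. Second, and more importantly, your tensor-product amplification for the near-unity error case is incomplete: you bound $(1+\epsilon)^{1/k}$ but never examine $(1-\epsilon)^{1/k}$. If the allowed error for a tensor network on $N$ vertices is $\epsilon \leq 1 - e^{-N}$, then calling $\mathcal A$ on $T^{\otimes k}$ (which has $kn$ vertices) only guarantees $1-\epsilon \geq e^{-kn}$, so $(1-\epsilon)^{1/k} \geq e^{-n}$, which is exponentially far from $1$. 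The $k$-th root does not repair the lower side when the permitted error degrades with the blown-up instance size. The paper sidesteps this entirely by making the yes/no gap itself exponential in the tensor-network size (through the exponent $k$ in $\mathrm{tr}(M_x^k)$), rather than trying to sharpen a crude approximation after the fact. You would need a similar built-in exponential gap in $Z_N$ — which your construction does have, via $\beta = N\Delta\tau$ — so a direct argument along the paper's lines would serve you better than the post-hoc amplification.
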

Here, $\StoqMA$ is the complexity class whose canonical complete problem is to decide the ground energy for stoquastic Hamiltonians~\cite{bravyi2006merlin}.

In addition to multiplicative  approximation, 
 we also investigate
the impact of sign in the hardness of tensor network contraction w.r.t.\ \textit{certain additive} error. In particular, previously  Arad and Landau~\cite{arad2010quantum} showed that approximating the contraction value w.r.t.\ the matrix $2$-norm additive error is 
equivalent to quantum computation, that is $\BQP$-$\Complete$.  
In contrast, we prove that if the tensor network is \textit{positive}, where all entries are non-negative, then  approximating the contraction value w.r.t matrix $1$-norm additive error is 
equivalent to classical computation, that is $\BPP$-$\Complete$. 

\begin{theorem}[Informal version of Theorem~\ref{thm:BPP}]\label{intro:BPP}
Given a positive tensor network $T\coloneqq T(G,M)$ on a constant-degree graph $G$.   Given an arbitrary order of the vertex $\{v\}_v$, one can view each tensor $M^{[v]}$ as a matrix $O^{[v]}$ by specifying the in-edges and out-edges. 
 It is $\BPP$-$\Complete$ to 
 estimate $\chi(T)$ with additive error $\epsilon \Delta_1 $, for $\Delta_1\coloneqq  \prod_v \|O^{[v]}\|_1$ and $\epsilon=1/poly(n).$
	\end{theorem}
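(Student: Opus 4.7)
The plan is to establish containment in $\BPP$ and $\BPP$-$\hardness$ separately, both via a Markov-chain-style view of positive tensor networks. For the containment direction, I will design a forward-sampling algorithm along the DAG induced by the vertex ordering. Orient each edge from its earlier endpoint to its later endpoint, so each vertex $v_i$ has a well-defined in-edge set $\mathrm{in}(v_i)$ and out-edge set $\mathrm{out}(v_i)$, with $O^{[v_i]}$ reshaped as a matrix whose columns are indexed by in-labels. Processing the vertices in order, when I reach $v_i$ the in-edge label $c_{\mathrm{in}(v_i)}$ has already been fixed. I compute the column sum $s_i \coloneqq \sum_{c_{\mathrm{out}}} O^{[v_i]}_{c_{\mathrm{out}},\, c_{\mathrm{in}(v_i)}}$, which by non-negativity satisfies $0 \leq s_i \leq \|O^{[v_i]}\|_1$, sample $c_{\mathrm{out}(v_i)}$ from the conditional distribution $O^{[v_i]}_{\cdot,\, c_{\mathrm{in}(v_i)}}/s_i$, and accumulate the estimator $Y \coloneqq \prod_i s_i$ (setting $Y = 0$ if any $s_i = 0$). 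A short telescoping argument gives $\mathbb{E}[Y] = \chi(T)$, while $0 \leq Y \leq \Delta_1$ by construction, so Hoeffding yields additive error $\epsilon \Delta_1$ using $O(1/\epsilon^2) = \mathrm{poly}(n)$ i.i.d.\ trials. Each trial runs in polynomial time since the constant degree of $G$ together with polynomial-size tensors ensure every conditional distribution is supported on at most $d^{O(1)} = \mathrm{poly}(n)$ outcomes.

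For $\BPP$-$\hardness$, I will reduce from estimating the acceptance probability of a $\BPP$ machine. Any $\BPP$ computation can be rewritten as a polynomial-size reversible classical circuit over a universal gate set such as Toffoli plus NOT acting on the input bits together with polynomially many uniform random bits; after amplification the acceptance probability $p_{\mathrm{acc}}$ lies in $[0,1/3] \cup [2/3,1]$, so any $1/\mathrm{poly}(n)$ additive approximation decides the underlying language. Each gate becomes a $0/1$ permutation matrix of matrix $1$-norm exactly one, the initial uniform distribution over random bits becomes a probability vector of $1$-norm one, and the accept-indicator becomes a $0/1$ row vector of $1$-norm one. The resulting object is a positive tensor network on a constant-degree graph with $\Delta_1 = 1$ and $\chi(T) = p_{\mathrm{acc}}$, completing the reduction.

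I expect the main technical point to be the correctness and boundedness of the forward-sampling estimator: the telescoping identity $\mathbb{E}[Y] = \chi(T)$ requires each column-normalization $1/s_i$ introduced at sampling step $i$ to be cancelled exactly by the factor $s_i$ kept in $Y$, and the deterministic bound $Y \leq \Delta_1$ --- which is what permits Hoeffding at the stated additive scale with only polynomially many samples --- follows from reading $\|O^{[v]}\|_1$ as the maximum column sum of non-negative entries. Handling sources (no in-edges) and sinks (no out-edges) cleanly, so that their contributions also stay bounded by the corresponding matrix $1$-norm, is the remaining bookkeeping item; the hardness direction is then essentially a standard circuit-to-tensor-network compilation.
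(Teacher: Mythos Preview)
Your proposal is correct. Both the containment and hardness arguments go through as you sketch them: the telescoping identity indeed gives $\mathbb{E}[Y]=\chi(T)$, the bound $0\le Y\le\Delta_1$ follows because each column sum $s_i$ is at most the maximum column sum $\|O^{[v_i]}\|_1$, and Hoeffding then yields the stated error with $O(1/\epsilon^2)$ samples. The hardness reduction via reversible gates (all of matrix $1$-norm one) and uniform random-bit sources likewise gives $\Delta_1=1$ and $\chi(T)=p_{\mathrm{acc}}$.

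The paper's proof follows the same random-walk idea but packages it differently in order to mirror Arad and Landau's $\BQP$ algorithm: rather than keeping the column sums $s_i$ as real-valued weights, it embeds each normalized matrix $M^{[v_i]}/\|M^{[v_i]}\|_1$ into a genuine stochastic matrix by appending one ancilla bit per step, runs the resulting Markov chain, and uses the Bernoulli estimator ``all ancillas are zero'' (then applies Chebyshev). This makes the structural analogy ``unitary embedding $\leftrightarrow$ stochastic embedding'' explicit. Your direct importance-sampling estimator is a bit cleaner and has no larger variance (since $Y$ is not forced to the extremes $\{0,\Delta_1\}$), at the cost of obscuring that analogy. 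For the hardness direction the paper uses the $Q\to\mathrm{CNOT}\to Q^{-1}$ trick to write $p_0$ as an overlap, whereas your straight ``circuit then project on accept, trace out the rest'' construction achieves the same $\Delta_1=1$ with slightly less machinery.
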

Technically
 \cite{arad2010quantum} simulates general matrix multiplication by quantum circuits. In  Theorem \ref{intro:BPP} we simulate non-negative matrix multiplication by random walks.

\textbf{Technical highlights.}
Our main technical contribution is to show that a small mean value dramatically decreases the complexity of approximate contraction. 
Our result
significantly extends the regime in which efficient
approximate contraction algorithms for  tensor networks are known.
This is formalized and proved in  Theorem \ref{intro:random}. 
The algorithm in Theorem \ref{intro:random} differs from  commonly used numerical algorithms for   tensor network contraction, which are based on the truncation of  singular value decomposition and whose performance is determined by
entanglement properties~\cite{hastings2007area,arad2017rigorous}. 
Instead, for Theorem~\ref{intro:random} we use Barvinok's method from approximate counting. 
This method has previously been used for approximating the permanent, the hafnian~\cite{barvinok2016approximating,barvinok2016computing} and partition functions~\cite{barvinok2014computing,patel2017deterministic}.

At a high level, Barvinok's method interprets the contraction value $\chi(T)$ as a polynomial $G(z)$  where $G(1)=\chi(T)$, and uses  Taylor expansion of $\ln G(z)$ at $z=0$ to get an additive error approximation of $\ln G(1)$, thus  an
multiplicative approximation of $\chi(T)$. The key technical part of applying Barvinok's method to different tasks is proving the corresponding $G(z)$ is  root-free in the disk centered at $0$ with radius slightly larger than $1$, which ensures that $\ln G(z)$ is analytic in this disk. Denote this disk as $\cB$.
Previously Patel and Regts~\cite{patel2017deterministic} had applied Barvinok's method to symmetric tensor networks where all the entries are close to $1$ within error $0.35/5=0.07$, by proving that $G(z)$ is root-free in  $\cB$. Our setting allows entries to  have significant fluctuations, thus the root-free proof in \cite{patel2017deterministic} does not apply. We circumvent this problem using the following two ideas:
\begin{itemize}
	\item \textbf{Root-free strip inspired from  approximating random permanent}. 
	Instead of applying Barvinok's method directly and  proving $G(z)$ is  root-free in the disk $\cB$,  we apply a variant of Barvinok's method used for approximating random permanents by Eldar and Mehraban~\cite{eldar2018approximating}. 
	There, the idea is to use Jensen's formula to find a root-free \textit{strip} connecting $0$ and $1$. The advantage of this variant is that it allows for a constant number of zeros in the unit disk $\cB$ as long as there is a root-free path of some width connecting $0$ and $1$. We notice that this method from approximating permanent can also be applied to random tensor networks. In particular, using Jensen’s formula~\cite{eldar2018approximating},
 the number of roots in $\cB$ can be bounded by estimating the second moment $E |h(z)|^2$, where $h(z)$ is
 a rescaled version of $G(z)$, and $E$ denotes the expectation value over the randomness of the tensor network.	
 Besides,
compared to \cite{eldar2018approximating}, in our setting we use a different and much simpler method to find the root-free strip.
	\item \textbf{Mapping random instance to statistical mechanical model.} Since we are working on random tensor networks, the technique used by Eldar and Mehraban~\cite{eldar2018approximating} to bound $E |h(z)|^2$ for random permanents fails entirely. 
	To bound $E |h(z)|^2$  for random tensor networks, we adapt a technique of mapping random instances to a classical statistical mechanical model (statmech model). 
	This technique has been used in the physics literature to study phase transitions in random tensor networks~\cite{yang2022entanglement,levy2021entanglement,hayden2016holographic} and random circuits~\cite{bao2020theory,bao2021finite}. 

Although
in general such mapping and the properties of the statmech model like its partition function are hard to analyze, and heuristic approximations are needed in many related literature, we notice that in our application the statmech model is simple enough to obtain a rigorous result.  
In particular, we show that $E |h(z)|^2$ is proportional to the partition function of a 2D Ising model with magnetic field parameterized by $z$. Then we further use the
 finite-size variant of the Onsager solution of the 2D Ising model~\cite{kaufman1949crystal,majumdar1966analytic} 
to get a decent estimate of $E |h(z)|^2$ for relevant ranges of $z$, allowing the Barvinok method to be applied.

\end{itemize}

\subsection{Conclusions and open problems}\label{sec:open}

We  investigate how the contraction complexity of tensor networks depends on the sign structure of the tensor entries. 
For random tensor networks in 2D, we show that there is a quasi-polynomial time approximation algorithm if the entries are drawn with a small nonzero mean and
intermediate bond dimension. 
At the same time, exactly computing the contraction value in this setting remains $\#\PT$-$\hard$. Our work thus provides rigorous evidence for 
the  observations~\cite{gray2022hyper,chen2024sign} that shifting the mean by a small amount away from zero dramatically decreases the contraction complexity. 
Compared to \cite{patel2017deterministic} which similarly uses Barvinok method but 
requires all entries to be close to $1$, our setting  allows significant fluctuations in the entries and  greatly extends the known region where (quasi-)polynomial time average-case contraction algorithms exist. 
While it is expected that tensor network contraction becomes easier when all entries are positive, 
our result suggests that even for \textit{slightly} positive tensor networks, one can still utilize the sign structure to obtain a (quasi-)efficient algorithm. 
Moreover,  
\cite{chen2024sign} 
observed  that the standard entanglement-based contraction algorithm starts working at $\mu \gtrsim 1/d$. 
We show that a \textit{completely different} rigorous Barvinok-based algorithm also starts working at $\mu \gtrsim 1/d$. 
This might indicate that there is a genuine phase transition in the complexity of tensor-network contraction happening here.

Indeed, we also assess the worst-case complexity of approximating fully positive tensor networks. 
Specifically, we prove that approximating the contraction value  of positive tensor networks multiplicative error close to unity is \StoqMA-hard.
But when requiring only an inverse polynomial additive error in matrix $1$-norm there exists an efficient classical algorithm.

Our work initiates the rigorous study of how the computational difficulty of contracting tensor networks depends on the sign structure of the tensor entries. 
If one views the hardness of contraction as a function of mean value and bond dimension, while we identify a new tractable region, there are many open questions left.
\begin{itemize}
    \item First, while our approximation algorithm based on Barvinok's method works for typical instances, it remains an open question to what extent a positive bias can ease practical  tensor network contraction.
It would therefore be interesting to understand whether our algorithm or variations of it can aid in practically interesting cases. 
\item Moreover, our current proof works for intermediate bond dimension but  not  for constant bond dimension.
Potentially, techniques like cluster expansion~\cite{mann2021efficient,helmuth2019algorithmic}  may be used  to design new contraction algorithm for constant bond dimension,  proving a correspondent of  Theorem \ref{intro:BPP} for that setting. 
It might be worth mentioning that a direct application of cluster expansion does not work, where one can prove the expansion series is not absolutely convergent. 
More refined techniques are thus required. 
\item 
Finally, although  current numerical algorithms have poor performance for zero-mean tensor network contraction, there is no known rigorous complexity result to establish the hardness of approximate contraction.
\end{itemize}

In the context of approximating fully positive tensor networks, it would be interesting to see whether there exists an efficient classical algorithm that can achieve the same ($2$-norm) precision as a quantum computer for positive tensor networks, or whether there is a room for quantum advantage even for positive tensor networks.\footnote{We acknowledge Zeph Landau for raising this question.}

\subsection{Structure of the manuscript}

The structure of this manuscript is as follows. In Section \ref{sec:Notation} we define notations and  tensor networks. In Section \ref{sec:barvinok} we review Barvinok's method and its variant.  In section 
\ref{sec:alg} we adapt Barvinok's method to tensor network contraction. 
In Section \ref{sec:TN} we give a quasi-polynomial time algorithm for approximating random 2D tensor networks with small mean and intermediate bond dimension. In Section \ref{sec:positive} we prove the results concerning approximating positive tensor networks.

\section{Notation and tensor networks}\label{sec:Notation}
In this section, we introduce necessary notations and definitions for tensor networks.

\noindent\textbf{Notation.}
We use $[k]$ for $\{0,1,\ldots ,k-1\}$. 
We use $\overline{z}$ to denote its complex conjugate.  For  $v\in \bC$ and $\epsilon\in \bR$, we say $\hat{v}$ approximates $v$ with $\epsilon$-multiplicative error if $|v-\hat{v}|\leq \epsilon |v|$.   For $x\in\{\pm 1\}^n$, we use $|x|$ to denote the number of $-1$ in $x$.  We use $\delta_{ij}$ for the delta function, where $\delta_{ij}=1$ if $i=j$ and equals $0$ otherwise.

For a matrix $A\in \bC^{s\times t}$, the matrix $p$-norm is defined as 
\begin{align}
    \|A\|_p \coloneqq \sup_{x\neq 0, x\in \bR^t}\frac{\|Ax\|_p}{\|x\|_p}
\end{align}
The $2$-norm $\|A\|_2$ is known as the spectral norm. The  $1$-norm equals to the maximum of the absolute column sum, that is
    $$\|A\|_1 = \max_{1\leq j\leq t} \sum_{i=1}^s |A_{ij}|.$$

For $\mu,\sigma\in \bR$,
we use  $X\sim \cN_{\bR}(\mu,\sigma^2)$ to denote that the random variable $X$ is sampled from  the Gaussian distribution with mean $\mu$ and standard derivation $\sigma$.
For $\mu\in \bC$, we use $\Re(\mu),\Im(\mu)\in \bR$ to denote the real and imaginary part of $\mu$, i.e. $\mu= \Re(\mu) +\Im(\mu) i$. We use $X\sim \cN_{\bC}\left(\mu,\sigma^2\right)$ if 
$$\Re\left(X \right)\sim \cN_{\bR}\left(\Re(\mu),\frac{\sigma}{2}^2\right), \Im(X)\sim \cN_{\bR}\left(\Im(\mu),\frac{\sigma}{2}^2\right).$$

\vspace{1em}

\noindent\textbf{Tensors and tensor networks.}     A \textit{tensor} $N$ of \textit{rank} $k$ and bond dimension $d$ is an array of $d^k$ complex numbers which is indexed by $N_{i_1,i_2,\ldots ,i_k}$, where $i_s$ takes values from $[d]$ for $1\leq s \leq k$. We call the complex numbers in the array the \textit{entries} of the tensor $N$. We use $\overline{N}$ to denote the tensor obtained by complex conjugating every entry of $N$. 
For two tensors $N$ and $M$ with the same rank and bond dimension, the addition $N + M$ is a new tensor obtained by addition of the two arrays. 
For convenience, in the rest of the paper we assume that all the indices have the same dimension $d$.\footnote{This is not a restriction, since we can just take $d$ to be the maximum  dimension of all indices in the tensor network, and introduce dummy dimensions elsewhere.} We will always assume $d=O(poly(n))$.

A \textit{tensor network} $T\coloneqq T(G,M)$  is described by an $n$-vertex graph $G=(V,E)$ and a set of tensors on vertices, denoted as $M=\{M^{[v]}\}_v$.  More specifically,
on each vertex $v$ of degree $k_v$ there is a tensor $M^{[v]}$ of rank $k_v$, where the indices $i_1,\ldots, i_{k_v}$ correspond to $k_v$ edges.   
One can interpret $[d]$ as $d$ different colors, and $i_s\in[d]$ represents that we label the corresponding edge with color $i_s$.  
 Denote this  edge labeling as $c: E\rightarrow [d]^{|E|}$,
 we write 
$M^{[v]}_c  \coloneqq  M^{[v]}_{i_1,\ldots,i_{k_v}}.$   
With an arbitrary ordering of edges, we can conceive of the labeling $c$ as a vector $c\in [d]^{|E|}$.
The \textit{contraction value of tensor network}  is then defined to be
\begin{align}\label{eq:TN}
    \chi(T)  \coloneqq \sum_{c\in [d]^{|E|} }  \prod_{v\in V} M^{[v]}_c.
\end{align}

\noindent\textbf{Product and contraction.} Besides Eq.~(\ref{eq:TN}),
another equivalent way of defining the contraction value of tensor network $\chi(T)$ is via a graphical representation, which is more intuitive and will be used in the proofs. 
 As in Figure \ref{fig:tensor3} (a), for a tensor of rank $k$,  we represent it as a vertex with $k$ edges. We term such edges which  connect to only one vertex \textit{free edges.}
\begin{figure}
    \centering
\includegraphics[width=0.6\textwidth]{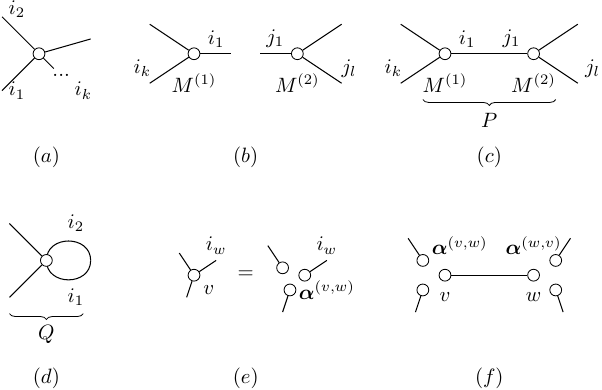}
\caption{Tensor and operations on tensors.  (a) A rank-$k$ tensor. (b) The product of a rank-$k$ and a rank-$l$ tensor. (c) Contracting two tensors by identifying edges $i_1$ and $j_1$. 
(d) Contracting two free edges in the same tensor. (e) A special tensor which can be  factorized into a product of rank-$1$ tensors. (f) If all tensors have a factorized structure, then the contraction value of tensor network can be computed by contracting the rank-$1$ tensors.}
\label{fig:tensor3}
\end{figure}

With this graphical representation, we introduce  two operations on tensors.  
Consider a  tensor $M^{(1)}$ of rank $k$, with free edges indexed by  $i_1,\ldots ,i_k$, and another tensor $M^{(2)}$ of rank $l$, with free edges indexed by $j_1,\ldots ,j_l$.  We use Figure
\ref{fig:tensor3} (b) to represent the \textit{product}  of $M^{(1)},M^{(2)}$, that is  a new tensor $ M^{(1)}\otimes M^{(2)}$  of rank $k+l$  and with free edges indexed by $i_1,\ldots ,i_k;j_1,\ldots ,j_l$,  
where 
\begin{align}
   \left( M^{(1)}\otimes M^{(2)}\right)_{i_1,\ldots ,i_k, j_1,\ldots ,j_l} \coloneqq  M^{(1)}_{i_1,\ldots ,i_k}  M^{(2)}_{ j_1,\ldots ,j_l}.
\end{align} 
The product operation can be generalized to multiple tensors recursively, 
 \begin{align}
     M^{(1)}\otimes M^{(2)}\otimes M^{(3)}\otimes M^{(4)} \ldots  \coloneqq   \left( \left( \left( M^{(1)}\otimes M^{(2)} \right) \otimes M^{(3)}\right)\otimes M^{(4)}\right) \ldots 
 \end{align}
 One can check that the order of this recursion does not change the final tensor.
 
Another operation which defines a new tensor is \textit{contraction}, that is, connecting different tensors by identifying a free edge of one tensor with a free edge of another tensor and summing over that index.
 Starting from the two tensors $M^{(1)}$ and $M^{(2)}$, contracting the indices $i_1$ and $j_1$ results in a new tensor $P$ of rank $k+l-2$,  with free edges indexed by $i_2,\ldots ,i_k,j_2,\ldots ,j_l$, where
\begin{align}
    P_{i_2,\ldots ,i_k,j_2,\ldots ,j_l} = \sum_{f\in [d]} M^{(1)}_{f,i_2,\ldots ,i_k} M^{(2)}_{f,j_2,\ldots ,j_l}.
\end{align}
Graphically, this operation is represented by joining the two contracted edges, see Figure \ref{fig:tensor3} (c).

One can also contract two free edges in the  same tensor. 
Consider the contraction of the indices $i_1,i_2$ of $M^{(1)}$.
Figure \ref{fig:tensor3} (d) represents a new tensor $Q$ of rank $k-2$ and with free edges $i_3,\ldots ,i_k$ where 
\begin{align}
    Q_{i_3,\ldots ,i_k} =\sum_{s\in [d]} M^{(1)}_{s,s,i_3,\ldots ,i_k}.\label{eq:trace}
\end{align}
The contraction operations can be generalized to contracting multiple pairs of edges by contracting the pairs one by one. 
Note that the order of contraction does not change the final tensor.

One can check that given a tensor network $T=T(G,M)$, the contraction value of tensor network defined by Eq.~\eqref{eq:TN} is equal to the value obtained when contracting $\bigotimes_v M^{[v]}$ by identifying the free edges according to the edges of $G$.

 For any vertex $v$, use $N(v)$ for the vertices  adjacent to $v$ in $G$.
\begin{example}
\label{ex:allone product}
Here we give an example of how the graphical representation simplifies the computation of the contraction value. Consider a case in which each $M^{[v]}$ has  a factorized structure, that is,  there exist vectors $\bm{\alpha}^{(v,w)}\in \bC^d$ for $w\in N(v)$ such that 
 $$M^{[v]}=\bigotimes_{w\in N(v)} \bm{\alpha}^{(v,w)},$$ equivalently the entry $$M^{[v]}_{\ldots ,i_w,\ldots }= \prod_{w\in N(v)}\bm{\alpha}^{(v,w)}_{i_w}.$$ Then $M^{[v]}$ can be represented by a product of $|N(v)|$ tensors as shown in Figure \ref{fig:tensor3} (e). As a consequence, one can check that in this special case computing $\chi(T)$ is easy:
 as in Figure \ref{fig:tensor3} (f), 
  one can write $\chi(T)$ in a factorized way, where each edge $(v,w)$ contributes a factor $\langle \bm{\alpha}^{(v,w)},\bm{\alpha}^{(w,v)}\rangle\coloneqq \sum_{f\in [d]} \bm{\alpha}^{(v,w)}_f\bm{\alpha}^{(w,v)}_f$ as follows:
 \begin{align}
	\chi(T) &= \sum_{\text{all edge labeling $c$}}	 \prod_v M^{[v]}_c\\
	&= \prod_{\text{edges $(v,w)$}}\langle \bm{\alpha}^{(v,w)},\bm{\alpha}^{(w,v)}\rangle.\label{eq:rank1}
\end{align}
\end{example}
\vspace{1em}

\noindent\textbf{2D tensor network.} We call a tensor network  $T=T(G,M)$ a \textit{2D tensor network} if 
 the graph $G$ is a 2D  lattice. 
We assume the lattice has  size $L_1\times L_2$ with $n=L_1\times L_2$,  and satisfies periodic boundary conditions, that is  can be mapped onto a torus.   The periodic boundary condition is mainly to ease the analysis. 
In particular, every vertex has degree $4$. For simplicity, we assume that $L_2$ is even.

\begin{itemize}
    \item 
 For $\mu>0$, we define a \textit{2D $(\mu,n,d)$-Gaussian tensor network} $T(G,M)$ as
   an $n$-vertex 2D tensor network with  bond dimension $d$, where the entries of every tensor $M^{[v]}$ are iid.\ sampled from the complex Gaussian distribution $\cN_\bC(\mu,1)$, i.e.\ 
   \begin{align}
       (M^{[v]})_{i_1,i_2,i_3,i_4}  \stackrel{i.i.d.}{\sim}\cN_{\bC}(\mu,1).
   \end{align}
 \item For technical reasons, for  $z\in \bC$ we also  define the
\textit{2D $(z,n,d)$-shifted-Gaussian tensor network} $T(G,M)$, which   is an $n$-vertex 2D tensor network with bond dimension $d$: For every vertex $v$,  Let $(A^{[v]})_{i_1,i_2,i_3,i_4}\stackrel{i.i.d.}{\sim}\cN_{\bC}(0,1)$,  
the entries of $M^{[v]}$ are defined to be
\begin{align}
    &(M^{[v]})_{i_1,i_2,i_3,i_4}\coloneqq  1+z \cdot (A^{[v]})_{i_1,i_2,i_3,i_4}.
\end{align}
We write the tensor $M^{[v]}$ as 
\begin{align}
\label{eq:M-is-J-plus-zA}
M^{[v]} = J^{[v]} + z A^{[v]}	
\end{align}
where $J^{[v]}$ is a tensor whose entries are all $1$. Note that $J^{[v]}$ has a factorized structure 
$$J^{[v]}= [1,\ldots ,1]^{\otimes 4}.$$
 We abbreviate the  2D $(z,n,d)$-shifted-Gaussian  tensor network $T(G,M)$ as $T_A(z)$ where $A\coloneqq \{A^{[v]}\}_v$.
\end{itemize}

\section{Barvinok's method and its variant}\label{sec:barvinok}

In this section we  review Barvinok's method, which was first developed by Barvinok~\cite{barvinok2016approximating,barvinok2016computing}, and is a general method for approximate counting. 
It has been applied to  approximating  permanents~\cite{eldar2018approximating}, hafnians~\cite{barvinok2016approximating,barvinok2016computing} and partition functions~\cite{barvinok2014computing,patel2017deterministic}. In particular, Barvinok's method was applied to contracting symmetric tensor networks where all entries are very close to $1$~\cite{patel2017deterministic}. Our setting allows the entries  having significant fluctuations where the standard Barvinok's method fails. Instead our algorithm  builds  from a special variant of Barvinok's method  used in approximating random permanents~\cite{eldar2018approximating}, which we summarize below. All Lemmas and Theorems quoted here are proven in \cite{barvinok2016approximating,barvinok2016computing,eldar2018approximating}.

Roughly speaking, the idea of Barvinok's method is to approximate an analytic  function via its Taylor series around $0$. The performance of this approximation depends on the location of the roots of the analytic  function.

 Consider a polynomial  $G(z)$ of degree $n$, where $G(z)\neq 0$ for $z$ on a simply connected open area containing $0$ in the complex plain. We choose the branch of the complex logarithm, denoted as $\LN$, such that $\LN G(0)$ is real. Define $F(z):=\LN G(z)$. In our application, $G(1)$ will encode the  contraction value of tensor network. An additive approximation of $F(1)$ will give a multiplicative approximation to $G(1)$.
 For $r,w>0$,
we  use $\cB(r) \subset \mathbb C$ to denote the 
the disk of radius~$r$ centered at~$0$,  and  use $\cT(re^{i\theta},w)$ to denote the strip  of width $w$ around the line between $0$ and $re^{i\theta}$, that is
\begin{align}
    &\cB(r)\coloneqq \{z\in \bC \,\,\big|\,\, |z|\leq r\},\nonumber\\
    &\cT(re^{i\theta},w)\coloneqq \{z\in \bC \,\,\big|\,\,  -w \leq \Re(ze^{-i\theta})\leq r+w, \quad |\Im(z e^{-i\theta})|\leq w 
    \}.   \nonumber 
\end{align}
 The following lemma quantifies the approximation error incurred by approximating $F(z)$  using a root-free disk of $G(z)$.
 \begin{lemma}[Approximation using a root-free disk, see the proof of Lemma 1.2 in \cite{barvinok2016computing}]\label{lem:rootfree} Let $G(z)$ be a polynomial of degree $n$ and suppose $G(z)\neq 0$ for all $|z|\leq \vbeta$ where $\vbeta>1$. Let $F(z)\coloneqq  \LN G(z).$ Then $F(z)$ is analytic for $|z|\leq 1$. Moreover, consider a degree $m$ Taylor approximation of $F(z)$, 
 \begin{align}
 	P_m(z) \coloneqq  F(0) +\sum_{k=1}^m \frac{\partial^k F(z)}{\partial z^k}  \bigg|_{z=0} \frac{z^k}{k!}
 \end{align}
 Then, for all $|z|\leq 1$,
 \begin{align}
 	|F(z)-P_m(z)| 
 & \leq 	\frac{n}{(m+1)\vbeta^m(\vbeta-1)}.
 \end{align}
 \end{lemma}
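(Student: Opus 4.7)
The plan is to factor $G$ explicitly using its roots and then bound the Taylor tail of $F(z)=\LN G(z)$ term by term. Since $G$ is a degree-$n$ polynomial and, by hypothesis, $G(0)\neq 0$, the fundamental theorem of algebra gives $G(z)=G(0)\prod_{j=1}^{n}(1-z/z_j)$ for the roots $z_1,\ldots,z_n$ of $G$ (counted with multiplicity). The hypothesis $G\neq 0$ on $\cB(\vbeta)$ places every $z_j$ outside $\cB(\vbeta)$, so $|z_j|>\vbeta>1$. Cutting the plane along the rays $\{tz_j:t\ge 1\}$ (none of which meet $\cB(\vbeta)$) allows us to pick a single-valued branch of $\LN(1-z/z_j)$ on a simply connected neighbourhood of $\cB(1)$ that vanishes at $z=0$. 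Together with the normalization $\LN G(0)\in\bR$, this gives the analytic representation
\begin{align}
F(z) \;=\; \LN G(0) + \sum_{j=1}^n \LN\!\left(1-\frac{z}{z_j}\right),
\end{align}
which already settles analyticity of $F$ on $\cB(1)$.

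Next I would substitute the standard series $\LN(1-w)=-\sum_{k\ge 1} w^k/k$, valid for $|w|<1$. Since $|z/z_j|\le 1/\vbeta<1$ on $\cB(1)$, the resulting double series converges absolutely and uniformly, so one can reorder sums and read off the Taylor coefficients:
\begin{align}
F(z) \;=\; \LN G(0)-\sum_{k=1}^\infty \frac{z^k}{k}\sum_{j=1}^n \frac{1}{z_j^k}, \qquad P_m(z) \;=\; \LN G(0)-\sum_{k=1}^m \frac{z^k}{k}\sum_{j=1}^n \frac{1}{z_j^k}.
\end{align}
The remainder is therefore the Taylor tail $F(z)-P_m(z)=-\sum_{k>m}\frac{z^k}{k}\sum_{j=1}^n z_j^{-k}$, and for $|z|\le 1$ the triangle inequality together with $|z_j|\ge \vbeta$ gives
\begin{align}
|F(z)-P_m(z)| \;\le\; n\sum_{k=m+1}^\infty \frac{1}{k\,\vbeta^k} \;\le\; \frac{n}{m+1}\sum_{k=m+1}^\infty \frac{1}{\vbeta^k} \;=\; \frac{n}{(m+1)\vbeta^m(\vbeta-1)},
\end{align}
using $1/k\le 1/(m+1)$ in the tail and summing the resulting geometric series.

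There is no real obstacle here; the proof is a routine complex-analytic computation. The only step that warrants a moment's care is selecting a single-valued logarithm branch that is compatible with all $n$ factors at once on $\cB(1)$, but this is arranged automatically because all the roots lie strictly outside $\cB(\vbeta)\supset\cB(1)$, so the branch cuts can be pushed safely away from the unit disk. Everything else is bookkeeping of Taylor tails and a geometric-series estimate.
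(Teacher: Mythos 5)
Your proof is correct and is essentially the same argument as the proof of Lemma~1.2 in Barvinok's paper that this lemma cites: factor $G(z)=G(0)\prod_j(1-z/z_j)$, expand each $\LN(1-z/z_j)$ in a power series valid since $|z_j|>\vbeta$, and bound the Taylor tail of the resulting double series by $n\sum_{k>m}\frac{1}{k\vbeta^k}\le\frac{n}{(m+1)\vbeta^m(\vbeta-1)}$. The paper itself does not reproduce this proof but simply refers to Barvinok, so there is nothing to compare beyond confirming your argument follows that reference faithfully.
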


Recall that additive approximation of $F(z)$ implies  multiplicative approximation of $G(z)$.
  To translate Lemma \ref{lem:rootfree} into 
an efficient algorithm, one further needs to efficiently compute the first few derivatives of $F$. 
Barvinok shows that the derivatives of $F$ can be efficiently computed using the derivatives of $G$.
 \begin{lemma}[\cite{barvinok2016approximating}]\label{lem:fg} Let $G(z)$ be a polynomial of degree $n$  and $G(z)\neq 0$. If one can compute the first $l$ derivatives of $G(z)$ at $z=0$  in time $t(n)$,  then one can compute the first $l$ derivatives of $F(z) \coloneqq \LN  G(z)$ at $z=0$ in time $O(l^2 t(n))$.
 \end{lemma}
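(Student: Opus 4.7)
The plan is to exploit the differential identity $F'(z) = G'(z)/G(z)$, equivalently $G(z) F'(z) = G'(z)$, to derive a recursion among the Taylor coefficients of $F$ at $z=0$ in terms of those of $G$. Write $g_k \coloneqq G^{(k)}(0)/k!$ and $f_k \coloneqq F^{(k)}(0)/k!$, so that the first $l$ derivatives of $G$ (respectively $F$) at $z=0$ are computed once we know $g_0,\ldots,g_l$ (respectively $f_1,\ldots,f_l$). Note that the hypothesis $G(z)\neq 0$ on a neighborhood of $0$ ensures $g_0=G(0)\neq 0$, so division by $g_0$ below is well defined.

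Matching coefficients of $z^{m-1}$ in $G(z)F'(z)=G'(z)$ yields
\begin{equation}
m\,g_m \;=\; \sum_{j=1}^{m} j\, f_j\, g_{m-j},
\end{equation}
which rearranges to the recursion
\begin{equation}
f_m \;=\; \frac{1}{g_0}\left(g_m \;-\; \frac{1}{m}\sum_{j=1}^{m-1} j\, f_j\, g_{m-j}\right).
\end{equation}
Hence each $f_m$ is computable from $f_1,\ldots,f_{m-1}$ and $g_0,\ldots,g_m$ using $O(m)$ arithmetic operations.

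With this recursion in hand, the algorithm is straightforward: first spend time $t(n)$ to produce $g_0,\ldots,g_l$ (which is equivalent to the first $l$ derivatives of $G$ at $0$); then unroll the recursion for $m=1,2,\ldots,l$, spending $O(m)$ arithmetic operations at step $m$, for a total of $O(l^2)$ additional arithmetic operations. Converting back from Taylor coefficients to derivatives by multiplying by $m!$ is trivial. The total runtime is $t(n)+O(l^2)$, which is bounded by $O(l^2\, t(n))$ as claimed.

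The only subtlety is the usual one when turning a Taylor-level identity into an algorithmic statement: one has to account for the cost of individual arithmetic operations. I would assume the standard unit-cost arithmetic model, in which case the argument above is complete; if bit complexity is required, the same recursion still works, but one must additionally observe that the sizes of the intermediate numbers $f_m,g_m$ grow only polynomially, which is immediate from the recursion and the $G\neq 0$ assumption. I do not expect any real obstacle here; the lemma is essentially a bookkeeping consequence of the differential equation $F'=G'/G$.
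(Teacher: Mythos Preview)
Your proof is correct and is precisely the standard argument: the paper does not supply its own proof of this lemma but cites \cite{barvinok2016approximating}, and the proof there proceeds exactly as you do, via the identity $G(z)F'(z)=G'(z)$ and the resulting triangular recursion for the Taylor coefficients. Your bound $t(n)+O(l^2)$ is in fact slightly sharper than the stated $O(l^2\,t(n))$, but of course implies it.
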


Lemma \ref{lem:rootfree} implies that the Taylor series at $z=0$ gives a good approximation to $F(z)=\LN G(z)$, as long as $G(z)$ is root-free in a
  disk centered at  $0$ that contains~$z$.
Lemma~\ref{lem:rootfree} can be generalized to the case in which $G(z)$ is allowed to have roots in the disk, but instead there exists a root-free strip from $0$ to $z$. The main idea in this generalization is to construct a new polynomial $\phi(z)$ which embeds the disk into a strip.
Given such $\phi$, we can then approximate $G(\phi(z))$ using the  approximation via a root-free disk, since $G(\phi(z))$ is guaranteed to be root free in a disk of some radius. 
Furthermore, we can still use this approximation to estimate $G(1)$, which will encode our quantity of interest.

\begin{lemma}[Embedding a disk into a strip, Lemma 8.1 in \cite{barvinok2016approximating}]\label{lem:cir2strip}
	For $0 <\rho<1$, define 
	\begin{align}
	&\alpha =\alpha(\rho) 	= 1-e^{-\frac{1}{\rho}}, \quad \vbeta= \vbeta(\rho)=\frac{1-e^{-1-\frac{1}{\rho}}}{1-e^{-\frac{1}{\rho}}}>1,\\
	&K=K(\rho)=\left\lfloor \left(  1+\frac{1}{\rho} \right) e^{1+\frac{1}{\rho}} \right\rfloor \geq 14,\quad \sigma =\sigma(\rho) =\sum_{k=1}^K \frac{\alpha^k}{k} \text{ and }\\
	&\phi(z)=\phi_\rho(z) =\frac{1}{\sigma}\sum_{k=1}^K \frac{(\alpha z)^k}{k}.
	\label{eq:definition phi}
	\end{align}
 Then $\phi(z)$ is a polynomial of degree $K$ such that $\phi(0)=0,\phi(1)=1$, and embeds the disk of radius $\vbeta$ into the strip of width $2 \rho$, i.e., 
 \begin{align}
 	-\rho \leq \Re(\phi(z)) \leq 1+2\rho \text{ and } |\Im(\phi(z))|\leq 2\rho \text{ provided }	|z|\leq\vbeta.
 \end{align}
\end{lemma}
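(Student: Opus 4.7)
The plan is to recognize $\phi(z)$ as the degree-$K$ Taylor truncation of a logarithm and then analyze the logarithm as a conformal map. Specifically, let
\begin{align}
\tilde\phi(z) \coloneqq -\frac{1}{\sigma}\LN(1-\alpha z),
\end{align}
so that $\phi(z)$ is exactly the partial sum of the Taylor series of $\tilde\phi$ at $z=0$. Note that $\phi(0)=\tilde\phi(0)=0$, while $\phi(1)=\frac{1}{\sigma}\sum_{k=1}^K \alpha^k/k = 1$ by the very definition of $\sigma$. The strategy is to prove the strip bounds first for $\tilde\phi$ (where one has an explicit conformal map), and then absorb the truncation error $|\phi(z)-\tilde\phi(z)|$ into the extra slack present in the bounds.

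For $|z|\leq \vbeta$, the definition of $\vbeta$ yields $|\alpha z|\leq \alpha\vbeta = 1-e^{-1-1/\rho}$, so $w\coloneqq 1-\alpha z$ lies in the closed disk of radius $1-e^{-1-1/\rho}$ around $1$. This ensures the key geometric facts $e^{-1-1/\rho}\leq |w|\leq 2-e^{-1-1/\rho}$ and $|\arg w|\leq \arcsin(1-e^{-1-1/\rho})<\pi/2$, so that $\LN w$ is single-valued and bounded. Writing $\tilde\phi(z)=-\frac{1}{\sigma}(\LN|w|+i\arg w)$, I would obtain
\begin{align}
-\tfrac{1}{\sigma}\LN\!\left(2-e^{-1-1/\rho}\right) \leq \Re\tilde\phi(z) \leq \tfrac{1+1/\rho}{\sigma},\qquad |\Im\tilde\phi(z)|\leq \tfrac{1}{\sigma}\arcsin\!\left(1-e^{-1-1/\rho}\right).
\end{align}
Since $\sigma$ is the partial sum toward $-\LN(1-\alpha)=1/\rho$, one has $\sigma\leq 1/\rho$ (and, for the assumed $K$, $\sigma$ is close to $1/\rho$). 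I would use $\sigma\leq 1/\rho$ to turn the upper bound $\frac{1+1/\rho}{\sigma}$ into $\leq 1+\rho$, and use $\sigma$ close to $1/\rho$ together with elementary bounds $\LN(2-e^{-1-1/\rho})\leq \LN 2 < 1$ and $\arcsin(1-e^{-1-1/\rho})< \pi/2$ to obtain $\Re\tilde\phi(z)\geq -\rho\LN 2$ and $|\Im\tilde\phi(z)|<\pi\rho/2$, comfortably inside the target strip of width $2\rho$.

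The remaining step is to bound the tail
\begin{align}
|\phi(z)-\tilde\phi(z)| = \frac{1}{\sigma}\left|\sum_{k=K+1}^\infty \frac{(\alpha z)^k}{k}\right| \leq \frac{(\alpha|z|)^{K+1}}{\sigma(K+1)(1-\alpha|z|)}.
\end{align}
For $|z|\leq \vbeta$ we have $\alpha|z|\leq 1-e^{-1-1/\rho}$, hence $1-\alpha|z|\geq e^{-1-1/\rho}$, and $(\alpha|z|)^{K+1}\leq e^{-(K+1)e^{-1-1/\rho}}$. The choice $K\geq (1+1/\rho)e^{1+1/\rho}$ makes the exponent at least $1+1/\rho$, so the truncation error is at most $\frac{1}{\sigma(K+1)}$, which is negligible compared to $\rho$. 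Adding this to the bounds on $\tilde\phi$ gives the claimed strip $-\rho\leq \Re\phi(z)\leq 1+2\rho$ and $|\Im\phi(z)|\leq 2\rho$.

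The main obstacle I anticipate is not any individual step but the bookkeeping needed to confirm that the quantitative slack between the bounds on $\tilde\phi$ (roughly $\rho\LN 2$, $\rho$, and $\pi\rho/2$) and the targets $\rho$, $2\rho$, $2\rho$ is enough to cover both the discrepancy between $\sigma$ and its infinite-series limit $1/\rho$ and the tail bound above. Careful choices of $\alpha$, $\vbeta$, and $K$ in the statement are evidently tailored so that all these numerical margins close; verifying this is the only delicate part.
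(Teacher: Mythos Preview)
The paper does not prove this lemma; it simply quotes it from Barvinok~\cite{barvinok2016approximating} (the surrounding text says ``All Lemmas and Theorems quoted here are proven in~\cite{barvinok2016approximating,barvinok2016computing,eldar2018approximating}''). Your approach---recognize $\phi$ as the Taylor truncation of $-\frac{1}{\sigma}\LN(1-\alpha z)$, analyze the logarithm as a conformal map on the disk $|\alpha z|\leq 1-e^{-1-1/\rho}$, and then absorb the tail---is exactly Barvinok's argument, so there is nothing to compare.

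One genuine slip: you write ``use $\sigma\leq 1/\rho$ to turn the upper bound $\frac{1+1/\rho}{\sigma}$ into $\leq 1+\rho$.'' That inequality goes the wrong way---if $\sigma\leq 1/\rho$ then $\frac{1+1/\rho}{\sigma}\geq \rho+1$, which is useless. What you actually need is a \emph{lower} bound on $\sigma$, namely that $\sigma$ is close enough to its limit $1/\rho$. Concretely, $1/\rho-\sigma=\sum_{k>K}\alpha^k/k$ is the tail of the log series at $z=1$, and the same geometric estimate you use for the truncation error (with $\alpha$ in place of $\alpha\vbeta$) shows this tail is tiny for the given $K$. With $\sigma\geq 1/\rho - o(1)$ you get $\frac{1+1/\rho}{\sigma}\leq 1+\rho+o(1)\leq 1+2\rho$, as claimed. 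This is precisely the ``bookkeeping'' you flagged as the obstacle; once the inequality is oriented correctly the margins do close.
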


\begin{corollary}[Approximation using a root-free strip]\label{cor:strip} Let $G(z)$ be a polynomial of degree $n$ and suppose there exists a constant $\rho\in (0,1)$ such that $G(z)\neq 0$ for all $z$ in the strip $\cT(1,2\rho)$. Define 
$\vbeta,K,\phi(z)$ as in Lemma \ref{lem:cir2strip} where $\vbeta>1$.
 Let 
 $$F(z)\coloneqq  \LN G( \phi(z)).$$ Then $F(z)$ is analytic for $|z|\leq 1$. Moreover, consider a degree $m$ Taylor approximation of $F(z)$, 
 \begin{align}
 	P_m(z) \coloneqq  F(0) +\sum_{k=1}^m \frac{\partial^k F(z)}{\partial z^k}  |_{z=0} \frac{z^k}{k!} \label{eq:pmz}
 \end{align}
 Then, for all $|z|\leq 1$,
 \begin{align}
 	|F(z)-P_m(z)| 
 & \leq 	\frac{nK}{(m+1)\vbeta^m(\vbeta-1)}.
 \end{align}

 \end{corollary}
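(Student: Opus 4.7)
The plan is to reduce Corollary \ref{cor:strip} to Lemma \ref{lem:rootfree} by applying the latter to the composed polynomial $H(z) \coloneqq G(\phi(z))$, exploiting the fact that $\phi$ embeds the disk $\cB(\vbeta)$ into the strip $\cT(1,2\rho)$.

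First, I would verify that $H$ satisfies the hypotheses of Lemma \ref{lem:rootfree} with radius $\vbeta$. By Lemma \ref{lem:cir2strip}, for every $z$ with $|z| \leq \vbeta$ the image $\phi(z)$ lies inside $\cT(1,2\rho)$. Since the assumption of Corollary \ref{cor:strip} says that $G$ has no zeros on $\cT(1,2\rho)$, it follows that $H(z) = G(\phi(z)) \neq 0$ for all $|z| \leq \vbeta$. Moreover, $H$ is a polynomial in $z$ of degree at most $nK$, since $G$ has degree $n$ and $\phi$ has degree $K$. In particular $H(0) = G(\phi(0)) = G(0)$, and the branch of $\LN$ chosen in Lemma \ref{lem:rootfree} agrees with the branch prescribed in Corollary \ref{cor:strip} at $z=0$, because $\phi(0)=0$ forces $\LN H(0) = \LN G(0)$, which is real by the convention specified earlier.

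Next, I would apply Lemma \ref{lem:rootfree} directly to $H$. This gives that $\LN H(z)$ is analytic in the closed unit disk, and that for every $|z| \leq 1$ its degree-$m$ Taylor polynomial around $0$,
\begin{equation*}
\tilde P_m(z) \coloneqq \LN H(0) + \sum_{k=1}^{m} \frac{\partial^k \LN H(z)}{\partial z^k}\bigg|_{z=0} \frac{z^k}{k!},
\end{equation*}
satisfies
\begin{equation*}
\bigl|\LN H(z) - \tilde P_m(z)\bigr| \leq \frac{nK}{(m+1)\vbeta^m(\vbeta - 1)},
\end{equation*}
using that $\deg H \leq nK$. Since $F(z) = \LN H(z)$ by construction, $\tilde P_m$ is exactly the polynomial $P_m$ defined in Eq.~\eqref{eq:pmz}, which yields the claimed bound.

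The argument is essentially bookkeeping once Lemmas \ref{lem:rootfree} and \ref{lem:cir2strip} are in hand; the only point that requires mild care is the consistency of the logarithm branch along the composition and the identification of the two Taylor polynomials at $z=0$. Both reduce to the identity $\phi(0)=0$ together with $G(0) \neq 0$, which follows from $0 \in \cT(1,2\rho)$ when $\rho \in (0,1)$. Thus I do not anticipate a genuine obstacle; the content of the corollary lies entirely in the disk-to-strip embedding furnished by Lemma \ref{lem:cir2strip}, whose cost is the additional factor of $K = K(\rho)$ appearing in the degree of $H$ and hence in the final error estimate.
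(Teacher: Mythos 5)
Your proposal is correct and follows exactly the same route as the paper: apply Lemma~\ref{lem:rootfree} to the composite polynomial $G(\phi(z))$, using Lemma~\ref{lem:cir2strip} to see that the strip hypothesis on $G$ gives a root-free disk of radius $\vbeta$ for $G\circ\phi$, which has degree $nK$. Your additional remarks about the logarithm branch and identification of the Taylor polynomials are careful bookkeeping the paper leaves implicit, but the argument is the same.
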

\begin{proof}
Recall that $\vbeta>1$.
 By Lemma \ref{lem:cir2strip} and the assumption that $G(z)\neq 0$ for all $z$ in $\cT(1,2\rho)$, we have for any $|z|\leq \vbeta$, $G(\phi(z))\neq 0$. Note that $G(\phi(z))$ is a  polynomial in $z$ of degree $nK$. Then use  Lemma \ref{lem:rootfree} w.r.t $G(\phi(z))$ and $F(z)\coloneqq \LN G(\phi(z))$ we prove the Corollary.
\end{proof}

In the above Lemmas, we have assumed $G(z)$ is a fixed polynomial, and the performance of the Taylor expansion of $F(z)$ depends on the location of roots of $G(z)$. When   $G_A(z)$ are random polynomials indexed by randomness $A$, \cite{eldar2018approximating} illustrates a way of using Jensen’s formula to  estimate the expectation of the number of roots.

For convenience of later usage, in the following we use the notation $h_A(z)$ for polynomials instead of $G_A(z)$. In later applications $h_A(z)$ will be  a rescaled version of $G_A(z)$. 
By Lemma \ref{lem:root} \cite{eldar2018approximating} 
connects the expected number of roots in a disk to the second moment of~$h_A$.

\begin{definition}[Average Sensitivity~\cite{eldar2018approximating}] \label{def:sensitivity}
Let $h_A(z)$ be a random polynomial where $A$ is sampled from some random ensembles and $h_A(0)\neq 0$. For any real number $r>0$, the stability of $h_A(z)$
at point $r$ is defined as 
\begin{align}
	k_h(r)\coloneqq  E_\theta E_A\left[ \frac{|h_A(re^{i\theta})|^2}{|h_A(0)|^2} \right]	
\end{align}
	where $E_\theta[\cdot] =\int_{\theta=0}^{2\pi}[\cdot]\frac{d \theta}{2\pi}$ is the expectation over $\theta$ from a uniform distribution over $[0,2\pi)$, and $E_A$ is the expectation over the randomness of $A$.
\end{definition}

\begin{lemma}[Proposition 8~\cite{eldar2018approximating}]\label{lem:root}
Let $h_A(z)$ be a random polynomial where $A$ is sampled from some random ensemble and $h_A(0)\neq 0$. 
Let $N_A(r)$  be the number of roots of $h_A(z)$ inside  $\cB(r)$, and $0<\nw<1/2$. Then,
	\begin{align}
		E_A \left[N_A(r-r\nw)\right] \leq \frac{1}{2\nw} \ln  k_h(r).	
	\end{align}
\end{lemma}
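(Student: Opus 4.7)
The plan is to combine two classical ingredients that share a name: Jensen's formula from complex analysis, which relates the boundary average of $\LN|h_A|$ on a circle to the positions of the zeros of $h_A$ inside the corresponding disk, and the probabilistic Jensen's inequality, which lets me pull an expectation inside the concave function $\LN$. Since $k_h(r)$ is defined in terms of $|h_A(re^{i\theta})|^2/|h_A(0)|^2$ rather than $|h_A(re^{i\theta})/h_A(0)|$, a factor of $2$ will arise along the way.

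First I would apply Jensen's formula to the random analytic polynomial $h_A(z)$ on $\cB(r)$. Letting $a_1,\ldots, a_{N_A(r)}$ be its zeros in $\cB(r)$ (counted with multiplicity), the formula gives
\begin{align}
\sum_{k=1}^{N_A(r)} \LN \frac{r}{|a_k|} \;=\; \int_0^{2\pi}\LN|h_A(re^{i\theta})|\,\frac{d\theta}{2\pi} \;-\; \LN|h_A(0)|.
\end{align}
Every zero lying in the smaller disk $\cB(r-r\nw)$ satisfies $|a_k|\le r(1-\nw)$ and so contributes at least $-\LN(1-\nw)$ to the sum, while every zero in the annulus $\cB(r)\setminus\cB(r-r\nw)$ contributes a non-negative amount. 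Discarding the annular contributions yields the deterministic inequality
\begin{align}
-\LN(1-\nw)\cdot N_A(r-r\nw) \;\le\; \int_0^{2\pi}\LN|h_A(re^{i\theta})|\,\frac{d\theta}{2\pi} \;-\; \LN|h_A(0)|.
\end{align}

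Next I would take the expectation $E_A$, multiply by $2$, and rewrite the right-hand side as $E_A E_\theta \LN\bigl(|h_A(re^{i\theta})|^2/|h_A(0)|^2\bigr)$; then applying the probabilistic Jensen's inequality for the concave function $\LN$ to move $E_AE_\theta$ inside the logarithm upper bounds it by $\LN k_h(r)$. Combining with the elementary inequality $-\LN(1-\nw)\ge \nw$ valid for $\nw\in(0,1)$ gives $2\nw\cdot E_A[N_A(r-r\nw)] \le \LN k_h(r)$, which is the claimed bound. There is no significant obstacle; the only mild sanity check is that the resulting bound is non-negative, which is automatic because subharmonicity of $\LN|h_A|^2$ together with Jensen's inequality forces $E_\theta\bigl[|h_A(re^{i\theta})|^2/|h_A(0)|^2\bigr]\ge 1$ pointwise in $A$, hence $k_h(r)\ge 1$.
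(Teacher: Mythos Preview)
Your proposal is correct and follows essentially the same argument as the paper's proof: apply Jensen's formula on $\cB(r)$, lower bound the zero sum by $-\LN(1-\nw)\cdot N_A(r-r\nw)$ using $-\LN(1-\nw)\ge\nw$, take $E_A$, and then use concavity of $\LN$ (the probabilistic Jensen inequality) to pass to $\LN k_h(r)$, picking up the factor $\tfrac12$ from the squaring.
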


Lemma \ref{lem:root} bounds the expectation value of the number of roots in $\cB(r)$.  
In later sections, we will apply Lemma \ref{lem:root} to show that  our polynomial of interest has very few roots in the disk. This will allow us to find a root-free strip with high probability. 
For completeness, we provide a proof of Lemma \ref{lem:root} in Appendix \ref{appendix:barvinok}.

\section{Tensor network contraction algorithm from Barvinok's method}\label{sec:alg}

We are now ready to present our algorithm for approximate tensor network contraction. 
The algorithm is based on Barvinok's method and takes the following inputs
\begin{itemize}
    \item A tensor network $T=T(G,M)$, where $G=(V,E)$ is a graph comprising $n = |V|$ vertices and has constant degree $\kappa$.
    \item A precision parameter $\epsilon\in (0,1]$.
\end{itemize}
The goal is to approximate $\chi(T)$ with $\epsilon$-multiplicative error.
In order to achieve this, we will choose the following parameters that will enter the algorithm appropriately.
\begin{itemize}
    \item A set of non-zero complex values $\{\mu_v\}_{v\in G}$.
    We  will choose $\mu_v$ to be the mean value of the entries of the tensor $M^{[v]}$ at vertex $v \in V$.
    \item A complex value $z_{end}\neq 0$.
    \item A real value $0 < \rho <1$. This value will determine the width of the strip $\cT(1,2\rho)$ in the complex plane.
\end{itemize}

The algorithm we describe in this section is well-defined for an arbitrary tensor network. 
 In Section \ref{sec:TN} we will apply this algorithm to random 2D tensor  network whose entries have a  small positive bias and show that it succeeds with high probability.  

\subsection{The polynomial}\label{sec:poly}

To apply Barvinok's method in Corollary \ref{cor:strip}, we  map the contraction value of tensor network to a polynomial as follows. 
For each vertex $v$,
with some abuse of notations, we use $J^{[v]}$ to represent the tensor by substituting all entries in $M^{[v]}$ by $1$. 
We define
\begin{align}
    A^{[v]} \coloneqq  (\mu_v^{-1} M^{[v]}-J^{[v]})\cdot z_{end}^{-1}
\end{align}
In other word,
\begin{align}
 \mu_v^{-1}    M^{[v]} &=  J^{[v]} + z_{end} \cdot A^{[v]}\label{eq:Muv}
\end{align}
Eq.~\eqref{eq:Muv} states that we intepret the normalized version of $M^{[v]}$ as the all-one tensor $J^{[v]}$ interpolated by $A^{[v]}$. We note that we allow $z_{end}$ to be much larger than $1$.

Since the contraction value of $T(G,\{M^{[v]}\}_v)$ equals $\prod_{v} \mu_v$ times the contraction value of the normalized tensor network $T(G,\{\mu_v^{-1}M^{[v]}\}_v)$, without loss of generality, from now on we assume  that the tensor network has been normalized and 
\begin{align}
   M^{[v]} &=  J^{[v]} + z_{end} \cdot A^{[v]} \label{eq:M2A}
\end{align}

If we substitute $z_{end}$ with a variable $z$ in Eq.~\eqref{eq:M2A} for each tensor $M^{[v]}$, we will obtain a family of new tensor networks, denoted by $T_A(z)$.  
The contraction value $\chi(T_A(z))$ is a degree-$n$ polynomial in $z$. Denoting this polynomial as $g_A(z)$, we have
\begin{align}
    &g_A(z) =\chi(T_A(z)), \quad 0\leq z\leq z_{end}\\
&g_A(z_{end})=\chi(T_A(z_{end}))=\chi(T).
\end{align}

Recall that $0<\rho<1$.  Define the polynomial $\phi(z)$ as in Lemma \ref{lem:cir2strip}.
For convenience of applying Barvinok's method, we also  define $G_A(z)$ by rescaling $g(z)$,
\begin{align}
    &G_A(z)\coloneqq  g_A(z\cdot z_{end}), \quad 0\leq z\leq 1.\\
    &F_A(z)\coloneqq \LN G_A(\phi(z)).
\end{align}
 $F_A(z)$ will be analytic in the disk $\cB(1)$ if $G_A(z)$ is root-free in the strip $\cT(1,2\rho)$.

\subsection{Computing the derivatives of $g_A$}
We first note that the first few derivatives of $g_A(z)$ can be computed efficiently. 
\begin{lemma}\label{lem:dg_new}
For any integer $m$,
	 the first $m$ derivatives  $\{g_A^{(k)}(0)\}_{k=0,\ldots ,m}$ can be computed in time $O(m^2d^{\kappa m}n^{m+1})$. 
\end{lemma}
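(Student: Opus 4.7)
The plan is to expand $g_A(z)$ explicitly as a polynomial in $z$ whose coefficients can be read off term by term. Since every entry of $J^{[v]}$ equals $1$, for any edge labeling $c$ we have $(J^{[v]} + z A^{[v]})_c = 1 + z \, A^{[v]}_c$. Expanding the product over vertices as a sum over subsets then gives
\begin{equation}
g_A(z) \;=\; \sum_{c \in [d]^{|E|}} \prod_{v \in V}\!\big(1 + z A^{[v]}_c\big)
\;=\; \sum_{k=0}^{n} a_k \, z^k,
\qquad
a_k \coloneqq \sum_{\substack{S \subseteq V \\ |S|=k}} \chi^A(S),
\end{equation}
where $\chi^A(S) \coloneqq \sum_c \prod_{v \in S} A^{[v]}_c$ and the empty product equals $1$. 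Consequently $g_A^{(k)}(0) = k! \, a_k$, so computing the first $m$ derivatives reduces to evaluating $\chi^A(S)$ for every $S \subseteq V$ with $|S| \leq m$.

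Each individual $\chi^A(S)$ can be computed cheaply by exploiting the rank-$1$ (factorized) structure of the all-ones tensors already illustrated in Example~\ref{ex:allone product}. Since $\prod_{v \in S} A^{[v]}_c$ depends only on labels of edges adjacent to $S$, every edge with both endpoints outside $S$ contributes an independent factor of $d$, yielding $d^{|E \setminus \partial S|}$ in total, where $\partial S$ denotes the set of edges incident to $S$. Moreover, each external edge of $S$ (one endpoint in $S$, one outside) appears in only one $A^{[v]}$ and can be summed out locally, producing a reduced tensor $\tilde A^{[v]}$ whose rank equals the $S$-internal degree of $v$; this preprocessing costs $O(d^\kappa)$ per vertex in $S$. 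What remains is a brute-force contraction on the induced subgraph $G[S]$, which has $k \coloneqq |S|$ vertices and at most $\kappa k / 2$ internal edges, and can be carried out in $O(k \, d^{\kappa k})$ time by enumerating all edge labelings. Including $O(n)$ to identify $\partial S$ and to compute $d^{|E \setminus \partial S|}$ in the $O(n)$-edge sparse graph, the per-subset cost is $O(n + k \, d^{\kappa k})$.

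Aggregating over all subsets of size $k$ and over $k = 0, \ldots, m$,
\begin{equation}
\sum_{k=0}^{m} \binom{n}{k} \cdot O\!\big(n + k \, d^{\kappa k}\big)
\;=\; O\!\big(m^2 \, d^{\kappa m} \, n^{m+1}\big),
\end{equation}
using $\binom{n}{k} \leq n^k$ and $k \leq m$. Multiplying each $a_k$ by $k!$ to recover $g_A^{(k)}(0)$ adds only negligible overhead, giving the claimed bound. The only real subtlety is cleanly isolating the external-edge factorization so that the cost depends on $|S|$ rather than on $n$, which is a direct consequence of the rank-$1$ structure of $J^{[v]}$ and should not present any serious obstacle.
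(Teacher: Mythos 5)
Your proof is correct and follows essentially the same route as the paper's: both express $g_A^{(k)}(0)/k!$ as a sum of $\chi^A(S)$ over $k$-subsets $S$ (the paper via the product rule on $\chi(T_A(z))$, you via direct binomial expansion of $\prod_v(1 + zA^{[v]}_c)$ — algebraically identical), and both exploit the rank-one factorization $J^{[v]} = [1,\ldots,1]^{\otimes\kappa}$ so that each $\chi^A(S)$ reduces to a cheap contraction localized to $S$. Your accounting is in fact a touch more explicit: you separate edges entirely outside $S$ (contributing $d^{|E\setminus\partial S|}$), edges with exactly one endpoint in $S$ (summed out locally into reduced tensors $\tilde A^{[v]}$), and the residual contraction on $G[S]$, whereas the paper subsumes all of this under the statement that $\frac{\partial T_A(0)}{\partial S}$ splits into disconnected sub-graphs each of size at most $m$. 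Both arrive at the per-subset cost being dominated by contracting the $O(m)$-vertex piece near $S$, giving $O(m^2 d^{\kappa m} n^{m+1})$ overall.
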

\begin{proof}
 For any subset $S\subseteq V$, we denote by $\frac{\partial T_A(z) }{\partial S}$ the tensor network which is obtained by substituting the tensor $M^{[v]}$ with $ A^{[v]}$ at every vertex $v \in S$. Using the product rule of derivatives and induction on $k$,  one can check that 
\begin{align}
	\frac{\partial^k g_A(z) }{\partial z^k} = \sum_{S \subseteq V, |S|=k} \chi\left(\frac{\partial T_A(z)}{\partial S}\right) \cdot  k! \label{eq:partial_new}
\end{align}	
In particular, note that  when $z=0$, by the definition of  $\frac{\partial T_A(0)}{\partial S}$, for any vertex $w\not \in S$, the corresponding tensor at $w$ in $\frac{\partial T_A(0)}{\partial S}$ is 
$$
M^{[v]} = J^{[v]} + 0 \cdot A^{[v]}	=J^{[v]}.
$$
As in Example \ref{ex:allone product} and Figure \ref{fig:tensor3} (f), 
we can decompose each tensor $J^{[v]} = [1,\ldots ,1]^{\otimes \kappa}$ as a product of $\kappa$ all-one vectors $[1, \ldots, 1]$. 
Then the graphical representation of  $\frac{\partial T_A(0)}{\partial S}$ consists of many disconnected sub-graphs, where each sub-graph has at most $|S|\leq m$ vertices. The contraction value $\chi(\frac{\partial T_A(0)}{\partial S})$ is the product of the contraction value of each subgraph, and can be computed in time $n\cdot O(d^{\kappa m} m)$. Here $n$ is an upper bound of the number of sub-graphs, and $O(d^{\kappa m}m)$ is the cost of directly contracting an $m$ vertices  subgraph of  a tensor network on degree-$\kappa$ graph. 
Thus by Eq.~\eqref{eq:partial_new} for any $k\leq m$, $g^{(k)}_A(0)$ can be computed in time $O(n^m d^{\kappa m}nm)$.
We conclude that the first $m$ derivatives  $\{g^{(k)}(0)\}_{k=0,\ldots ,m}$ can be computed in time $O(m^2d^{\kappa m}n^{m+1})$.
\end{proof}

In later proofs we will set $m=O\left(\ln \left(n/\epsilon\right)\right)$. When $d=poly(n)$ and  $\epsilon=O(1/poly(n))$, the cost  $O(m^2d^{4m}n^{m+1})$ of computing the $m$-th derivative is then quasi-polynomial.

Using Lemma \ref{lem:dg_new} one can efficiently compute the first few derivatives of $F_A(z)$.
\begin{lemma} \label{lem:der_F}
Assume that $\rho$ is a constant.  Then
    for any integer $m$, the first $m$ derivatives of  $\{F^{(k)}_A(0)\}_{k=1}^m$ can be computed in time
    $$O\left(m^4 d^{\kappa m} n^{m+1} + m^6 \right).$$
\end{lemma}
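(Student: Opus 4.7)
The plan is to decompose $F_A(z) = \LN G_A(\phi(z))$ as a chain of three operations---rescaling $z \mapsto z_{end}\, z$, composition with $\phi$, and taking the logarithm---and to propagate the first $m$ Taylor coefficients at $0$ through each layer by manipulating truncated power series modulo $z^{m+1}$. The starting point is Lemma~\ref{lem:dg_new}, which already yields the first $m$ derivatives of $g_A$ at $0$ in time $O(m^2 d^{\kappa m} n^{m+1})$; the remaining work is to push these through the three operations.

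First, since $G_A(z) = g_A(z_{end}\, z)$, the derivatives satisfy $G_A^{(k)}(0) = z_{end}^k\, g_A^{(k)}(0)$, which costs an extra $O(m)$ operations. Next, I build $H_A(z) := G_A(\phi(z))$ as a truncated power series modulo $z^{m+1}$. Since $\rho$ is a constant, $\phi$ has constant degree $K = K(\rho)$, and the fact that $\phi(0) = 0$ guarantees that the first $m$ Taylor coefficients of $H_A$ depend only on the first $m$ Taylor coefficients of $G_A$ and of $\phi$. Building the powers $\phi(z)^k$ for $k = 1, \dots, m$ by iterated multiplication by $\phi$ costs $O(mK) = O(m)$ per power, so all of them can be assembled in $O(m^2)$ time; the subsequent linear combination $\sum_{k} (G_A^{(k)}(0)/k!)\, \phi(z)^k$ modulo $z^{m+1}$ adds another $O(m^2)$. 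Hence the first $m$ derivatives of $H_A$ at $0$ are produced in time $O(m^2 d^{\kappa m} n^{m+1} + m^2)$. Finally, I apply Lemma~\ref{lem:fg} to the polynomial $H_A$; this yields the first $m$ derivatives of $F_A = \LN H_A$ at $0$ in time $O(m^2)$ times the previous stage's cost, which is within the claimed bound $O(m^4 d^{\kappa m} n^{m+1} + m^6)$.

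The plan contains no conceptual obstacle; it is a careful assembly of Lemmas~\ref{lem:dg_new} and~\ref{lem:fg} plus a short truncated-composition step. Two minor points worth flagging are (i) the use of $\phi(0) = 0$, which both makes the truncated composition well-defined and keeps the required depth of the $G_A$-expansion equal to $m$; and (ii) the implicit requirement $H_A(0) = G_A(0) = g_A(0) \neq 0$ for Lemma~\ref{lem:fg} to apply, which will be discharged in the later random-tensor application, where $g_A(0)$ is the contraction value of the all-ones tensor network and is therefore nonzero.
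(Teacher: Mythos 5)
Your proof is correct and follows the paper's overall strategy: get the derivatives of $g_A$ via Lemma~\ref{lem:dg_new}, rescale to obtain those of $G_A$, compose with $\phi$, and finally take the logarithm via Lemma~\ref{lem:fg}. Where you genuinely diverge is in the composition step $G_A\circ\phi$. The paper invokes a dedicated auxiliary result, Lemma~\ref{lem:deriv_fg}, which computes the composite's Taylor coefficients via Faa di Bruno's formula and a Bell-polynomial recurrence at an extra cost of $O(m^4)$. You instead carry out direct truncated power-series arithmetic: since $\phi(0)=0$ and $\deg\phi = K(\rho) = O(1)$, you build $\phi^k\bmod z^{m+1}$ for $k=1,\dots,m$ by iterated multiplication in $O(m^2)$ total and then form $\sum_k (G_A^{(k)}(0)/k!)\,\phi^k\bmod z^{m+1}$ in another $O(m^2)$. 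This is more elementary---it bypasses the Bell-polynomial combinatorics entirely---and slightly tighter, giving $O(m^4 d^{\kappa m} n^{m+1} + m^4)$, which of course sits inside the claimed bound $O(m^4 d^{\kappa m} n^{m+1} + m^6)$. The two side conditions you flag, namely that $\phi(0)=0$ caps the depth of the $G_A$-expansion at $m$, and that $G_A(0)=g_A(0)\neq 0$ is needed for Lemma~\ref{lem:fg}, are exactly the hypotheses the paper also relies on (indeed $h_A(0)=d^{2n}$ in the application), so the two treatments are fully compatible.
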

\begin{proof}
By Lemma \ref{lem:dg_new} and the definition of $G_A(z)$, the first $m$ derivatives $\{G_A^{(k)}(0)\}_{k=0}^m$ can be computed in time $O(m^2 d^{\kappa m} n^{m+1})$. Besides,
from Lemma \ref{lem:cir2strip} and the assumption that~$\rho$ in the definition of $\phi$ is a constant, $\phi(z)$ is a polynomial of degree $K$ where $K = K(\rho)$ is a constant, thus the first $m$ derivatives $\{\phi^{(k)}(0)\}_{k=0}^m$ can be computed in time $O(m)$. Thus by Lemma~\ref{lem:deriv_fg} in Appendix~\ref{appendix:barvinok},
 one can compute the  the  first $m$ derivatives of the composite function $G_A (\phi(z))$ at $z=0$ in time 
 $$O(m^2 d^{ \kappa m} n^{m+1}+ m^4).$$ 
 Note that $K$ is a constant and $G_A(\phi(z))$ is a polynomial of degree $nK$. 
 Then by Lemma \ref{lem:fg}, we can compute  the first $m$ derivatives $F_A^{(k)}(0)$ in time 
$$O\left(m^4 d^{\kappa m} n^{m+1} + m^6  \right).$$
\end{proof}

\subsection{The algorithm and its performance}

Our goal is to approximate $\LN G_A(1)=\LN G_A(\phi(1))$ with respect to an additive error, which will give a multiplicative approximation to $G_A(1)=\chi(T)$. The algorithm is just computing the derivatives and $P_m(z)$ in Corollary \ref{cor:strip}, that is Algorithm \ref{alg:main}.

\begin{algorithm}[H]
\caption{Barvinok($G_A,m,\rho$)}\label{alg:main}
\begin{algorithmic}[1]
\State Let $F_A(z) \coloneqq \ln G_A(\phi(z))$ with $\phi= \phi_\rho$ as defined in Eq.~(\ref{eq:definition phi}).
\State Compute the first $m$ derivatives  $\{F^{(k)}_A(0)\}_{k=0}^m$ of $F_A$ using Lemma \ref{lem:der_F}.
\State Compute $P_m(1) \coloneqq  F(0) +\sum_{k=1}^m F^{(k)}_A(0) \frac{1}{k!}$ .
\item Return $\hat{\chi}(T):=e^{P_m(1)}$.
\end{algorithmic}
\end{algorithm}

Algorithm \ref{alg:main} returns a good approximation of $\chi(T)$ with multiplicative error $\epsilon$ if $G_A(z)$ is root-free in  $\cT(1,2\rho)$ and we set $m = O(\ln(n/\epsilon))$.

\begin{theorem}\label{thm:general}
   Let $0<\rho<1$ be a constant.
   If $G_A(z)\neq 0$ for any $z$ in strip $\cT(1,2\rho)$, then for any $\epsilon$, Algorithm \ref{alg:main}   runs in time 
$$O\left(m^4 d^{\kappa m} n^{m+1} + m^6 \right).$$
Choosing  $m=O(\ln (n/\epsilon))$, Algorithm \ref{alg:main}
 outputs a value $\hat{\chi}(T)$ that approximates $\chi(T)$ with $\epsilon$-multiplicative error. That is  $$|\hat{\chi}(T)-\chi(T)|\leq \epsilon  |\chi(T)|.$$
\end{theorem}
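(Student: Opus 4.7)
The plan is to combine the analytic continuation bound of Corollary \ref{cor:strip} with the derivative-computation bound of Lemma \ref{lem:der_F}, and then convert the resulting additive error on $F_A(1)$ into a multiplicative error on $\chi(T)$. The runtime claim is immediate: Algorithm \ref{alg:main} only needs to compute the first $m$ Taylor coefficients of $F_A$ at $0$ and sum them, so by Lemma \ref{lem:der_F} the total cost is $O(m^4 d^{\kappa m} n^{m+1} + m^6)$, matching the stated bound.

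For correctness, I would first observe that by construction $G_A(1) = g_A(z_{end}) = \chi(T)$ and $\phi(1) = 1$ by Lemma \ref{lem:cir2strip}, so $F_A(1) = \ln G_A(\phi(1)) = \ln \chi(T)$ (up to the chosen branch of $\ln$, which is fine since we only need the exponential). Since $\rho \in (0,1)$ is a constant, the parameters $\beta = \beta(\rho) > 1$ and $K = K(\rho)$ from Lemma \ref{lem:cir2strip} are constants. Because the hypothesis gives $G_A(z) \neq 0$ throughout $\mathcal{T}(1, 2\rho)$, Corollary \ref{cor:strip} applies to the composite polynomial $G_A(\phi(z))$, and with $z = 1$ yields
\begin{align}
|F_A(1) - P_m(1)| \;\leq\; \frac{nK}{(m+1)\,\beta^m(\beta-1)} \;=\; O\!\left(\frac{n}{m\,\beta^m}\right).
\end{align}

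Next I would choose $m$ to make this additive error small enough for the multiplicative conversion. Since $\beta > 1$ is a constant, taking $m = C \ln(n/\epsilon)$ for a sufficiently large constant $C$ (depending only on $\rho$) gives $|F_A(1) - P_m(1)| \leq \epsilon/2$. Finally, since $\hat{\chi}(T) = e^{P_m(1)}$ and $\chi(T) = e^{F_A(1)}$, we have
\begin{align}
\frac{|\hat{\chi}(T) - \chi(T)|}{|\chi(T)|} \;=\; \bigl|e^{P_m(1) - F_A(1)} - 1\bigr| \;\leq\; e^{\epsilon/2} - 1 \;\leq\; \epsilon
\end{align}
for $\epsilon \in (0,1]$, which is exactly the claimed $\epsilon$-multiplicative approximation.

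There is no serious obstacle here since all the heavy lifting is done by Corollary \ref{cor:strip} (which in turn rests on Lemma \ref{lem:cir2strip} and Lemma \ref{lem:rootfree}) and by Lemma \ref{lem:der_F}. The only minor points to verify carefully are that the branch of $\ln$ is consistently chosen so that $e^{F_A(1)} = G_A(1) = \chi(T)$ without an ambiguous phase (this is ensured by the convention in Section~\ref{sec:barvinok} that $\ln G(0)$ is real, together with analyticity of $F_A$ on $\cB(1)$), and that the constant $C$ in $m = C\ln(n/\epsilon)$ is chosen so that the prefactor $nK/((m+1)(\beta-1))$ is absorbed by $\beta^{-m}$. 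The real content of the paper lies not in this deduction but in establishing the root-free strip hypothesis for the random 2D tensor network, which is deferred to Section \ref{sec:TN}.
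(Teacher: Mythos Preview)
Your proposal is correct and follows essentially the same route as the paper: invoke Lemma~\ref{lem:der_F} for the runtime, apply Corollary~\ref{cor:strip} under the root-free strip hypothesis to bound $|F_A(1)-P_m(1)|$, pick $m=O(\ln(n/\epsilon))$ to drive that bound below a constant multiple of $\epsilon$, and then convert the additive error on $F_A(1)$ to a multiplicative error on $\chi(T)=e^{F_A(1)}$. The only cosmetic difference is that the paper targets $|F_A(1)-P_m(1)|\leq \epsilon/e$ and uses $|e^x-1|\leq e|x|$ for $|x|\leq 1$, whereas you target $\epsilon/2$ and use $|e^z-1|\leq e^{|z|}-1\leq \epsilon$; both are equivalent elementary estimates.
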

 \begin{proof}
 As 
 in Corollary \ref{cor:strip} and Lemma \ref{lem:cir2strip}, we define  two constants $\vbeta,K$ from $\rho$.
  Set
 \begin{align}
     m\coloneqq  \frac{\ln (enK/\epsilon)- \ln (\vbeta-1)}{\ln \vbeta} = O(\ln (n/\epsilon)).
 \end{align}
  
 Define the polynomial $P_m(z)$ as  
 in Corollary~\ref{cor:strip}. Applying  Corollary~\ref{cor:strip}  to the functions $G_A(z),F_A(z)$, we have for $|z|\leq 1$,
\begin{align}
  |F_A(z)-P_m(z)|\leq \epsilon/e.  
\end{align}
One can check that $|e^x-1|\leq e |x|$ for complex $x$ where  $|x|\leq 1$. Thus for any $|z|\leq 1$,
\begin{align}
    |e^{F_A(z)}-e^{P_m(z)}| &= |e^{F_A(z)}| \cdot |1-e^{P_m(z)-F_A(z)}|\\
    &\leq |e^{F_A(z)}|\cdot e \cdot \epsilon/e.\\
    &= \epsilon\cdot |e^{F_A(z)}| \label{eq:bound_new}
\end{align}
Note that 
\begin{align}
&\chi(T)= \chi(T_A(z_{end})) = e^{F_A(1)}.
\end{align}
Since 
\begin{align}
&\hat{\chi}(T) \coloneqq  e^{P_m(1)}.
\end{align}
 Eq.~(\ref{eq:bound_new}) implies,
\begin{align}
    |\chi(T)- \hat{\chi}(T)| \leq \epsilon\cdot |\chi(T)|.
\end{align}

The runtime of the algorithm is the time for computing the polynomial $P_m(1)$, which is dominated by the time for computing $F_A^{(k)}(0)$ for $k=1,\ldots ,m$. By Lemma \ref{lem:der_F} we can compute the   the first $m$ derivatives $F_A^{(k)}(0)$ in time 
$$O\left(m^4 d^{\kappa m} n^{m+1} + m^6 \right)$$
 \end{proof}

\section{Approximating random PEPS with positive mean}\label{sec:TN}

In this section, we apply Algorithm \ref{alg:main}  to the task of approximating the contraction value of  2D tensor networks. We show that the algorithm succeeds with high probability if the tensors are drawn randomly with vanishing positive mean and intermediate
bond dimension $d\geq nc^{-1}$ for constant $c$. 
The formal statement of the result is as follows.

\begin{theorem}\label{thm:random} Suppose $d\geq nc^{-1}$ for some constant $c$.  Let $\nw$ be an arbitrary small constant satisfying $0\leq \nw \leq \min\{1/80,e^{-3c}/80\}$. Let $\epsilon\in (0,1]$ be a precision parameter. Suppose $$\mu\geq \frac{1}{d}   \frac{1}{(1-2\nw)}.$$
Then there is an algorithm $\cA$ which runs in time 
$$O\left(m^4 d^{4m} n^{m+1} + m^6 \right) \text{ for } m=O(\ln (n/\epsilon)),$$
such that with probability at least $\frac{3}{4}+\frac{1}{25}$ over the randomness of the 2D $(\mu,n,d)$-Gaussian tensor network $T$, it outputs 
 a value $\hat{\chi}(T)$ that approximates $\chi(T)$ with $\epsilon$-multiplicative error. That is  
 $$|\hat{\chi}(T)-\chi(T)|\leq \epsilon  |\chi(T)|.$$
\end{theorem}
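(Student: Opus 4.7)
The plan is to invoke Theorem~\ref{thm:general} after showing that, with probability at least $3/4 + 1/25$, the polynomial $G_A$ has no zeros in the strip $\cT(1, 2\rho)$ for a suitable small constant $\rho$. Choose $\mu_v = \mu$ for every vertex and $z_{end} = 1/\mu$, so that the hypothesis $\mu \geq 1/(d(1-2\nw))$ translates to $z_{end} \leq d(1 - 2\nw)$ and the random tensor network $T$ coincides with $T_A(z_{end})$ for $A$ having iid $\cN_\bC(0,1)$ entries. Then $g_A(z) = \chi(T_A(z))$ is a degree-$n$ polynomial in $z$ and $G_A(z) = g_A(z \cdot z_{end})$ is the object fed to Algorithm~\ref{alg:main}.

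The heart of the argument is the exact computation of the second moment $E_A[|g_A(z)|^2]$. Expanding over pairs $(c, c')$ of edge labelings and using that $A^{[v]}_c \sim \cN_\bC(0, 1)$ iid, each vertex contributes $E_{A^{[v]}}[(1 + z A^{[v]}_{c_v})(1 + \bar z \overline{A^{[v]}_{c'_v}})] = 1 + |z|^2 \delta_{c_v, c'_v}$. Grouping configurations by the subset $S \subseteq V$ of vertices where $c_v = c'_v$ gives
\begin{align}
E_A[|g_A(z)|^2] \;=\; d^{2|E|} \sum_{S \subseteq V} |z|^{2|S|} d^{-|\partial S|},
\end{align}
where $\partial S$ is the edge boundary of $S$. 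The substitution $\sigma_v = 2\cdot[v\in S] - 1 \in \{\pm 1\}$ rewrites this sum, up to an explicit prefactor, as the standard 2D Ising partition function
\begin{align}
Z(h, J) = \sum_{\sigma \in \{\pm 1\}^V} \exp\Bigl(h \sum_v \sigma_v + J \sum_{(u,v) \in E} \sigma_u \sigma_v\Bigr)
\end{align}
on the torus, with coupling $J = (\ln d)/4$ and magnetic field $h = \ln(d/|z|)$. Rotation-invariance of the complex Gaussian makes $E_A[|g_A(re^{i\theta})|^2]$ independent of $\theta$, so the sensitivity $k_h(r)$ of Definition~\ref{def:sensitivity} is exactly this single-point ratio.

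The parameter regime of interest is the deep ordered low-temperature phase with a small positive field: $J \gtrsim \ln n$ (thanks to $d \geq n/c$) and $h \geq \ln(1/(1-2\nw)) > 0$ (with small room to spare so that $r$ can slightly exceed $z_{end}$). The ground state $\sigma \equiv +1$ already reproduces $|g_A(0)|^2 = d^{2|E|}$, so $k_h(r) \geq 1$ trivially. I would invoke the finite-size Onsager solution~\cite{kaufman1949crystal,majumdar1966analytic} to control the contribution of spin-flip clusters: each cluster $F$ of flipped spins contributes a ratio $(r/d)^{2|F|} d^{-|\partial F|/2}$, and the boundary weights $d^{-|\partial F|/2}$ (amplified by $d \geq n/c$) suppress the lattice-animal entropy, yielding $\ln k_h(r) = O(1)$, in fact $o(1)$, for $r$ satisfying $r(1 - \nw) > (1 + 2\rho) z_{end}$. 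Applying Lemma~\ref{lem:root} then bounds the expected number of zeros of $G_A$ in a disk containing $\cT(1, 2\rho)$ by a small constant; a Markov-type argument upgrades this to the required probability that $G_A$ is root-free in the strip, and Theorem~\ref{thm:general} delivers the quasi-polynomial time $\epsilon$-multiplicative approximation.

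The main obstacle is the Ising bound in the last step. A naive cluster expansion around $\sigma \equiv +1$ is insufficient for $\nw$ as small as allowed ($\nw \leq 1/80$), because when $r/d$ is as large as $1 - 2\nw$ the polyomino counts $\sim 4^k$ dominate the purely field-based suppression $(r/d)^{2k}$; the critical ingredient is the $d$-dependent boundary factor $d^{-|\partial F|/2}$, which must be handled together with the field via the exact (finite-size) solution rather than perturbatively. The hypothesis $d \geq n/c$ is precisely what makes these boundary factors decisive, while the threshold $\mu \sim 1/d$ corresponds to the magnetic field $h$ vanishing, reflecting the genuine phase transition highlighted in the introduction.
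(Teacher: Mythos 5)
Your setup, Ising mapping, and parameter identification match the paper's: the choice $\mu_v=\mu$, $z_{end}=1/\mu$, the vertex-pairing computation of $E_A[(1+zA)(1+\bar z\bar A)]=1+|z|^2\delta$, the reduction to a 2D Ising partition function with coupling $\beta\vJ=(\ln d)/4$ and field $\beta\vh$ controlled by $\ln(d/|z|)$, and the appeal to the Kaufman finite-size solution are all the paper's own route. (Minor slip: the $S$-sum should carry a weight $(|z|/d)^{2|S|}d^{-|\partial S|/2}$, i.e.\ $d^{-2|S|-|\partial S|/2}$, not $d^{-|\partial S|}$; you do write the correct per-cluster weight $(r/d)^{2|F|}d^{-|\partial F|/2}$ later, so this seems to be a transcription error.)

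The genuine gap is in your last step. You claim $\ln k_h(r)=O(1)$, ``in fact $o(1)$,'' for the relevant $r$, and then propose to close via Markov's inequality on the root count in a single disk that covers $\cT(1,2\rho)$. Two problems. First, the paper's Lemma~\ref{lem:variance} only yields the crude bound $k_h(r)\leq 2e^{3c}$ for $|z|\leq 1$: this is a fixed constant noticeably larger than $1$, not $1+o(1)$. (The lower bound $k_h(1)\geq 2$ is in fact exhibited by the all-$(-1)$ configuration when $|z|=1$; for $|z|=1-\nw$ its contribution is $(1-\nw)^{2n}$, which is $o(1)$ only asymptotically in $n$, not uniformly at the constant success probability $3/4+1/25$ the theorem promises.) Tightening to $1+o(1)$ requires controlling the Ising sum \emph{with} a nonzero magnetic field; the Kaufman/Onsager closed form you invoke is only for $\vh=0$, and the paper deliberately avoids needing this by using the crude inequality $\sum_s R(s)|z|^{2|s|}\leq\sum_s R(s)$. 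Second, even with the paper's $O(1)$ bound, Lemma~\ref{lem:root} gives $E_A[N_A(1-\nw)]\leq\frac{1}{2\nw}\ln(2e^{3c})$, which for $\nw\leq 1/80$ is of order $10^2$---far too large for a naive Markov bound to make a fixed disk (let alone the specific strip inside it) root-free with probability $\geq 3/4+1/25$.

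This is precisely why the paper's proof of Lemma~\ref{lem:rf} is not a Markov argument on a single disk. It exploits the rotational invariance of the complex-Gaussian ensemble: it shows with probability $\geq 4/5$ that there are no roots in $\cB(\nw)$ and at most $1/\nw^2$ roots in $\cB(1-\nw)$ (Corollary~\ref{cor:14}), then tiles the annulus with $M=1/\nw^3$ pairwise-disjoint strips, observes that at most a $\nw$-fraction of these strips can be contaminated, and uses the fact that $h_{e^{ik\theta}A}(z)=h_A(e^{ik\theta}z)$ to transfer this to a bound on the probability that the \emph{specific} strip $\cT(1-2\nw,w)$ is root-free. Without this symmetrization step, the argument does not close. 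Your proposal also omits the separate control near $0$ (the event $N_A(\nw)=0$, needed so the tiled strips are disjoint outside the small disk), and does not account for how the threshold $3/4+1/25$ is actually achieved.
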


Note that if a random variable $X\sim\cN_{\bC}(\mu,1)$ with $\mu>0$, then $\frac{1}{\mu}X \sim \cN_{\bC}(1,\frac{1}{\mu^2})$. That is 
$$\frac{1}{\mu}X = 1 + \frac{1}{\mu}Y  \text{ for  } Y\sim \cN_{\bC}(0,1).$$ 
 Thus  approximating  2D $(\mu,n,d)$-Gaussian tensor networks  with multiplicative error can be reduced to approximating  2D $(z,n,d)$-shifted-Gaussian tensor networks for $z=\frac{1}{\mu}$. 
 
 In other words, to prove Theorem \ref{thm:random} it suffices to show that one can approximate 2D $(z,n,d)$-shifted-Gaussian tensor networks $T_A(z)$ for $0\leq z \leq d\cdot (1-2\lambda)$, where 
 $$z_*=\frac{1}{\mu}\leq d\cdot (1-2\lambda)$$ gives the contraction value, that is   $\chi(T)=\mu^n \cdot \chi(T_A(z_*))$.

\subsection{Specify parameters in the algorithm}

We use the algorithm \ref{alg:main} in Section \ref{sec:alg} to prove Theorem \ref{thm:random}.  
Recall that $\lambda$ is a parameter in Theorem \ref{thm:random}.  We specify the parameters in the input of the algorithm as: 
\begin{itemize}
    \item $\kappa=4$ since the degree of a 2D lattice of periodic boundary condition is $4$.
    \item $\mu_v=1$ for all $v\in G$.
    \item $z_{end}=d\cdot (1-2\nw).$
    \item $\rho=\frac{\pi \nw^4}{4(1-2\nw)}$ for the strip $\cT(1,2\rho)$.
\end{itemize}

The key Lemma for proving Theorem \ref{thm:random}  is showing that $G_A(z)$ has no roots in $\cT(1,2\rho)$ with high probability. We use the same notations $g_A(z),G_A(z)$ as defined in Section \ref{sec:alg}.

\begin{lemma}[Root-free strip] \label{lem:rf_new}
With probability  at least $\frac{3}{4}+\frac{1}{25}$ over the randomness  of  $A$, $G_A(z)$ has no roots in $\cT(1,2\rho)$.      
\end{lemma}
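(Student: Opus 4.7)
The plan is to combine Lemma \ref{lem:root} with a second-moment calculation that can be mapped to a 2D Ising partition function. Set $h_A(w) \coloneqq g_A(w)/d^{2n}$; then $h_A(0) = 1$ since $g_A(0) = d^{|E|} = d^{2n}$. The statement that $G_A(z) \neq 0$ on $\cT(1, 2\rho)$ is equivalent to $h_A(w) \neq 0$ on the scaled strip $\cT(z_{end}, 2\rho z_{end})$, which is contained in the disk $\cB(R)$ with $R = z_{end}\sqrt{(1+2\rho)^2 + 4\rho^2}$. Choosing $r = R/(1-\nw)$ gives $\cB(r(1-\nw)) \supseteq \cT(z_{end}, 2\rho z_{end})$, so by Lemma \ref{lem:root} together with Markov's inequality applied to the nonnegative integer-valued $N_A(r(1-\nw))$, it suffices to show $\LN k_h(r) \leq (21/100)\cdot 2\nw$.

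To compute $k_h(r) = E_A[|h_A(r)|^2]$ --- which depends only on $|w|$ --- expand $|h_A(w)|^2$ as a double sum over edge labelings $c, c' \in [d]^{|E|}$ and take the expectation tensor-by-tensor using $E[A^{[v]}_c \overline{A^{[v]}_{c'}}] \propto \mathbf{1}[c, c'\ \text{agree on all edges at}\ v]$. After introducing auxiliary binary variables $\sigma_v \in \{0,1\}$ to expand the vertex product and then summing out the edge labelings, one obtains
\[
E_A[|h_A(w)|^2] \;=\; \sum_{\sigma \in \{0,1\}^V} |w|^{2|\sigma|} \!\!\prod_{e=(u,v) \in E}\!\!\bigl(\mathbf{1}[\sigma_u + \sigma_v = 0] + d^{-1}\mathbf{1}[\sigma_u + \sigma_v \geq 1]\bigr).
\]
Passing to Ising spins $s_v = 1 - 2\sigma_v \in \{\pm 1\}$, the right-hand side is (up to a prefactor chosen so the all-$(+1)$ configuration contributes exactly $1$) the partition function of the 2D ferromagnetic Ising model on the $L_1 \times L_2$ torus with coupling $J = (\LN d)/4$ and external field $H = \LN(d/|w|)$.

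For the prescribed $\rho = \pi \nw^4/(4(1-2\nw)) \ll \nw$ and $z_{end} = d(1-2\nw)$, the radius satisfies $r/d \leq 1 - \Theta(\nw)$, so $H > 0$; since $J = \Theta(\LN d) \gg J_c = \LN(1+\sqrt{2})/2$, the Ising model sits deeply in the ordered phase around the all-$(+1)$ ground state. A low-temperature (Peierls) expansion around this ground state bounds the non-ground-state contributions by $\sum_{S \ne \emptyset} (r/d)^{2|S|} d^{-\partial S /2}$, where $\partial S$ is the number of edges on the boundary of $S$; the leading single-vertex-flip term contributes $n(r/d)^2/d^2 = O(1/n)$ under $d \geq n/c$, and larger clusters give higher-order corrections. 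Hence $k_h(r) = 1 + O(1/n)$, $\LN k_h(r) = O(1/n)$, and $E_A[N_A(r(1-\nw))] = O(1/(n\nw))$, which is below $21/100$ for $n$ sufficiently large, giving the claimed probability.

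The hard part is the tight control of the Ising partition function in this weak-field ordered regime: crude bounds such as $Z \leq 2^n \cdot \max_s \mathrm{weight}(s)$ are exponentially too weak, so one needs either a convergent Peierls/cluster expansion or the finite-size Kaufman--Onsager formula (as referenced in the introduction) to extract the subleading corrections accurately. A secondary subtlety is to align $\rho$ and $\nw$ so that $r/d$ stays bounded away from $1$ by at least $\Omega(\nw)$, which is exactly why $\rho$ is taken as small as $\Theta(\nw^4)$.
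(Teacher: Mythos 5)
Your overall plan — apply Lemma~\ref{lem:root} on a single disk containing the scaled strip and then use Markov to get "no roots" with probability $79/100 = 3/4 + 1/25$ — is cleaner than the paper's argument, but it requires a dramatically sharper second-moment bound than the paper actually proves, and this is where the gap lies.

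Concretely, your argument needs $\LN k_h(r) \leq (21/100)\cdot 2\nw$ at a radius $r$ with $r/d \approx 1-\nw$. With $\nw \leq 1/80$ this means $k_h(r) \leq e^{42\nw/100} \approx 1.005$, and you assert $k_h(r) = 1 + O(1/n)$ based on a Peierls heuristic in which you keep only the single-flip term $n(r/d)^2/d^2$. But the only second-moment bound the paper establishes at this radius (Lemma~\ref{lem:variance}(b), a direct consequence of Lemma~\ref{lem:total}/Lemma~\ref{lem:leq}) is $k_h(|z|) \leq 2 e^{3c}$ for $|z| \leq 1$, a \emph{constant}. Even the refinement used for part (a) of that lemma — the crude estimate $|z|^{2|s|} \leq |z|^2$ for $|s|\geq 1$ — only gives $k_h(1-\nw) \leq 1 + 2(1-\nw)^2 e^{3c}$, again a constant bounded away from $1$; plugging this into your calculation gives $E[N_A] \gtrsim \tfrac{1}{2\nw}\ln(2e^{3c})$, which is far above $21/100$ for the prescribed tiny $\nw$. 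So the constant-factor control that the paper derives from Kaufman's formula is \emph{not enough} for your route; you would need to prove that $\sum_{1 \leq |s| \leq n-1} R(s)|z|^{2|s|}/d^{4n} = O(1/n)$ at $|z| \approx 1-\nw$, which requires a genuine contour/polymer expansion on the torus (controlling all cluster shapes and the winding contours), not just the bound $\sum_s R(s) \leq 2e^{3c}d^{4n}$. You flag this as "the hard part," but it is in fact the entire proof, and nothing in the paper's lemmas supplies it.

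The paper's proof deliberately avoids this. It only needs the weak constant bound, and compensates with a geometric/probabilistic trick you omit: (i) show $N_A(\nw)=0$ with probability $\geq 9/10$ using the small-radius bound $k_h(2\nw) \leq 1 + 8\nw^2 e^{3c}$ (Lemma~\ref{lem:variance}(a), which works because $|z|^{2|s|}\leq|z|^2$ is a good bound when $|z|=O(\nw)$); (ii) show $N_A(1-\nw) \leq 1/\nw^2$ with probability $\geq 9/10$ from $k_h(1)\leq 2e^{3c}$ and Markov; (iii) cover $\cB(1-\nw)$ by $M = 1/\nw^3$ pairwise-disjoint-outside-$\cB(\nw)$ strips, observe that at most a $\nw$-fraction contain a root, and invoke the rotational invariance $h_{e^{ik\theta}A}(z) = h_A(e^{ik\theta}z)$ of the ensemble to conclude that the specific strip $\cT(1-2\nw,w)$ is root-free with probability $\geq \tfrac45(1-\nw) \geq \tfrac34 + \tfrac1{25}$. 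Step (iii) is the key idea that allows a constant second-moment bound to suffice; your one-disk Markov argument has no analogue of it and therefore needs a qualitatively stronger bound that is neither proved in the paper nor in your proposal.
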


Before proving Lemma~\ref{lem:rf_new} we  first prove that Theorem~\ref{thm:general} and Lemma~\ref{lem:rf_new} together imply Theorem~\ref{thm:random}.

\begin{proof}[Proof of Theorem \ref{thm:random}]
By Lemma \ref{lem:rf_new} with probability at least $\frac{3}{4}+\frac{1}{25}$ over the randomness of $A$, $G_A(z)\neq 0$ for $z\in \cT(1,2\rho)$. Then we prove  Theorem \ref{thm:random} by Theorem \ref{thm:general}.
\end{proof}

\vspace{1em}
\textbf{Rescaling polynomials.}
Recall that,
\begin{align}
    g_A(z) &= \chi(T_A(z)), \text{ for } 0\leq z\leq d\cdot (1-2\nw)\\
    G_A(z) &= \chi(T_A(z\cdot d\cdot (1-2\nw))), \text{ for } 0\leq z\leq 1.
\end{align}
For convenience, we also define
\begin{align}
    h_A(z) = \chi(T_A(zd)), \text{ for } 0\leq z\leq 1-2\nw.
\end{align}

Then Lemma \ref{lem:rf_new} is equivalent to 
\begin{lemma}[Root-free strip] \label{lem:rf}
With probability at least $\frac{3}{4}+\frac{1}{25}$ over the randomness of  $A$, $h_A(z)$ has no roots in $\cT(1-2\nw,w)$ for $w=\pi \nw^4/2$. 
\end{lemma}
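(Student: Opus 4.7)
The plan is to apply Lemma~\ref{lem:root} to bound the expected number of roots of $h_A$ in a disk containing the strip $\cT(1-2\nw, w)$, and then conclude via Markov's inequality. First, I would compute the sensitivity $k_h(r)$ (Definition~\ref{def:sensitivity}) explicitly. Since $h_A(z) = \chi(T_A(zd))$ with $M^{[v]} = J^{[v]} + zd\, A^{[v]}$ and the entries of each $A^{[v]}$ are iid standard complex Gaussians, Wick contracting $A^{[v]}$ against $\overline{A^{[v]}}$ yields, independently of $\theta$,
\begin{align}
\frac{E_A\bigl|h_A(re^{i\theta})\bigr|^2}{|h_A(0)|^2} \,=\, \sum_{S \subseteq V} r^{2|S|}\, d^{-|\partial(S)|/2},
\end{align}
where $\partial(S)$ denotes the edge boundary of $S$ on the torus (this uses $h_A(0) = d^{2n}$ together with the identity $|E(S)| = 2|S| + |\partial(S)|/2$ for the degree-$4$ lattice). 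Setting $s_v = +1$ for $v \in S$ and $s_v = -1$ otherwise, $k_h(r)$ becomes, up to a trivial rescaling, the partition function of a 2D Ising model on the $L_1 \times L_2$ torus with ferromagnetic coupling $J \propto \ln d$ and real external field $h \propto \ln r$.

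Second, I would bound this partition function in the low-temperature, weak-field regime using the finite-size Onsager formulas of \cite{kaufman1949crystal,majumdar1966analytic}, or more elementarily by direct cluster enumeration. The two ordered ``ground states'' $S = \emptyset$ and $S = V$ contribute $1$ and $r^{2n}$. Every other $S$ is either a compact cluster with $|\partial(S)| = \Omega(\sqrt{\min(|S|, n-|S|)})$, or a stripe wrapping the torus with $|\partial(S)| \geq 2\min(L_1, L_2) = \Omega(\sqrt n)$; under the hypothesis $d \geq n/c$ the total contribution of all non-trivial $S$ is $O(1/n)$. Consequently, for $r$ bounded away from $1$ by $\Theta(\nw)$, we get $r^{2n} \leq e^{-\Omega(n\nw)}$ and hence $k_h(r) \leq 1 + O(1/n)$.

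Third, I would apply Lemma~\ref{lem:root} with its shrinkage parameter (denoted $\nw$ in the lemma; call it $\alpha$ here to avoid a clash with the $\nw$ from the theorem) set to $\alpha = \nw/2$, and with $r$ chosen so that the disk $\cB(r(1-\alpha))$ just contains $\cT(1-2\nw, w)$. Because $w = \pi\nw^4/2 \ll \nw$, an elementary geometric computation shows $r = 1 - \Theta(\nw)$ suffices, and this value lies in the regime covered by Step~2. Lemma~\ref{lem:root} then gives
\begin{align}
E_A[N_A(r(1-\alpha))] \,\leq\, \frac{1}{\nw}\ln k_h(r) \,=\, O\bigl(1/(n\nw)\bigr),
\end{align}
which is at most $0.21$ for $n$ sufficiently large. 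Markov's inequality then yields $\Pr[N_A(r(1-\alpha)) \geq 1] \leq 0.21$, so with probability at least $0.79 = \tfrac{3}{4} + \tfrac{1}{25}$ the polynomial $h_A$ has no roots in $\cB(r(1-\alpha))$, and a fortiori none in $\cT(1-2\nw, w)$.

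The main obstacle is Step~2: turning the Ising representation into a sharp enough bound $k_h(r) \leq 1 + O(1/n)$. Two features require care. First, stripe configurations on the torus have perimeter only $\sim \min(L_1, L_2)$ rather than $\sim \sqrt n$, so the threshold $d \gtrsim n$ must be matched to the lattice aspect ratio, implicit in the ``2D'' setup with $L_1, L_2 = \Theta(\sqrt n)$. Second, the field-dependent weight $r^{2|S|}$ cannot be replaced uniformly by $1$: it must be retained in the $S = V$ term in order to render that term negligible, so the zero-field Onsager partition function alone does not suffice, and one must control the full low-temperature Ising partition function with a weak external field.
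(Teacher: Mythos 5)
Your approach is genuinely different from the paper's, and it has a real (though acknowledged) gap. The paper's proof does \emph{not} try to push all roots out of a single disk containing the strip. Rather, Corollary~\ref{cor:14} controls the roots at two radii—none in $\cB(\nw)$ and at most $1/\nw^2$ in $\cB(1-\nw)$—and then the proof of Lemma~\ref{lem:rf} runs a pigeonhole argument among $M = 1/\nw^3$ rotated copies of the strip (disjoint outside $\cB(\nw)$), combined with the rotational invariance $h_{e^{ik\theta}A}(z) = h_A(e^{ik\theta}z)$ of the Gaussian ensemble. The payoff is that this only requires the \emph{constant} second-moment bound $k_h(r) \le 2e^{3c}$ from Lemma~\ref{lem:variance}(b) at radii near $1$, together with the small-radius bound $k_h(\rho) \le 1 + 2\rho^2 e^{3c}$ from Lemma~\ref{lem:variance}(a) to keep roots away from the origin.

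Your route instead needs the far sharper estimate $k_h(r) = 1 + O(1/n)$ at a radius $r = 1 - \Theta(\nw)$ near the boundary, which the paper never proves. Lemma~\ref{lem:variance}(b) gives only $k_h(r) \le 2e^{3c}$, a constant that is at least $2$; plugging that into your Step~3 gives
\begin{align}
E_A[N_A(r(1-\alpha))] \;\le\; \frac{1}{\nw}\ln(2e^{3c}) \;\ge\; 80\ln 2 \;\gg\; 1
\end{align}
for any $\nw$ allowed by Theorem~\ref{thm:random}, so Markov's inequality yields nothing at all. You correctly flag the sharper bound as the main obstacle, but the sketch of Step~2 is not yet a proof: the isoperimetric inequality controls only the \emph{minimum} boundary of a set of given size, whereas to show $\sum_{S\neq\emptyset,V} r^{2|S|} d^{-|\partial S|/2} = O(1/n)$ you must also count configurations—for example via a Peierls-type bound of order $n\cdot 3^b$ on the number of contours of length $b$—so that single-vertex flips contribute $\Theta(n/d^2) = \Theta(1/n)$ and dominate, while longer contours are geometrically suppressed. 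The statement does appear to be true under $d \geq nc^{-1}$, but it is a genuine addition that goes beyond what Lemmas~\ref{lem:leq}--\ref{lem:variance} establish (Lemma~\ref{lem:leq} in particular only gives the loose factor $(1+3/d)^n \to e^{3c}$, not $1+O(1/n)$). Finally, even granting the sharper estimate, the implied constant is $\Theta(c^2)$, so your probability bound $\ge \tfrac34 + \tfrac1{25}$ only holds for $n$ sufficiently large (roughly $n \gtrsim c^2/\nw$), whereas the paper's sector/rotation argument is uniform in $n$ once $\nw \le \min\{1/80, e^{-3c}/80\}$.
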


The following Sections  are used to prove Lemma  \ref{lem:rf}. In   Section \ref{sec:2DIsing} we review the 2D Ising model. In Section \ref{sec:map2D}  we map the random 2D tensor network to the 2D Ising model. In Section \ref{sec:RFstrip} we analyze the partition function of the 2D Ising model and use it to find the root-free strip. Finally in Section \ref{sec:exact} we prove the exact contraction of random 2D tensor network with positive mean remains $\#\PT$-$\hard$.

\subsection{2D Ising model}\label{sec:2DIsing}
This section is a review of the 2D Ising model. Let $L_1$ and $L_2$ be two integers where
$$ n= L_1\times L_2.$$
For simplicity we assume $L_2$ is even.
Consider an $L_1\times L_2$ 2D lattice  with periodic boundary conditions, meaning that the lattice  can be embedded onto a torus. We assume the periodic boundary condition to simplify the  analysis.

Denote the lattice as $G=(V,E)$ and let  $n\coloneqq |V|$. 
At each vertex $v$, there is a spin which takes a value $s_v\in \{-1,+1\}$. 
The Hamiltonian (or energy function) of the \textit{2D Ising model} is defined to be the function $H$ mapping a spin configuration $s\coloneqq \{s_v\}_{v\in V}$ to its energy
\begin{align}
	H(s,\vJ,\vh) = -\vJ \sum_{(v,w)\in E} s_vs_w -  \vh \sum_{v\in V}s_v,\label{eq:2DIsing}
\end{align}
where $\vJ\in\bR$ is the pair-wise interaction strength and $\vh\in\bR$ quantifies the strength of an external magnetic field.
The partition function at inverse temperature $\beta$ is defined as 
\begin{align}
	\cZ(\beta,\vJ,\vh) &=\sum_{s\in \{\pm 1\}^n}	\exp(-\beta H(s,\vJ,\vh)) \\
		&=\sum_{s \in \{\pm 1\}^n} \prod_{(v,w)\in E} \exp(\beta \vJ s_vs_w)\cdot \prod_{v\in V}\exp(\beta \vh s_v)
\end{align}
 It is well known that when there is no external magnetic field, that is $\vh=0$, the partition function
 of the 2D Ising model with periodic boundary has a closed form.  
 In the thermodynamic limit, this formula is
 known as Onsager's solution~\cite{onsager1944crystal}. For a finite-size lattice, a refined formula has been given by Kaufman~\cite{kaufman1949crystal}, which is summarized in Lemma \ref{lem:2DIsing}.
 There is no closed form formula for the partition function $\cZ(\beta,\vJ,\vh)$ when $\vh\neq 0$.

\begin{lemma}[\cite{kaufman1949crystal}]\label{lem:2DIsing} 
The partition function of the 2D Ising model on an $L_1\times L_2$ lattice with periodic boundary conditions and zero magnetic fields is given by 
\begin{align}
    \cZ(\beta,\vJ,0) = &\frac{1}{2} \left( 2 \sinh 2\beta \vJ
  \right)^{L_1L_2/2} \times \left\{ \prod_{r=1}^{L_2} \left( 2 \cosh \frac{L_1}{2}\gamma_{2r}  \right)  + 
  \prod_{r=1}^{L_2} \left( 2 \sinh \frac{L_1}{2}\gamma_{2r}  \right) \right. \\
  &+\left.
  \prod_{r=1}^{L_2} \left( 2 \cosh \frac{L_1}{2}\gamma_{2r-1}  \right) +
  \prod_{r=1}^{L_2} \left( 2 \sinh \frac{L_1}{2}\gamma_{2r-1}  \right)
  \right\}
\end{align}
where for $j=1,\ldots ,2L_2$, we define
\begin{align}
    &\cosh \gamma_{j} \coloneqq  \cosh 2H^* \cdot\cosh 2\beta \vJ - \sinh 2H^* \cdot\sinh 2\beta \vJ\cdot \cos(j\pi/L_2),\label{eq:gamma_j}\\
    &\tanh H^*\coloneqq \exp(-2\beta \vJ).
\end{align}
\end{lemma}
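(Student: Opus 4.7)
The plan is to use the transfer matrix method originated by Kramers--Wannier and brought to a closed form by Onsager, then specialized to finite lattices by Kaufman. First I would write the partition function as a trace of a product of transfer matrices: ordering the rows $1,\ldots,L_1$ of the torus and letting $s^{(r)}\in\{\pm 1\}^{L_2}$ denote the spin configuration on row $r$, one has $\cZ(\beta,\vJ,0)=\mathrm{Tr}(V^{L_1})$, where $V=V_2V_1$ decomposes into a ``horizontal'' factor $V_1$ implementing the intra-row couplings and a ``vertical'' factor $V_2$ implementing the inter-row couplings. Both factors can be expressed in terms of Pauli operators on $L_2$ sites: $V_1=\prod_j e^{\beta\vJ\sigma^z_j\sigma^z_{j+1}}$ and $V_2\propto\prod_j e^{H^*\sigma^x_j}$ with the duality relation $\tanh H^*=e^{-2\beta\vJ}$. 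This is where the parameter $H^*$ of the statement enters.

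The next step is to diagonalize $V$ via a Jordan--Wigner transformation that converts $V$ into a quadratic form in fermionic operators. After a Fourier transform the Hamiltonian splits into decoupled $2\times2$ blocks labelled by momenta $k$, each of which is diagonalized by a Bogoliubov rotation. The resulting single-mode ``quasi-energies'' satisfy exactly the dispersion relation
\begin{equation*}
\cosh\gamma_k=\cosh 2H^*\cosh 2\beta\vJ-\sinh 2H^*\sinh 2\beta\vJ\cos k,
\end{equation*}
so that the allowed momenta yield the $\gamma_j$ appearing in Eq.~\eqref{eq:gamma_j}. The trace $\mathrm{Tr}(V^{L_1})$ then becomes a product over momenta of a sum of $e^{\pm L_1\gamma_k/2}$, which already produces the $\cosh$ and $\sinh$ structure of the claimed formula.

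The subtle point --- and what I expect to be the main obstacle --- is the bookkeeping of periodic boundary conditions in both directions. The horizontal periodicity of the lattice forces us to work with a Jordan--Wigner string that wraps around the row; this means the allowed momenta depend on the fermion parity. One must therefore split the trace into a ``Neveu--Schwarz'' (antiperiodic fermion, half-integer momenta $k=(2r-1)\pi/L_2$) sector and a ``Ramond'' (periodic fermion, integer momenta $k=2r\pi/L_2$) sector, and within each sector project onto a definite fermion parity via $\tfrac12(1\pm(-1)^F)$. Taking the $L_1$-th power of $V$ in each of the four combinations of boundary condition and parity yields exactly four products, each running over one of the two sets $\{\gamma_{2r}\}$ or $\{\gamma_{2r-1}\}$ and involving either $\cosh(L_1\gamma_j/2)$ or $\sinh(L_1\gamma_j/2)$. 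Finally, the prefactor $\tfrac12(2\sinh2\beta\vJ)^{L_1L_2/2}$ is collected from the normalization of $V_2$ (via $\sinh 2H^*\cdot\sinh 2\beta\vJ=1$ and the duality relation) together with the $\tfrac12$ coming from the parity projectors. Assembling the four sector contributions yields Kaufman's closed form as stated.
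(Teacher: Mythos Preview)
Your outline is the standard and correct derivation of Kaufman's finite-size formula: transfer matrix, Jordan--Wigner, Fourier decoupling, Bogoliubov diagonalization giving the dispersion $\cosh\gamma_k$, and the four-sector (NS/R $\times$ parity) bookkeeping that produces the four $\cosh/\sinh$ products. There is nothing to compare against, however, because the paper does not prove this lemma at all --- it is quoted as a black box from \cite{kaufman1949crystal} and used only as input to the subsequent upper bound (Lemma~\ref{lem:leq}). So your proposal is not an alternative to the paper's proof; it is simply a sketch of the original proof that the paper cites.
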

Notice that Eq.~\eqref{eq:gamma_j} does not specify the sign of $\gamma_j$. 
Since here we are only interested in an upper bound of $ \cZ(\beta,\vJ,0)$, we can just assume that $\gamma_j$ has a positive sign.\footnote{ For readers who are interested in numerically verifying  Lemma~\ref{lem:2DIsing}, the sign of $\gamma_j$  influences the value of $\sinh \frac{L_1}{2}\gamma_{j}$ and thus the value of $\cZ(\beta,\vJ,0)$.
The sign of $\gamma_j$ is explained in Remark 15 and Figure 3 of \cite{kaufman1949crystal}: $\gamma_j\geq 0$ for all $j\neq 2n$;  but $\gamma_{2n}$ is negative if $\beta \vJ< H_c$, and is non-negative if $\beta \vJ\geq H_c$, where $H_c$ is the critical point which is approximately 0.4407.  A remark is \cite{kaufman1949crystal} denotes our $\gamma_{2n}$ as $\gamma_0$.}

\subsection{Mapping random tensor networks to the Ising model}\label{sec:map2D}
 
 In this section, we estimate $E_A |h_A(z)|^2$ by mapping it to the partition function of a 2D Ising model. 
To this end,  observe that choosing $\beta,\vJ$ such that  $\exp(\beta \vJ) = d^2$ and $\exp(-\beta \vJ) = d \sqrt d$, we can write 
\begin{align}
 	\cZ(\beta, \vJ,0) = \sum_{s \in \{\pm 1 \}^n} R(s)
 \end{align} 
in terms of a function
\begin{align}
	 R(s) &= \prod_{(v,w)\in E} r_{vw}(s),
\end{align}
with weights  
\begin{align}
	&r_{vw}(s) = \left\{
	\begin{aligned}
		&d\sqrt{d}, \text{ if } s_u\neq s_v,\\
	& d^2, 	\text{ if } s_u= s_v.
	\end{aligned}
	\right.
\end{align}

We then show the following lemma.
  
\begin{lemma}\label{lem:EZ} For $z\in \bC$, $|z|>0$, set  $\beta,\vJ,\vh$ in the 2D Ising model to satisfy
   	 \begin{align}
 	&\beta \vJ = \frac{\ln d}{4},\quad \beta \vh = \ln |z|.
 	 \end{align}
 Then we have that over the randomness of $A$, we have that 
 \begin{align}
 E_A |h_A(z)|^2 &= d^{7n/2}|z|^n \cZ(\beta,\vJ,\vh)\\
 				&= \sum_{ s\in\{\pm 1\}^n} R(s)|z|^{2|s|}.
 \end{align}
 \end{lemma}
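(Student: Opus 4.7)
The plan is to expand $|h_A(z)|^2$ as a double sum over pairs of edge colorings, evaluate the Gaussian expectation entry-wise, and then recognize the resulting weighted sum over auxiliary spin configurations as a 2D Ising partition function with magnetic field. Writing $h_A(z)=\chi(T_A(zd))=\sum_c\prod_v(1+zd\,A^{[v]}_c)$, multiplying with its conjugate, and using independence of the Gaussian entries together with $E[A^{[v]}_c]=E[(A^{[v]}_c)^2]=0$ and $E[|A^{[v]}_c|^2]=1$, I would obtain
\begin{align}
E_A|h_A(z)|^2 = \sum_{c,c'\in [d]^{|E|}}\prod_v\bigl(1+|z|^2 d^2\,\mathbf{1}[c|_{\partial v}=c'|_{\partial v}]\bigr),
\end{align}
since the only surviving pairing is $E[A^{[v]}_c\,\overline{A^{[v]}_{c'}}]=1$ when $c$ and $c'$ agree on all four edges adjacent to $v$, and $0$ otherwise.

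Next I would linearize the vertex factor by introducing an auxiliary spin $s_v\in\{\pm 1\}$ via
\begin{align}
1+|z|^2 d^2\,\mathbf{1}[c|_{\partial v}=c'|_{\partial v}] = \sum_{s_v}\bigl(\mathbf{1}[s_v=+1]+\mathbf{1}[s_v=-1]\,|z|^2 d^2\,\mathbf{1}[c|_{\partial v}=c'|_{\partial v}]\bigr),
\end{align}
and swap the order of summation. For a fixed $s$, the remaining constraint on $(c,c')$ is that $c_e=c'_e$ on every edge incident to a vertex with $s_v=-1$, so the count of admissible pairs factorizes over edges: each edge contributes $d^2$ when both endpoints carry spin $+1$ and $d$ in the three other spin patterns. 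Combining this with the $(|z|^2d^2)^{|s|}$ prefactor from the linearization gives $E_A|h_A(z)|^2=\sum_s |z|^{2|s|}d^{2|s|}\prod_e w(s_u,s_v)$ with $w(+,+)=d^2$ and $w=d$ otherwise.

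To cast $\prod_e w$ into Ising form I would fit the four edge weights $(d^2,d,d,d)$ on the patterns $(++,+\!-,-\!+,-\!-)$ by the ansatz $w(s_u,s_v)=d^A\,d^{B\,s_u s_v}\,d^{C(s_u+s_v)}$; solving the resulting linear system uniquely yields $(A,B,C)=(5/4,1/4,1/4)$, which identifies the Ising coupling $\beta\vJ=(\ln d)/4$. The edge-local single-spin piece is then converted into a purely vertex-local contribution via $\sum_e(s_u+s_v)=\kappa\sum_v s_v$ with $\kappa=4$ (valid under periodic boundary conditions). Combining with the $d^{2|s|}$ vertex prefactor and using $\sum_v s_v=n-2|s|$ one sees the combined per-vertex factor is exactly $|z|^{s_v}$ up to a universal $d$-power, identifying the magnetic field $\beta\vh=\ln|z|$. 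Consolidating the constant prefactors using $|E|=2n$ gives the overall factor $d^{7n/2}|z|^n$, proving the first identity $E_A|h_A(z)|^2=d^{7n/2}|z|^n\,\cZ(\beta,\vJ,\vh)$; the equivalent form $\sum_s R(s)|z|^{2|s|}$ follows from the Ising spin-flip symmetry $s\mapsto -s$, under which $R$ is invariant (it depends on $s$ only through $s_us_v$) and $|{-}s|=n-|s|$, turning the $|z|^{n-2|s|}$ from the partition-function calculation into $|z|^{2|s|}$.

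The main obstacle is the third step: the raw edge weights $(d^2,d,d,d)$ are asymmetric under a global spin flip, so $\prod_e w$ is not of pure Ising interaction form and the single-spin term in the ansatz is unavoidable. Making sure the resulting edge-local magnetic contribution exactly combines with the $d^{2|s|}$ vertex prefactor to produce a uniform Ising field of the stated strength, and that all $d$-power prefactors consolidate cleanly to $d^{7n/2}$, is the only nontrivial bookkeeping in the proof.
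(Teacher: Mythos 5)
Your argument carries out the same replica-plus-Ising mapping as the paper, but at the level of raw sums over edge colorings rather than via the paper's tensor-diagrammatic calculus: where the paper first computes $E_A[M^{[v]}\otimes\overline{M^{[v]}}]=J^{[v]}\otimes J^{[v]}+|z|^2(d^{1/2}\dt)^{\otimes 4}$ (Lemma~\ref{lem:prod}), expands the replicated network over configurations $T(s)$, and contracts edge-by-edge to get the edge weight $r_{vw}(s)\in\{d^2,d^{3/2}\}$, you keep $(|z|^2 d^2)^{|s|}$ at the vertices and obtain the bare coloring-count weight $w(s_u,s_v)\in\{d^2,d\}$. These are the same factorization of $R(s)$ with the $d^{1/2}$ scalings of the $\dt$ tensors shifted between vertices and edges (explicitly $R(s)=d^{2|s|}\prod_e w(s_u,s_v)$, via the incidence count $2a+b=4|s|$ for $a,b$ the numbers of edges with two, resp.\ one, endpoint at $-1$), so the substance is identical and your route just trades tensor diagrams for this bookkeeping reconciliation. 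One small point worth flagging, which is shared with the paper's final displayed line in this proof: with the stated $\beta\vh=\ln|z|$ the per-$s$ Boltzmann weight carries $|z|^{+\sum_v s_v}$, whereas your expansion produces $|z|^{-\sum_v s_v}$ (combine $|z|^{2|s|}=\prod_v|z|^{1-s_v}$ with the $\prod_v d^{s_v}$ coming from your edge ansatz), so the raw mapping gives $\beta\vh=-\ln|z|$, not $+\ln|z|$; the stated sign is only recovered because $\cZ(\beta,\vJ,\vh)=\cZ(\beta,\vJ,-\vh)$ under the global flip $s\mapsto -s$. You do invoke this flip, but attribute it to the wrong of the two displayed equalities: the $\sum_s R(s)|z|^{2|s|}$ form matches your expansion directly (since $R(s)=d^{7n/2+(1/4)\sum_e s_us_v}$ and $|z|^n|z|^{-\sum_v s_v}=|z|^{2|s|}$ already combine term-by-term), and it is the $d^{7n/2}|z|^n\cZ(\beta,\vJ,\vh)$ form with $\beta\vh=\ln|z|$ that actually needs the flip.
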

Note that in Lemma \ref{lem:EZ}, the 2D Ising model has a non-zero magnetic field $\vh\neq 0$, thus the closed form formula for 2D Ising model without magnetic fields (Lemma \ref{lem:2DIsing}) does not directly apply.  

In the remainder of this section, we prove Lemma \ref{lem:EZ}. 
Here we use the techniques of mapping random instances to  classical statistical mechanical models, which are widely used in the physics literature for 
studying phase transitions~\cite{bao2020theory,skinner2019measurement,yang2022entanglement,levy2021entanglement}. 
This section will heavily use the graphical representations of tensor networks, which was explained in Section \ref{sec:Notation}.

Recall that in the 2D  tensor network $T_A(zd)$, for each vertex $v$, the tensor $M^{[v]}$ can be written as 
\begin{align}
M^{[v]} = J^{[v]} + zd \cdot A^{[v]}.	
\end{align}
In the following Lemma \ref{lem:prod},
we first compute the expectation of the product of tensor $M^{[v]}$ and its conjugate, that is $M^{[v]}\otimes \overline{M^{[v]}}$.
Evaluating this average will allow us to compute $E_A|h_A(z)|^2 $, since $h_A(z)\cdot\overline{h_A(z)}$ is the product of two 2D tensor networks where, for each vertex we can pair the tensors as $M^{[v]} \otimes \overline{M^{[v]}}$, as we will explain in detail in the proof of Lemma~\ref{lem:EZ}.

\begin{lemma}\label{lem:prod}
Define the delta tensor $\dt$ to be a tensor of rank $2$ with free edges $i,i'$ and bond dimension $d$, where $(\dt)_{ii'}=\delta_{ii'}$. 	As in Figure \ref{fig:map2D2} , we have  
	\begin{align}
		& E_A \left[ A^{[v]}\otimes \overline{A^{[v]}} \right] =  \dt	\otimes \dt	\otimes \dt	\otimes \dt \coloneqq    \left(\dt\right)^{\otimes 4}	,\label{eq:a} \\	
		& E_A \left[ M^{[v]}\otimes \overline{M^{[v]}} \right] = J^{[v]}\otimes J^{[v]} + |z|^2 (d^{1/2} \cdot \dt)^{\otimes 4}.\label{eq:b}
	\end{align}
  where in Figure \ref{fig:map2D2} (a) we use $\Box$ to represent the vertex for a delta tensor $\dt$, and in Figure \ref{fig:map2D2} (b) we use $\bullet$ to represent the vertex for the  tensor $[1,1,1\ldots ,1]$.
   \begin{figure}[H]
      \centering      \includegraphics[width=\textwidth]{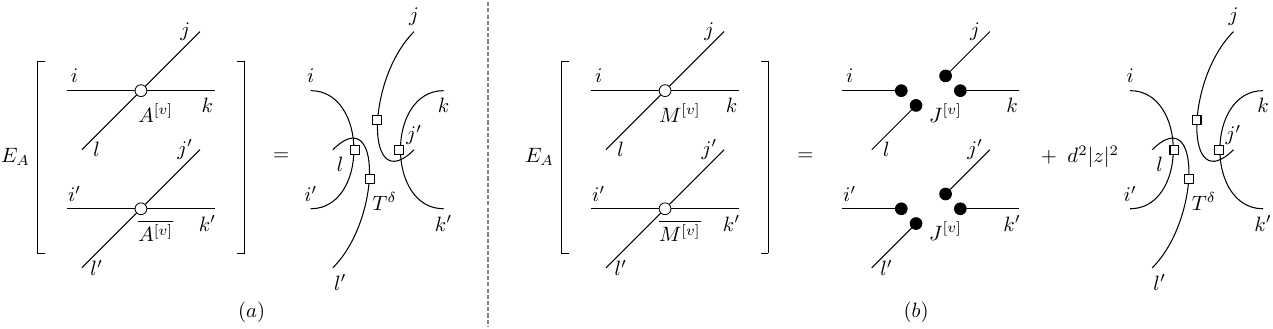}
      \caption{ (a) The expectation of the product of $A^{[v]}$ and its conjugate is a product of delta tensors. (b) The expectation of the product  $M^{[v]} \otimes \overline{ M^{[v]}}$ decomposes into a linear combination of delta tensors and a product of rank-$1$ tensors.}
      \label{fig:map2D2}
  \end{figure}

\end{lemma}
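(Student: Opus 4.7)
The plan is to verify both identities entry-wise, using the standard second-moment structure of a complex Gaussian together with the independence of the entries of $A^{[v]}$. Recall that for $X\sim\cN_\bC(0,1)$ one has $E[X]=0$, $E[|X|^2]=1$, and $E[X^2]=0$; and that if $X_\alpha,X_\beta$ are independent standard complex Gaussians, then $E[X_\alpha\overline{X_\beta}]=\delta_{\alpha\beta}$.

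First, for Eq.~\eqref{eq:a}, I would read off the entries directly. The $(i_1i_2i_3i_4;\,j_1j_2j_3j_4)$-entry of $A^{[v]}\otimes\overline{A^{[v]}}$ is $(A^{[v]})_{i_1i_2i_3i_4}\,\overline{(A^{[v]})_{j_1j_2j_3j_4}}$, whose expectation equals $\prod_{k=1}^{4}\delta_{i_kj_k}$ by the iid assumption on the entries. This matches the corresponding entry of $(\dt)^{\otimes 4}$ precisely when one pairs each free edge of $A^{[v]}$ with the matching free edge of $\overline{A^{[v]}}$, which is exactly the pairing drawn in Figure~\ref{fig:map2D2}(a).

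Second, for Eq.~\eqref{eq:b}, I would start from $M^{[v]} = J^{[v]} + zd\,A^{[v]}$ and $\overline{M^{[v]}} = J^{[v]} + \overline{z}d\,\overline{A^{[v]}}$ (using that $J^{[v]}$ and $d$ are real), and then expand the tensor product by distributivity:
\begin{align*}
M^{[v]}\otimes\overline{M^{[v]}} &= (J^{[v]}+zd\,A^{[v]})\otimes(J^{[v]}+\overline{z}d\,\overline{A^{[v]}})\\
&= J^{[v]}\otimes J^{[v]} + zd\,A^{[v]}\otimes J^{[v]} + \overline{z}d\,J^{[v]}\otimes\overline{A^{[v]}} + |z|^2 d^2\,A^{[v]}\otimes\overline{A^{[v]}}.
\end{align*}
Taking expectation, the two cross terms vanish because $E[A^{[v]}]=0$ entrywise. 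Substituting Eq.~\eqref{eq:a} and using $d^2=(d^{1/2})^4$ yields $J^{[v]}\otimes J^{[v]}+|z|^2(d^{1/2}\,\dt)^{\otimes 4}$, as claimed.

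There is no real technical obstacle here; the lemma is essentially a restatement of the second-moment structure of a complex Gaussian tensor in the graphical calculus. The only thing to be careful about is the index pairing in $A^{[v]}\otimes\overline{A^{[v]}}$, so that the four delta factors $\prod_{k}\delta_{i_kj_k}$ correspond correctly to the four copies of $\dt$ depicted in the picture, and that the scalar $d^2$ is absorbed correctly into the four $d^{1/2}$ weights of the delta tensors.
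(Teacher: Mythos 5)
Your proof is correct and follows essentially the same route as the paper: verify Eq.~\eqref{eq:a} entrywise from the iid structure and the second-moment formula for $\cN_\bC(0,1)$, then expand $M^{[v]}\otimes\overline{M^{[v]}}$ via $M^{[v]} = J^{[v]} + zd\,A^{[v]}$ and kill the cross terms using $E_A[A^{[v]}]=0$. The only cosmetic difference is that you spell out the Gaussian moment facts more explicitly than the paper does.
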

\begin{proof}
	As in	Figure \ref{fig:map2D2} (a), we label the free edges of the first copy of $A^{[v]}$ by $i,j,k,l$, and the free edges of the second copy by $i',j',k',l'$. By definition, all entries of $A^{[v]}$ are sampled independently from $\cN_{\bC}(0,1)$. Thus, we have that
\begin{align}
 	\left(E_A \left[ A^{[v]}\otimes \overline{A^{[v]}} \right]\right)_{ijkl,i'j'k'l'}=\delta_{ii'}\cdot \delta_{jj'}\cdot \delta_{kk'}\cdot \delta_{ll'},
\end{align}
which proves Eq.~\eqref{eq:a}.
To prove Eq.~\eqref{eq:b}, it suffices to notice that 
\begin{align}
	& 	M^{[v]} = J^{[v]} + zd \cdot A^{[v]}.\\
	& E_A\left[ J^{[v]}\otimes \overline{A^{[v]}}\right] =  E_A\left[ A^{[v]}\otimes \overline{J^{[v]}}\right] =0.
\end{align}
\end{proof}

\begin{proof}[Proof of Lemma \ref{lem:EZ}]
\begin{figure}[H]
      \centering
      \includegraphics[width=0.7\textwidth]{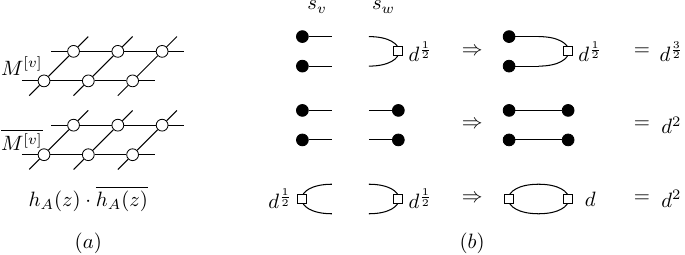}
      \caption{(a) Pair the tensors $M^{[v]}$ and its conjugate in $E_A |g_A(z)|^2.$  (b) The tensors with respect to edge $(v,w)$ in $T(s)$. From top to button, the value of $(s_v,s_w)$ are $(1,-1),(1,1),(-1,-1)$ respectively. }  
      \label{fig:mapparfun}
  \end{figure}
 As in Figure \ref{fig:mapparfun} (a), 
 $h_A(z)\cdot \overline{h_A(z)}$ is the product of two 2D tensor network, where for each vertex $v$, we can pair the tensors as $M^{[v]}\otimes \overline{M^{[v]}}$.	
	 By Lemma \ref{lem:prod} we know that 
  \begin{align}
      N^{[v]}\coloneqq E_A \left[ M^{[v]}\otimes \overline{M^{[v]}} \right] = J^{[v]}\otimes J^{[v]} + |z|^2 (d^{1/2} \cdot \dt)^{\otimes 4}.\label{eq:T(N)}
  \end{align}
 Define a new 2D tensor network $T(N)$ where at each vertex $v$ the tensor is $N^{[v]}$.   
 Notice that $E_A |h_A(z)|^2=\chi(T(N))$ since
	 \begin{align}
	 E_A |h_A(z)|^2 &= E_A\left(\sum_{c\in[d]^{|E|}} \prod_{v\in V} M_c^{[v]}\right)  \left(\sum_{c\in[d]^{|E|}} \prod_{v\in V} \overline{M_c^{[v]}}\right)\\
  &= \sum_{c,c'\in [d]^{|E|}} \prod_{v\in V} E_A \left(M^{[v]} \otimes \overline{M^{[v]}}\right)_{c,c'}\\
    &= \chi(T(N)).
	 \end{align}

  We  map  $\chi(T(N))$ to the partition function of the 2D Ising model as follows.   For any configuration $s\coloneqq \{s_v\}_{v\in V}$, $s_v\in \{\pm 1\}$, construct a new 2D tensor network $T(s)$ as follows:  
  \begin{itemize}
      \item  If $s_v=1$
	 we set the tensor on $v$ to be $ J^{[v]}\otimes J^{[v]}$; 
    \item  If $s_v=-1$ we set the tensor on $v$ to be $(d^{1/2} \cdot \dt)^{\otimes 4}$. 
  \end{itemize}
	 As in the top figure in Figure \ref{fig:mapparfun} (b), for an edge $(v,w)$, if $(s_v,s_w)=(1,-1)$, then the edge $(v,w)$ contributes a scalar factor as $d^{\frac{3}{2}}$ to $T(s)$, which is the contraction value of the tensor $J^{[v]}\otimes J^{[v]}$ and the tensor $(d^{1/2} \cdot \dt)^{\otimes 4}$. Similarly, as in  Figure \ref{fig:mapparfun} (b), if $(s_v,s_w)=(1,1)$ or $(-1,-1)$, the edge $(v,w)$ contributes a scalar factor as $d^2$. Thus 
 
  one can check that the contraction value of $T(s)$ is given by  $R(s)$.   
  
  With an arbitrary ordering of the $n$ vertices, we write the  configuration $s=\{s_v\}_{v\in V}$ as a vector $s\in\{\pm 1\}^n$. 
	 Based on Eq.~\eqref{eq:T(N)},
  one can compute $\chi(T(N))$ by expanding $N^{[v]}$, that is for any configuration $s$, 
  \begin{itemize}
      \item  We use $s_v=1$ to represent choosing $J^{[v]}\otimes J^{[v]}$,
      \item We use $s_v=-1$  for $|z|^2\cdot (d^{1/2} \cdot \dt)^{\otimes 4}$.
  \end{itemize}
  Then define $|s|$ to be the number of $-1$ in $s$, we have
	 \begin{align}
	 		\chi(T(N))
    =  \sum_{s\in \{\pm 1\}^n} \chi(T(s)) |z|^{2|s|}= 
    \sum_{s\in \{\pm 1\}^n}  R(s) |z|^{2|s|}
	 \end{align}
One can check that setting $\beta \vJ = \frac{\ln d}{4}$ and
 	$\beta \vh = \ln |z|$,
for any $s\in\{\pm 1\}^n$, we have
\begin{align}
	d^{7n/2}	 |z|^n \exp(-\beta H(s,\vJ,\vh)) =  R(s) |z|^{2|s|}.
\end{align}
\end{proof}

 \subsection{Finding a root-free strip } \label{sec:RFstrip}
  In this section, we show that one can efficiently find a root-free strip with high probability.  
  In particular, we will  bound  $E_A[|h_A(z)|^2]$ and use Lemma \ref{lem:root}. 

 Recall that we consider a 2D  lattice with periodic boundary conditions, where the 2D lattice has size $n=L_1\times L_2$ and $L_2$ is even.  The 2D Ising model and $R(s)$ are defined in Section \ref{sec:2DIsing}.

The exact formula for the partition function $\cZ(\beta,\vJ,0)$ in Lemma \ref{lem:2DIsing} is intimidating.
We upper bound $\cZ(\beta,\vJ,0)$ by a  simpler formula. Then we will use this formula to bound $E_A[|h_A(z)|^2]$.

\begin{lemma}[Bound on the partition function with no magnetic field]\label{lem:leq}
If $\beta \vJ=1/4 \cdot \ln d$ and $d\geq 3$, we have that 
\begin{align}
    2d^{\frac{n}{2}}\leq \cZ(\beta,\vJ,0)\leq 2d^{\frac{n}{2}}\left(1+\frac{3}{d}\right)^n
\end{align}
\begin{proof}
   Here we use the same notation as in Lemma \ref{lem:2DIsing}. By the definition of the partition function, and the fact that there are $2n$ edges in the 2D square lattice with periodic boundary conditions, we have
\begin{align}
\label{eq:lower bound}
    \cZ(\beta,\vJ,0)\geq \sum_{s=00..0\text{ or }11\ldots 1} \prod_{(v,w)\in E} \exp(\beta \vJ s_vs_w) = 2d^{n/2}.
\end{align}
Now, note that we can rewrite the definitions in Lemma \ref{lem:2DIsing}, for any $j$ as  
   \begin{align}
       &2\cosh \gamma_j = \cosh 2H^*\cdot 2\cosh 2\beta \vJ \cdot \left( 1- \alpha  \cos (j\pi/L_2)\right), \label{eq:2gammaj},
   \end{align}
   where $\alpha\coloneqq \tanh 2H^*\cdot \tanh 2\beta \vJ \in (0,1)$.
  Since $L_2$ is even, we have
   \begin{align}
       \prod_{r=1}^{L_2} (1-\alpha \cos (2r\pi/L_2)) &=\prod_{r=1}^{L_2/2} \left(1-\alpha \cos (2r\pi/L_2)\right) \left(1-\alpha \cos (2(r+\frac{L_2}{2})\pi/L_2) \right)\\
       &= \prod_{r=1}^{L_2/2} \left(1-\alpha \cos (2r\pi/L_2)\right) \left(1 + \alpha \cos (2r\pi/L_2) \right)\\
       &= \prod_{r=1}^{L_2/2} \left(1-\alpha^2 \cos^2 (2r\pi/L_2)\right) \\
       &\leq 1. \label{eq:leq1}
   \end{align}
By the definition of $\cosh$,  $2\cosh kx\leq (2\cosh x)^k$ for any  $x$ and any integer $k\geq 0$,  we have that
   \begin{align}
       \prod_{r=1}^{L_2} \left(2\cosh \frac{L_1}{2} \gamma_{2r}\right) &\leq \left(  \prod_{r=1}^{L_2} 2\cosh \gamma_{2r}\right)^{L_1/2}\\
       &= \left(\cosh 2H^*\cdot 2\cosh 2\beta \vJ\right)^{L_1L_2/2} \left(\prod_{r=1}^{L_2} (1-\alpha \cos (2r\pi/L_2))\right)^{L_1/2}\\
       &\leq \left(\cosh 2H^*\cdot 2\cosh 2\beta \vJ\right)^{L_1L_2/2}.
   \end{align}
   where the second equality comes from Eq.~(\ref{eq:2gammaj}) and the last inequality comes from Eq.~(\ref{eq:leq1}).
Similarly we can get the same upper bound for 
$$
\prod_{r=1}^{L_2} \left(2\cosh \frac{L_1}{2} \gamma_{2r-1}\right), \text{ and } \prod_{r=1}^{L_2} \left(2\sinh \frac{L_1}{2} \gamma_{2r}\right), \prod_{r=1}^{L_2} \left(2\sinh \frac{L_1}{2} \gamma_{2r-1}\right). 
$$
where we get the bound for the last two terms by 
$|\sinh x|\leq \cosh x$.
 Besides, from $\tanh H^* = \exp(-2\beta \vJ)$ we have
   \begin{align}
       &\exp(2H^*) = \frac{\exp(2\beta \vJ)+1}{\exp(2\beta \vJ) -1 } = \frac{\sqrt{d}+1}{\sqrt{d}-1}\\
         &\cosh 2H^* =\frac{1}{2}\left( \frac{\sqrt{d}+1}{\sqrt{d}-1} + \frac{\sqrt{d}-1}{\sqrt{d}+1} \right) \leq 1 +\frac{3}{d}, \text{ for } d\geq 3.
   \end{align}
We estimate
\begin{align}
      &2 \sinh 2\beta \vJ =  \sqrt{d}-\frac{1}{\sqrt{d}} \leq \sqrt{d}.\\ &2 \cosh 2\beta \vJ =  \sqrt{d}+\frac{1}{\sqrt{d}} 
\end{align}
Thus by Lemma \ref{lem:2DIsing}, we finally conclude that
\begin{align}
    \cZ(\beta,\vJ,0) &\leq \frac{1}{2}\cdot d^{\frac{n}{4}} \cdot 4 \cdot \left(1+\frac{3}{d}\right)^{\frac{n}{2}} d^{\frac{n}{4}} \left(1+\frac 1 d \right)^{\frac{n}{2}}\\
    &\leq 2 d^{\frac{n}{2}} \left(1+\frac{3}{d}\right)^n .
\end{align}
\end{proof}
\end{lemma}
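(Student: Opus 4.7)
The plan is to prove the two bounds separately, where the lower bound is essentially immediate while the upper bound requires a careful reduction of Kaufman's formula (Lemma~\ref{lem:2DIsing}) to a tractable product.

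For the \emph{lower bound}, I would simply restrict the sum over spin configurations to the two constant configurations $s\equiv +1$ and $s\equiv -1$. Since the 2D torus has exactly $2n$ edges (each of $n$ vertices has degree $4$, double counted), each such configuration contributes $\exp(2n \beta \vJ) = \exp(n \ln d /2) = d^{n/2}$, giving a total contribution of $2d^{n/2}$. All other configurations contribute non-negative terms, so this is indeed a lower bound.

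For the \emph{upper bound}, I would start from Kaufman's formula in Lemma~\ref{lem:2DIsing}, which expresses $\cZ(\beta,\vJ,0)$ as $\tfrac{1}{2}(2\sinh 2\beta\vJ)^{n/2}$ times the sum of four products of hyperbolic factors of the form $\prod_r 2\cosh(\tfrac{L_1}{2}\gamma_j)$ or $\prod_r 2\sinh(\tfrac{L_1}{2}\gamma_j)$. Since $|\sinh x| \leq \cosh x$, it suffices to bound each product by the corresponding cosh-product. Then, using the elementary inequality $2\cosh(kx) \leq (2\cosh x)^k$ valid for integer $k \geq 0$, I would pull the exponent $L_1/2$ outside the product, reducing the task to bounding $\prod_{r=1}^{L_2} 2\cosh \gamma_j$ (for either the even or odd $j$).

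The key algebraic step is then to exploit the factorization
\[
2\cosh \gamma_j = \cosh(2H^*)\cdot 2\cosh(2\beta\vJ) \cdot (1 - \alpha \cos(j\pi/L_2)),
\]
where $\alpha = \tanh(2H^*)\tanh(2\beta\vJ) \in (0,1)$. The trigonometric product $\prod_{r=1}^{L_2}(1 - \alpha \cos(2r\pi/L_2))$ can be handled by using the evenness of $L_2$: pairing the index $r$ with $r + L_2/2$ converts each pair into $1 - \alpha^2 \cos^2(2r\pi/L_2) \leq 1$, so the whole product is bounded by $1$ (and similarly for the odd-indexed case by the same shift argument). After this reduction, I would just plug in the explicit values: $\tanh H^* = \exp(-2\beta\vJ) = d^{-1/2}$ yields $\cosh 2H^* = \tfrac{1}{2}\big(\tfrac{\sqrt{d}+1}{\sqrt{d}-1}+\tfrac{\sqrt{d}-1}{\sqrt{d}+1}\big) \leq 1 + 3/d$ for $d\geq 3$, together with $2\cosh 2\beta\vJ = \sqrt d + 1/\sqrt d$ and $2\sinh 2\beta\vJ \leq \sqrt d$. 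Multiplying these estimates with the appropriate exponents ($n/2$ for $\sinh$ and $n$ for the $\cosh$-product raised to $L_1/2 \cdot L_2 = n$) collapses everything to $2d^{n/2}(1 + 3/d)^n$ after absorbing the factor of $4$ from the sum of the four Kaufman terms.

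The main obstacle is the trigonometric product bound: one must recognize that evenness of $L_2$ lets odd and even index contributions fall into matching $\pm \cos$ pairs, and that without this pairing the raw product could exceed $1$. The rest is bookkeeping of constants. This gives the upper bound of $2d^{n/2}(1+3/d)^n$ as desired.
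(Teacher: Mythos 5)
Your proposal follows essentially the same route as the paper's proof: lower bound from the two ferromagnetic configurations, then reduce Kaufman's four products to the cosh-products via $|\sinh x|\leq\cosh x$, pull out the exponent with $2\cosh kx\leq(2\cosh x)^k$, factor $2\cosh\gamma_j$, kill the trigonometric product by pairing $r$ with $r+L_2/2$, and finally substitute the explicit hyperbolic values. One small slip: the exponent from $(L_1/2)\cdot L_2$ is $n/2$, not $n$; the final bound $2d^{n/2}(1+3/d)^n$ then arises from combining the $(1+3/d)^{n/2}$ factor with the $(1+1/d)^{n/2}\leq(1+3/d)^{n/2}$ factor, as in the paper.
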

Using Lemma \ref{lem:leq} we can now estimate  $\sum_s R(s)$.
\begin{lemma}\label{lem:total}
	\begin{align}
		\sum_{s \in \{\pm 1\}^n} R(s) \leq 2 d^{4n} (1+\frac{3}{d})^n	
	\end{align}
\end{lemma}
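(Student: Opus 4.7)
My plan is to establish Lemma \ref{lem:total} by recognizing that $\sum_s R(s)$ is, up to a known overall prefactor, exactly the zero-field partition function $\cZ(\beta,\vJ,0)$ that was bounded in Lemma \ref{lem:leq}, so that the result follows by direct substitution.

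First I would revisit the weights $r_{vw}(s)$. Recall that the inverse temperature and coupling are chosen as $\beta \vJ = (\ln d)/4$, so $\exp(\beta \vJ) = d^{1/4}$ and $\exp(-\beta \vJ) = d^{-1/4}$. A short calculation shows that in both cases the ratio of $r_{vw}(s)$ to the Boltzmann factor on that edge is $d^{7/4}$: when $s_v = s_w$, one has $r_{vw} = d^2 = d^{7/4}\cdot d^{1/4}$, and when $s_v \neq s_w$ one has $r_{vw} = d^{3/2} = d^{7/4}\cdot d^{-1/4}$. Hence
\begin{equation}
 r_{vw}(s) = d^{7/4}\, \exp(\beta \vJ s_v s_w)
\end{equation}
for every edge $(v,w)$ and every configuration $s$.

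Next I would multiply these edge weights together. The $L_1 \times L_2$ torus has $|E| = 2n$ edges, so
\begin{equation}
 R(s) = \prod_{(v,w)\in E} r_{vw}(s) = d^{7n/2} \prod_{(v,w)\in E} \exp(\beta \vJ s_v s_w) = d^{7n/2}\, \exp(-\beta H(s,\vJ,0)).
\end{equation}
Summing over all $s \in \{\pm 1\}^n$ then yields the clean identity
\begin{equation}
 \sum_{s\in\{\pm 1\}^n} R(s) = d^{7n/2}\, \cZ(\beta,\vJ,0).
\end{equation}

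Finally I would invoke Lemma \ref{lem:leq}, which gives $\cZ(\beta,\vJ,0) \leq 2 d^{n/2}(1+3/d)^n$ under exactly the choice $\beta \vJ = (\ln d)/4$ (with $d\geq 3$, which is harmless since the interesting regime has $d \gtrsim n$). Combining these two displays produces
\begin{equation}
 \sum_{s\in\{\pm 1\}^n} R(s) \leq d^{7n/2}\cdot 2 d^{n/2}\left(1+\tfrac{3}{d}\right)^n = 2 d^{4n}\left(1+\tfrac{3}{d}\right)^n,
\end{equation}
which is the stated bound. There is no real obstacle here: the work was already done in Lemma \ref{lem:leq}; the only thing to verify carefully is the per-edge prefactor $d^{7/4}$ and the edge count $|E|=2n$ for the periodic $L_1\times L_2$ lattice, so that the exponents $d^{7n/2+n/2} = d^{4n}$ line up exactly.
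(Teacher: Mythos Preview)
Your proof is correct and follows essentially the same approach as the paper: both arguments reduce to the identity $\sum_s R(s) = d^{7n/2}\,\cZ(\beta,\vJ,0)$ and then invoke Lemma~\ref{lem:leq}. The only cosmetic difference is that the paper cites Lemma~\ref{lem:EZ} (specialized to $z=1$, $\vh=0$) to obtain that identity, whereas you verify it directly from the per-edge weight $r_{vw}(s)=d^{7/4}\exp(\beta\vJ s_vs_w)$ and the edge count $|E|=2n$.
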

\begin{proof}
Since the 2D lattice we consider has $n$ vertices and satisfies the periodic boundary condition, there are in total $2n$ edges.
	Set $z=1, \beta \vJ = 1/4\cdot \ln d$, 
 	$\beta \vh = \ln |z|=0$, (that is $\vh=0$).  By Lemma \ref{lem:EZ}  and Lemma \ref{lem:leq} we have that 
	\begin{align}
		\sum_{s\in\{\pm1\}} R(s) &= d^{7n/2} \cZ(\beta,\vJ,0)
		\leq 2d^{4n}\left(1+\frac{3}{d}\right)^n.
	\end{align}
\end{proof}

Then we  estimate  $E_A |h_A(z)|^2$ for small $z$ and for $z\leq 1$. For completeness we also give a lower bound on $E_A |h_A(z)|^2$ in Lemma \ref{lem:variance} (c). (c) will not be used in other proofs.

\begin{lemma}\label{lem:variance} Let $c$ and $\rho$ be two constants where $0<\rho<1$.  Assume $d\geq  n c^{-1}$.We have 
  	\begin{itemize} 
	\item[(a)] For $|z|\leq \rho$,  $E_A |h_A(z)|^2\leq d^{4n} (1+2\rho^2 e^{3c}).$
	\item[(b)] For $|z|\leq 1,$ $E_A |h_A(z)|^2\leq d^{4n} \cdot 2 e^{3c}.$  
 \item[(c)] For any $z$, $E_A |h_A(z)|^2\geq d^{4n} (1+ \frac{|z|^2}{d^4} )^n$. 
\end{itemize}	
\end{lemma}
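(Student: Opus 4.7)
The plan is to start from the identity
\[
E_A |h_A(z)|^2 \;=\; \sum_{s\in\{\pm1\}^n} R(s)\,|z|^{2|s|}
\]
provided by Lemma~\ref{lem:EZ}, and to split the sum according to whether $s$ is the all-$(+1)$ configuration or contains at least one $-1$. For the all-$(+1)$ configuration, every edge contributes $r_{vw}(s)=d^2$, so $R(+1^n)=(d^2)^{2n}=d^{4n}$ (using that the 2D torus has exactly $2n$ edges); and $|z|^{2|s|}=1$. This already pins down the leading $d^{4n}$ term present in all three bounds.

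For parts (a) and (b), the plan is to control the tail $\sum_{|s|\ge 1}R(s)|z|^{2|s|}$ uniformly. Since $|z|\le \rho\le 1$ in (a) and $|z|\le 1$ in (b), the key observation is $|z|^{2|s|}\le |z|^2\le \rho^2$ whenever $|s|\ge 1$ in (a), and $|z|^{2|s|}\le 1$ in (b). Applying this bound and invoking Lemma~\ref{lem:total}, which gives $\sum_s R(s)\le 2d^{4n}(1+3/d)^n$, the tail is at most $\rho^2\cdot 2d^{4n}(1+3/d)^n$ for (a) and $2d^{4n}(1+3/d)^n$ for (b). The assumption $d\ge nc^{-1}$ yields $(1+3/d)^n\le e^{3n/d}\le e^{3c}$, and combining with the $|s|=0$ contribution $d^{4n}$ delivers the two stated inequalities. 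Nothing subtle is needed here beyond the exponential bound.

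For part (c), the plan is to prove the lower bound term by term. Write $R(s)=d^{4n-m(s)/2}$ where $m(s)$ is the number of edges in the cut $(S,V\setminus S)$ with $S=\{v:s_v=-1\}$; this follows immediately from the definition of $r_{vw}(s)$, since $m(s)$ edges contribute $d^{3/2}$ and the remaining $2n-m(s)$ contribute $d^2$. For any $s$ with $|s|=k$, every vertex in $S$ has degree $4$, so $m(s)\le 4k$; hence $R(s)\ge d^{4n-2k}\ge d^{4n-4k}$. Summing over the $\binom{n}{k}$ configurations with $|s|=k$ yields
\[
\sum_{|s|=k} R(s)|z|^{2k}\;\ge\;\binom{n}{k}\,d^{4n-4k}|z|^{2k},
\]
and summing over $k$ and invoking the binomial theorem produces $d^{4n}(1+|z|^2/d^4)^n$, as required.

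I do not anticipate any serious obstacle here: the random-tensor average has already been reduced to an Ising-type sum by Lemma~\ref{lem:EZ}, and Lemma~\ref{lem:total} supplies the only nontrivial estimate needed for the upper bounds. The mildly delicate point is simply keeping track of the constants $(1+3/d)^n\le e^{3c}$ and $d^{4n-m(s)/2}\ge d^{4n-4k}$, both of which are elementary consequences of $d\ge nc^{-1}$ and the degree-$4$ structure of the torus.
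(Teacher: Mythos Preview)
Your proposal is correct and follows essentially the same approach as the paper: splitting off the all-$(+1)$ term, invoking Lemma~\ref{lem:EZ} and Lemma~\ref{lem:total} together with $(1+3/d)^n\le e^{3c}$ for (a) and (b), and using the degree-$4$ cut bound $m(s)\le 4k$ for (c). One minor remark: the paper's own proof of (c) stops at $R(s)\ge d^{4n-2k}$ and thereby obtains the slightly sharper lower bound $d^{4n}(1+|z|^2/d^2)^n$, so your extra weakening step $d^{4n-2k}\ge d^{4n-4k}$ is not needed (though it does match the lemma as stated).
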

 \begin{proof} Note that when $s=0\ldots 0$, $R(s)=d^{4n}$. 
 For (a), since $|z|\leq \rho$, by Lemma \ref{lem:EZ} we have
 	\begin{align}
 		E_A |h_A(z)|^2 &= d^{4n}\cdot |z|^{0}
   + \sum_{s\in\{\pm1\}^n:|s|\geq 1} R(s) |z|^{2s}\\
   &\leq d^{4n}
   + \rho^2 \sum_{s\in\{\pm1\}^n} R(s) \\
   &\leq d^{4n} + \rho^2 \cdot 2 d^{4n} \left(1+\frac{3}{d}\right)^n \\
   &\leq d^{4n} (1+2\rho^2 e^{3c}).
 	\end{align}
where the second inequality comes from $|z|\leq\rho<1$ and $R(s)\geq 0, \forall s$;  the third inequality comes from Lemma \ref{lem:total}; and the last inequality comes from $d\geq  n\cdot c^{-1}$.

For (b),  by Lemma \ref{lem:EZ} and Lemma \ref{lem:total}  we have for $|z|\leq 1$,
\begin{align}
    E_A|h_A(z)|^2 &\leq \sum_{s\in\{\pm1\}^n} R(s)\cdot 1 \\
    &\leq d^{4n}2\left(1+\frac{3}{d}\right)^n\\
    &\leq  d^{4n} \cdot 2 e^{3c}.
\end{align}

For (c), note that  for $|s|=k$, since there are at most $4k$ edges $(v,w)$ in $R(s)$ which take values $r_{vw}(s)=d\sqrt{d}$, we must have 
$R(s)\geq d^{4n}/\sqrt{d}^{4k}$. Thus
    \begin{align}
        E_A|h_A(z)|^2 &= \sum_{k=0}^n \sum_{s\in\{\pm1\}^n:|s|=k} R(s) |z|^{2k}.\\
        & \geq \sum_{k=0}^n \binom{n}{k} d^{4n} \left(\frac{|z|^2}{d^2}\right)^k\\
        & = d^{4n} \left(1+ \frac{|z|^2}{d^2} \right)^n.
    \end{align}
 \end{proof}

Recall that $N_A(r)$ is the number of roots of $h_A(z)$ inside the disk  $\cB(r)$. With Lemma~\ref{lem:variance}, 
 we can estimate $N_A(r)$  by using Lemma~\ref{lem:root}.

\begin{corollary}\label{cor:14}
Suppose $d\geq nc^{-1}$ for some constant $c$.  Let $\nw$ be an arbitrary small constant satisfying $0\leq \nw \leq \min\{1/80,e^{-3c}/80\}$.
 We have 
	\begin{align}
		Pr_A\left[ N_A(\nw)	= 0 \quad \& \quad N_A(1-\nw) \leq \frac{1}{\nw^2}\right] \geq 4/5.
	\end{align}
\end{corollary}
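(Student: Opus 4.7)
\textbf{Proof plan for Corollary \ref{cor:14}.} The plan is to apply Lemma \ref{lem:root} twice, at two different radii, combine each with Markov's inequality, and take a union bound. A preliminary observation is that $h_A(0)=\chi(T_A(0))=d^{2n}$ \emph{deterministically}: when $z=0$, each tensor reduces to $J^{[v]}=[1,\ldots,1]^{\otimes 4}$, so by Example~\ref{ex:allone product} the contraction factorizes into edge inner products $\langle\mathbf 1,\mathbf 1\rangle=d$, and there are $2n$ edges. Thus $|h_A(0)|^2=d^{4n}$ is a constant. Moreover, since $A^{[v]}$ has i.i.d.\ entries from the rotationally invariant distribution $\cN_\bC(0,1)$, the quantity $E_A|h_A(z)|^2$ depends only on $|z|$, so $k_h(r)=E_A|h_A(r)|^2/d^{4n}$.

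For the inner event $\{N_A(\nw)=0\}$, I would invoke Lemma \ref{lem:root} with its internal parameter set to $1/3$ and radius $r=3\nw/2$, which indeed gives $r(1-1/3)=\nw$ and satisfies the constraint $1/3<1/2$. This yields
\begin{align*}
E_A[N_A(\nw)]\leq \tfrac{3}{2}\ln k_h(3\nw/2).
\end{align*}
Now Lemma~\ref{lem:variance}(a) applies at $\rho=3\nw/2<1$ (since $\nw\leq 1/80$), giving $k_h(3\nw/2)\leq 1+\tfrac{9}{2}\nw^2 e^{3c}$. Using $\ln(1+x)\leq x$, this bounds $E_A[N_A(\nw)]\leq \tfrac{27}{4}\nw^2 e^{3c}$. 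The hypothesis $\nw\leq e^{-3c}/80$ turns this into $E_A[N_A(\nw)]\leq 27\nw/320$, and Markov's inequality gives $\Pr[N_A(\nw)\geq 1]\leq 27\nw/320\leq 27/(320\cdot 80)$, which is well below $1/10$.

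For the outer event $\{N_A(1-\nw)\leq 1/\nw^2\}$, I would invoke Lemma \ref{lem:root} with its internal parameter equal to $\nw$ (allowed since $\nw<1/2$) and radius $r=1$, giving $r(1-\nw)=1-\nw$. Combining with Lemma \ref{lem:variance}(b), which bounds $k_h(1)\leq 2e^{3c}$, yields
\begin{align*}
E_A[N_A(1-\nw)]\leq \frac{\ln(2 e^{3c})}{2\nw}=\frac{\ln 2+3c}{2\nw}.
\end{align*}
Markov gives $\Pr[N_A(1-\nw)\geq 1/\nw^2]\leq (\ln 2+3c)\nw/2$. The hypothesis $\nw\leq e^{-3c}/80$ converts this to $(\ln 2+3c)e^{-3c}/160$; the function $c\mapsto(\ln 2+3c)e^{-3c}$ is maximized at $c=(1-\ln 2)/3$ with value $2/e$, so this probability is at most $1/(80e)$. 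Union-bounding the two failure events gives a total failure probability comfortably below $1/5$, proving the $4/5$ bound with room to spare.

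\textbf{Anticipated obstacle.} The only subtle point is parameter matching: Lemma \ref{lem:root} requires the internal parameter to be strictly below $1/2$, forcing the auxiliary radius $r$ for the inner event to exceed $2\nw$, and one must verify that Lemma \ref{lem:variance}(a) (which gives the crucial quadratic $O(\nw^2 e^{3c})$ decay rather than the weaker global bound (b)) still applies at this radius. Any choice like $r=3\nw/2$ works, but using part (b) for the inner event instead of (a) would yield only $E_A[N_A(\nw)]=O(1)$, which is too weak for Markov to conclude the event holds with probability close to one.
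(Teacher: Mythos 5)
Your proof is correct and follows essentially the same route as the paper's: apply Lemma~\ref{lem:root} twice (once at a small radius with Lemma~\ref{lem:variance}(a) to get $E_A[N_A(\lambda)]=O(\lambda^2 e^{3c})$, once at radius $1$ with Lemma~\ref{lem:variance}(b)), then Markov's inequality and a union bound. The only difference is a cosmetic parameter choice in the first Jensen step: you take $r=3\lambda/2$ with inner parameter $1/3$ while the paper takes $r=2\lambda$ with inner parameter $\lambda/2$; both give the same $O(\lambda e^{3c})$ bound on $E_A[N_A(\lambda)]$ and work equally well.
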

\begin{proof}
Note that 
$$h_A(0)=d^{2n}.$$
We have
	\begin{align}
			E_A [N_A(\nw)] & \leq E_A \left[
   N_A\left(2\nw -2\nw\cdot \nw/2\right) \right]\\
    &\leq \frac{1}{\nw} \ln E_\theta E_A \frac{|h_A(2\nw\cdot e^{i\theta})|^2}{|h_A(0)|^2}\\
   &\leq  \frac{1}{\nw} \cdot  \ln (1+8\nw^2 e^{3c})\\
   &\leq 8\nw\cdot e^{3c};
   \end{align}
   where the first inequality comes from 
  the fact that the disk of radius $2\nw-2\nw\cdot \nw/2$ contains the disk of radius $ \nw$; the second inequality comes from 
  Lemma \ref{lem:root}; and the third inequality comes from Lemma \ref{lem:variance} (a) by setting $\rho=2\lambda$.
 
 Similarly from Lemma \ref{lem:root} and Lemma \ref{lem:variance} (b)
 we get
   \begin{align}
			 E_A\left[N_A\left(1-\nw\right)\right] &\leq \frac{1}{2\nw} \ln E_\theta E_A \frac{|h_A(e^{i\theta})|^2}{|h_A(0)|^2}\\
    &\leq \frac{1}{2\nw} \cdot
   \ln (2e^{3c}).
	\end{align}
Then by Markov inequality, we have
	\begin{align}
		 Pr_A\left[ N_A(\nw)\geq 1 \right] &\leq E_A \left[N_A(\nw)\right] \\
		&\leq 8\nw\cdot e^{3c}\\
		&\leq \frac{1}{10}.
		\end{align}
	\begin{align}
		Pr_A\left[ N_A\left(1-\nw\right)\geq \frac{1}{\nw^2}\right] &\leq \nw^2\cdot E_A\left[N_A(1-\nw)\right]\\
		&\leq \nw/2 \cdot \ln (2e^{3c}) \\
		&\leq \frac{1}{10}.
	\end{align}
Then we get the desired result by a union bound.
\end{proof}

Finally we prove Lemma \ref{lem:rf}.

\begin{proof}[Proof of Lemma \ref{lem:rf}]
	Let 
	\begin{align}
		M\coloneqq 1/\nw^3,\quad	\theta \coloneqq  2\pi/M, \quad w\coloneqq \pi\nw^4/2.
	\end{align}
 For simplicity we assume that $1/\nw$ is an integer. Consider the disk $\cB(1-\nw)$, that is the disk centered at $0$ and of radius $1-\nw$. As in Figure \ref{fig:disjoint_strip}, we divide $\cB(1-\nw)$ into $M$ disjoint circular sectors, where for each sector the central angle is $\theta$. Insider each sector which is indexed by $k\in [M]$, we consider a strip of width $w$, that is
 $$\cT_k\coloneqq  \cT((1-2\nw)e^{ik\theta},w).
 $$
 Note that since $\sin x \geq x/2$ for $0\leq x\leq \pi/3$, we have
 \begin{align}
     \nw \sin\left(\theta/2\right) \geq \nw  \theta/4 =\pi \nw^4/2=w.
 \end{align}
 Thus all the $M$ strips $\{\cT_k\}_{k=1}^M$ are disjoint outside $\cB(\nw)$. Besides, one can check that the end part of the strip $\cT_k$ is inside the $k$-th sector by noticing 
   $$0\leq \nw\leq 1/80 \Rightarrow   
 (1-2\nw)\tan \frac{\theta}{2}\geq w.$$

  Denote $S_{good}$ as 
	  the set of tensors $A$ which have no roots in the disk of radius $\lambda$ and few roots in the disk of radius $1- \lambda$: 
	  $$S_{good}\coloneqq  \{A: \quad N_A(\nw)	= 0\,\,\&\, \, N_A(1-\nw) \leq \frac{1}{\nw^2}\}.$$ 
	By Corollary \ref{cor:14}, we know that
	\begin{align}
		Pr_A\left[ A\in S_{good}\right] \geq 4/5.
	\end{align}
	In the following we argue that $S_{good}$ can be further partitioned into disjoint subsets, where in each subset, with probability at least $(1-\nw)$ over the randomness of $A$ there are no roots in $\cT(1-2\nw,w)$. 
   
 To this end, first we observe that for any
 $k\in [M]$, by the definition of $h_A(\cdot)$, we have
 \begin{align}
    h_{e^{ik\theta}A  }(z) =  h_A(e^{ik\theta}z).\label{eq:equi}
 \end{align}
Note that $e^{ik\theta}z$ is just a rotation of $z$ in the complex plane. By the rotational symmetry of  disks, the roots of $ h_{e^{ik\theta}A  }(z) =  h_A(e^{ik\theta}z)$  are simply rotated compared to the roots of $h_{A}(z)$.
 Thus if $A\in S_{good}$, then so is $e^{ik\theta}A$. 
  
  Next, we partition $S_{good}$ into disjoint subsets in the way that  $A,A'$ are in the same subset iff there exists $k\in [M]$ such that $A'= e^{ik\theta} A$. 
  For convenience, for each subset we fix an arbitrary $A$ as the representative and write the subset  as 
  $$S_{good}(A)=\{e^{ik\theta} A | \, k\in[M]\}.$$

 By the definition of $S_{good}$, for any $A\in S_{good}$, there are no roots in $\cB(\nw)$ and there  are at most $1/\nw^2$ roots in $\cB(1-\nw)$. Since the $M$ tubes $\{\cT_k\}_{k=1}^M$ are disjoint outside $\cB(\nw)$,  there is at most a $\frac{1/\nw^2}{M}=\nw$ fraction of the $M$ tubes which contains roots of $h_A(z)$. 
 Further, recall that 
 $$h_{e^{ik\theta}A  }(z) =  h_A(e^{ik\theta}z),$$ 
 and thus the tube $\cT_0$ with respect to to $e^{ik\theta}A$ corresponds to  the tube $\cT_k$ with respect to $A$. 
 Hence, there is at most a $\nw$ fraction of $A'\in S_{good}(A)$ such that the corresponding strip $\cT_0$ contains roots of $h_A(z)$.

  \begin{figure}[H]
      \centering
\includegraphics[width=0.5\textwidth]{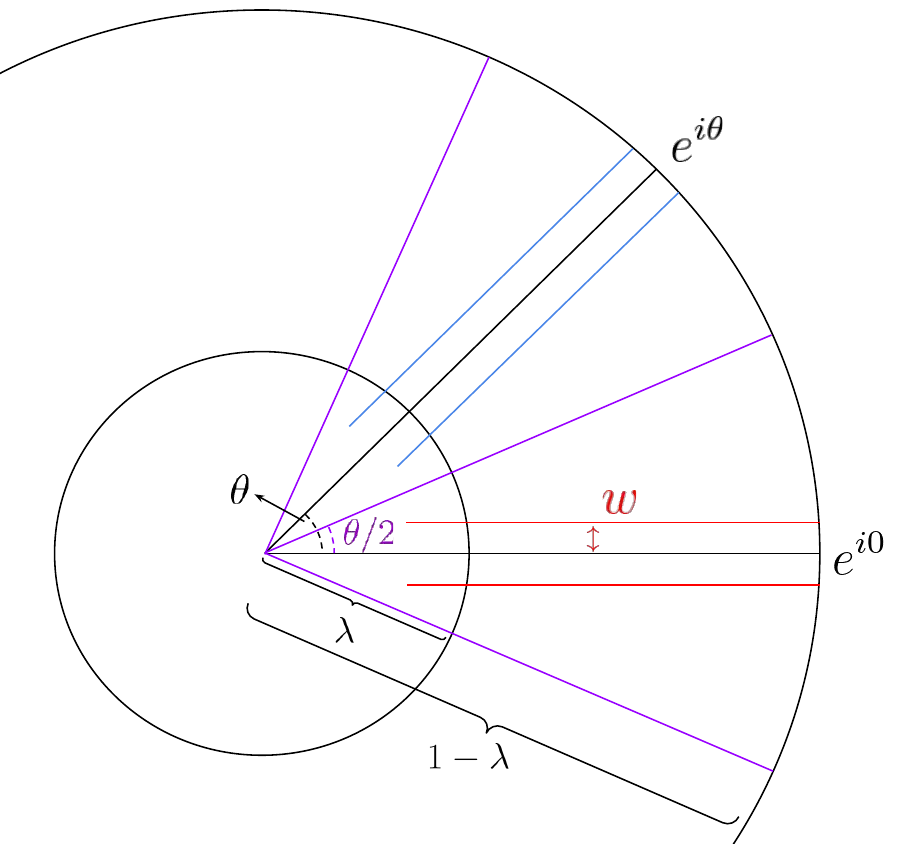}
      \caption{ The radius of the small disk and the big disk is $\lambda$ and $1-\lambda$ respectively. We divide
     the big disk $\cB(1-\lambda)$ into $M$ circular sectors. In  each sector we choose a strip of width $w$. The first strip $\cT_0$ starts from $-w$ and ends at $1-2\lambda+w$.  Other strips  are rotations of $\cT_0$.
     All the strips are disjoint outside the small disk $\cB(\lambda)$. 
     }
\label{fig:disjoint_strip}
  \end{figure}

In summary, we conclude that the fraction of $A$ such that there are no roots in $\cT(1-2\nw,w)$ is greater than
$$
\frac{4}{5}\cdot (1-\nw) \geq \frac{3}{4} +\frac{1}{25}.$$
 	\end{proof}

\subsection{ $\# \PT$-hardness of exact contraction}\label{sec:exact}

Finally we prove that the exact contraction of  the random 2D tensor network with a positive mean remains $\#\PT$-$\hard$. The proof is a simple adaption of Theorem 1 and Theorem 3 in \cite{haferkamp2020contracting}. For completeness, we put a proof in Appendix  \ref{appendix:exact-case}.

To make the statement rigorous, here we consider the finite precision approximation of the Gaussian distribution, denoted as $\overline{\cN}_{\bC}(\mu,\sigma^2)$, where each sample can be represented by finite bits instead of being an arbitrary real or complex number. For example here we set the   $\overline{\cN}_{\bC}(\mu,\sigma^2)$ to be the distribution where each sample $z\sim \overline{\cN}_{\bC}(\mu,\sigma^2)$ is obtained by:  firstly sample $y$ according to Gaussian distribution $ \cN_{\bC}(\mu,\sigma^2)$, then set $z$ to be the value by rounding $y$ to $n^2$ bits. $\overline{\cN}_{\bC}(\mu,\sigma^2)$ behaves similarly as $\cN_{\bC}(\mu,\sigma^2)$ but makes the statements of exact contraction and proofs more rigorous. 

Accordingly,  we consider finite precision 2D $(\mu,n,d)$-Gaussian tensor network instead of 2D $(\mu,n,d)$-Gaussian tensor network, where we substitute $\cN_{\bC}(\mu,\sigma^2)$ by $\overline{\cN}_{\bC}(\mu,\sigma^2)$.

\begin{theorem}[$\#\PT$-$\hard$]\label{thm:exactH}
For any $\mu\in [0,poly(n)]$,  $n\geq 25$ and $d=O(poly(n))$, if there exists an algorithm $\cA$ which runs in $poly(n)$ time and with probability at least $\frac{3}{4}+\frac{1}{n}$ over the randomness of the finite precision 2D $(\mu,n,d)$-Gaussian tensor network
    $T$, it outputs the exact value of $\chi(T)$, then there exists an algorithm which runs in randomized $poly(n)$ time  and solves $\#\PT$-\Complete\ problems.
\end{theorem}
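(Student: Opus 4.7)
The plan is to adapt the worst-case-to-average-case reduction of Haferkamp \emph{et al.}~\cite{haferkamp2020contracting} (their Theorems 1 and 3) to the mean-$\mu$ setting. The central observation is that a tensor drawn from $\overline{\cN}_{\bC}(\mu,1)^{\otimes}$ is just a shift by $\mu J^{[v]}$ of a tensor drawn from the mean-zero ensemble $\overline{\cN}_{\bC}(0,1)^{\otimes}$; since $\mu \in [0,\mathrm{poly}(n)]$ is known, this deterministic shift can be absorbed into the reduction without affecting the complexity.

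First I would invoke the worst-case $\#\PT$-hardness of exact contraction for 2D tensor networks from Haferkamp \emph{et al.}, possibly after rescaling so that the worst-case instance $T_0$ has entries of magnitude $1/\mathrm{poly}(n)$ (this is needed only so that shifts by $T_0$ stay within TV-distance $1/\mathrm{poly}(n)$ of the target distribution; if the worst-case statement is formulated with bounded integer entries, one normalizes by $\mathrm{poly}(n)$ and tracks the rescaling factor in $\chi$). Then, given such a $T_0$, I would define the polynomial
\begin{align}
p(z) \;\coloneqq\; \chi\bigl(T_0 \;+\; z\,(\mu J + A)\bigr),
\end{align}
where the tensor $A$ is freshly sampled with entries iid $\overline{\cN}_{\bC}(0,1)$ and $J$ denotes the all-ones tensor at each vertex. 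Then $p$ has degree $n$ in $z$, with $p(0)=\chi(T_0)$ the $\#\PT$-hard quantity of interest. At $z=1$ the tensor $T_0+\mu J+A$ has entries distributed as $\overline{\cN}_{\bC}(T_0{}_{c,v}+\mu,\,1)$, whose total variation distance to the target $\overline{\cN}_{\bC}(\mu,1)^{\otimes}$ is bounded by $\mathrm{poly}(n)\cdot\max_{c,v}|T_0{}_{c,v}|=1/\mathrm{poly}(n)$ by the choice of rescaling.

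Next I would carry out Lipton-style polynomial interpolation. Pick $N=\Theta(n)$ evaluation points $z_1,\ldots,z_N$ clustered sufficiently near $z=1$ so that at each $z_i$ the induced distribution of $T_0 + z_i(\mu J+A)$ is still $o(1/n)$-close in TV to the target $\overline{\cN}_{\bC}(\mu,1)^{\otimes}$. Since $\cA$ succeeds with probability $\ge 3/4+1/n$ on the target distribution, by a TV coupling it succeeds with probability $\ge 3/4 + 1/(2n)$ on $T_0+z_i(\mu J+A)$ over the randomness of $A$. Running $\cA$ on these $N$ inputs and applying Berlekamp-Welch decoding for polynomials of degree $n$ (which tolerates up to a $\frac{1}{4}$-fraction of errors), we can with high probability recover $p$ in full, and in particular $p(0)=\chi(T_0)$, yielding a randomized $\mathrm{poly}(n)$-time algorithm for the original $\#\PT$-complete problem.

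The main obstacle is the distributional mismatch: $\cA$ is only guaranteed to be correct with probability $3/4+1/n$ on the \emph{exact} target distribution, whereas each $T_0+z_i(\mu J+A)$ is only approximately target. Handling this requires simultaneously (i) a worst-case hardness source whose instances can be rescaled to have $1/\mathrm{poly}(n)$ entries (or equivalently, a careful tracking of the finite-precision rounding in $\overline{\cN}_{\bC}$ so that the TV bound survives), and (ii) a Berlekamp-Welch argument at all $N$ evaluation points simultaneously; since all evaluations share the single random sample $A$, the correctness events are correlated, so one must argue via Markov on the expected number of erroneous $z_i$'s rather than a naive union bound. These are the technical points where Haferkamp \emph{et al.}'s proof needs to be re-examined, but the rest of the reduction goes through unchanged upon inserting the shift $\mu J$.
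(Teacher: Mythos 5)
Your proposal is correct and follows essentially the same route as the paper's Appendix~\ref{appendix:exact-case}: invoke the worst-case $\#\PT$-hardness of exact 2D contraction, interpolate between the worst-case instance and a freshly sampled Gaussian tensor network along a degree-$n$ polynomial, evaluate at $\mathrm{poly}(n)$ points where the induced ensemble is TV-close to the target $\overline{\cN}_{\bC}(\mu,1)^{\otimes}$, and use Berlekamp--Welch to decode the polynomial and extrapolate to the hard endpoint. Your parametrization runs in the opposite direction (worst case at $z=0$, random near $z=1$, with $T_0$ rescaled to have $1/\mathrm{poly}(n)$ entries) whereas the paper sets $R(t)=tP^{[v]}+(1-t)Q^{[v]}$ so that the worst case sits at $t=1$ and the endpoint $t=0$ is exactly the target ensemble; the two are equivalent up to tracking the rescaling factor $\alpha^n$ in $\chi(T_0)$, which you correctly flag.

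One remark in your favor: you correctly observe that the correctness indicators at the $N$ evaluation points share the single random sample $Q$ (resp.\ $A$), so they are not independent and one must bound the failure count by Markov rather than a concentration inequality. The paper's write-up invokes a Chernoff bound at this step, which is not justified without independence; the Markov argument you propose only yields a success probability of $1/2 + \Omega(1/\mathrm{poly}(n))$ for the single Berlekamp--Welch attempt, so you should add a boosting step (repeat with fresh $Q$ and take the plurality answer) to reach the high-probability conclusion. Relatedly, with only $N=\Theta(n)$ evaluation points the Markov bound gives a success probability that could dip below $1/2$; taking $N=\mathrm{poly}(n)$ sufficiently large (as the paper does) keeps the constant comfortably on the right side. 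These are small quantitative adjustments and the overall reduction is sound.
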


\section{Approximating arbitrary positive tensor networks}\label{sec:positive}

In previous sections we have considered  approximating random tensor networks. In this section we move to the task of contracting a fixed  tensor network.

For a general tensor network $T=T(G,M)$, computing the contraction value $\chi(T)$ \textit{exactly} is known to be $\#\PT$-$\hard$ \cite{schuch2007computational}. 
On the other hand, Arad and Landau~\cite{arad2010quantum} proved that \textit{approximating} $\chi(T)$ up to an inverse polynomial additive error in the matrix \textit{$2$-norm} is $\BQP$-$\Complete$.

 In this section, we focus on \textit{positive tensor network} $T=T(G,M)$. 
 These are defined by tensors $\{M^{[v]}\}_v$ the entries of which are all non-negative. The main part of this section will establish  that when $T$ is a positive tensor network,  approximating 
$\chi(T)$ up to an inverse polynomial additive error in the matrix \textit{$1$-norm} is $\BPP$-$\Complete$.
Then, in Section \ref{sec:StoqMA} we give a short proof showing that approximating positive tensor network with inverse-poly multiplicative error is at least $\StoqMA$-$\hard$. Section \ref{sec:StoqMA} is self-contained and can be read independently.  
We first review the swallowing algorithm for tensor network contraction. Then we  explain 
Arad and Landau's $\BQP$-$\Completeness$ result and  our $\BPP$-$\Completeness$ result, which are both based on the swallowing algorithm.

\subsection{A swallowing algorithm and notations}
\label{sec:SWA}

Recall that in Section \ref{sec:Notation} we have introduced two operations  on tensor networks, taking their product and contraction. 
Given a tensor network $T=T(G,M)$, 
the swallowing algorithm (Algorithm \ref{algo:swa}) is a standard method to exactly compute the contraction value $\chi(T)$, by contracting edges of tensors $\{M^{[v]}\}_v$ according to the graph $G=(V,E)$.

\begin{algorithm}[H]
\caption{ The swallowing algorithm}\label{algo:swa}
\begin{algorithmic}[1]
\State Given an ordering of vertex $v_1,\ldots ,v_n$ of $G$.
\State Set $i\leftarrow 1$.  Set the current tensor $A[i]$ to be the tensor $M^{[v_1]}$, which can be pictured as one vertex and some free edges as in Figure \ref{fig:tensor3} (a).
\While{$i\leq n$}
\Comment{Adding tensor $M^{[v_{i+1}]}$ to $A[i]$}
\If {$M^{[v_{i+1}]}$ and $A[i]$ share edges in $G$} 
\State Construct a new tensor $A[i+1]$ by contraction, i.e. identifying the shared edges and summing over the corresponding indices. \label{l:add_M}
\Else
\State{$M^{[v_{i+1}]}$ and $A[i]$ has no common edge. 
\State Then  $A[i+1]$ is defined as the product  $M^{[v_{i+1}]} \otimes A[i]$. }
\EndIf
\State  $i\leftarrow i+1$
\EndWhile
\State\Return $\chi(T)\leftarrow A[n]$
\end{algorithmic}
\end{algorithm}

\textbf{High level ideas for Arad and Landau's result and our result.}
Let us first describe Arad and Landau's result at a high level before writing down formal statements with heavy notations. 
As in Figure \ref{fig:contraction}, given an arbitrary ordering to the vertices, for every vertex $v_i$, we implicitly partition the free edges of $M^{[v_i]}$ into input and output edges. 
With respect to this partition of in-edges and out-edges, one can write $M^{[v_i]}$ as a matrix denoted as $M^{[v_i]in,out}$.
As in Algorithm \ref{algo:swa},
the contraction value of the tensor network is then given by
sequentially mapping the in-edges  to out-edges, which can be represented by  
the matrix multiplication  $\prod_i M^{[v_i]in,out}\otimes I_{else}$, where  $I_{else}$ denotes the free edges other than the input edges in $A[i-1]$.   
 Arad and Landau's result shows that this matrix multiplication can be simulated by a quantum circuit through embedding each matrix $M^{[v_i]in,out}$ into a unitary, where the embedding is done by adding an ancillary qubit. Our result is, when every $ M^{[v_i]in,out}$ is a positive matrix, instead of embedding it into a unitary, we  embed the positive matrix into a stochastic matrix and simulate positive matrix multiplication with a random walk.

To explain Arad and Landau's result formally, we define more notation which is used in the swallowing algorithm. This notation is adapted from \cite{arad2010quantum}.
\begin{figure}[H]
    \centering
\includegraphics[width=0.4\textwidth]{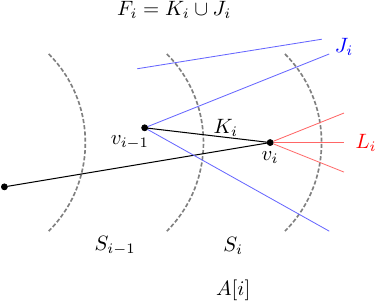}
    \caption{ $F_i$ is the edges connecting $\{v_1,...,v_{i-1}\}$ and $\{v_i,...,v_n\}$. 
    The edges attached to $v_i$ are partitioned into the in-edges $K_i$ and out-edges $L_i$. When contracting the tensor $M^{[v_i]}$ we map the in-edges to out-edges. The edges in $F_i$ but not in $K_i$ are called $J_i$.   }
    \label{fig:contraction}
\end{figure}

  As in Figure \ref{fig:contraction}, define
\begin{itemize}
	\item  $S_i=\{v_1,\ldots ,v_{i}\}$.
    \item $F_i$ is the set of edges which connect $S_{i-1}$ and $V/S_{i-1}$.  $F_i$ are  the free edges in tensor $A[i-1]$. 
    \item  $K_i$ is the set of edges which connects $S_{i-1}$ and $v_{i}$. $K_i$  are the edges being contracted when contracting $A[i-1]$ and $M^{[v_{i}]}$. Note that $K_1=\emptyset$.
    \item $J_i\coloneqq  F_i/K_i$, which are the free edges in both $A[i-1]$ and $A[i]$. $J_1=\emptyset$.
    \item $L_i$ is the set of  edges of $v_{i}$ which are not in $K_i$. In other words, $L_i$ are the new free edges introduced by adding tensor  $M^{[v_{i}]}$ to $A[i-1]$.
\end{itemize}

Denote edges in $K_i$ as $\{e^{K_i}_1,\ldots ,e^{K_i}_{|K_i|}\}$. Denote edges in $J_i$ and $L_i$ similarly. With some abuse of notations, we use $e^{K_i}_s$ to denote both the name of the edge and the colors in $[d]$ that the edge $e^{K_i}_s$ takes. 

In the following, we explain that the update from tensors $A[i-1]$ to $A[i]$ can be written as matrix multiplication. More specifically:
\begin{itemize}
    \item First note that
 $A[i]$ can be viewed as a column vector consisting of $d^{|F_{i+1}|}$ entries, where the entries are indexed by the free edges of $A[i]$, that is
  $$F_{i+1}=K_{i+1}\cup J_{i+1} = L_i \cup J_i.$$ 
 More specifically,  write this vector as $\left| A[i]\right\rangle\in \bC^{d^{|F_{i+1}|}}$, then
  for edges in $J_i, L_i$ taking colors as  $e^{J_i}_1,\ldots ,e^{J_i}_{|J_i|}, e^{L_i}_1,\ldots ,e^{L_i}_{|L_i|}$ where  $e^{J_i}_s,e^{L_i}_t\in [d]$, we define
\begin{align}
	 \left\langle e^{J_i}_1,\ldots ,e^{J_i}_{|J_i|}, e^{L_i}_1,\ldots ,e^{L_i}_{|L_i|} \bigg| A[i] \right\rangle \coloneqq  A[i]_{e^{J_i}_1,\ldots ,e^{J_i}_{|J_i|}, e^{L_i}_1,\ldots ,e^{L_i}_{|L_i|} }.
\end{align}
\item  Note that $M^{[v_i]}$ can be viewed as a matrix:  When adding $M^{[v_i]}$ to $A[i-1]$ in Line \ref{l:add_M} of Algorithm \ref{algo:swa},  we contract the free edges in $K_i$ and introducing new free edges $L_i$.  One can view $M^{[v_i]}$  as a mapping from $\bC^{d^{|K_i|}}$ to $\bC^{d^{|L_i|}}$, denoted as $M^{[v_i]K_iL_i}$, which can be written as a matrix of size $d^{|L_i|}\times d^{|K_i|}$ where
\begin{align}
    \left\langle  e^{L_i}_1,\ldots ,e^{L_i}_{|L_i|} \bigg| M^{[v_i]} \bigg| e^{K_i}_1,\ldots ,e^{K_i}_{|K_i|}\right\rangle  = M^{[v_i]}_{e^{L_i}_1,\ldots ,e^{L_i}_{|L_i|},e^{K_i}_1,\ldots ,e^{K_i}_{|K_i|}}
\end{align}

In particular, since $K_1=\emptyset$, $M^{[v_i]K_iL_i}$ is a column vector in $\bC^{d^{|L_1|}}$.  
 For convenience,
   define $\ket{A[0]}$ to be a scalar,
  $$\ket{A[0]}\coloneqq 1.$$
  Denote $I_{J_i}$ as the identity operator on the indices with respect to edges in $J_i$.
  \item One can check that the updates from $A[i-1]$ to $A[i]$ can be written as matrix multiplication, that is 
  for $i=1,\ldots ,n,$
 \begin{align}\label{eq:update}
 	\left| A[i]\right\rangle = I_{J_i}\otimes M^{[v_i]K_iL_i} \ket{A[i-1]},
 \end{align}
 in the sense that
 \begin{align}
	&\left\langle e^{J_i}_1,\ldots ,e^{J_i}_{|J_i|}, e^{L_i}_1,\ldots ,e^{L_i}_{|L_i|} \bigg| A[i] \right\rangle \nonumber \\
	&= \sum_{ e^{K_i}_1,\ldots ,e^{K_i}_{|K_i|} } \left\langle  e^{L_i}_1,\ldots ,e^{L_i}_{|L_i|} \bigg| M^{[v_i]} \bigg| e^{K_i}_1,\ldots ,e^{K_i}_{|K_i|}\right\rangle  
	\left\langle e^{K_i}_1,\ldots ,e^{K_i}_{|K_i|},e^{J_i}_1,\ldots ,e^{J_i}_{|J_i|}\bigg|  A[i-1]\right\rangle.
\end{align}
In particular, $\ket{A[n]}$ is a scalar which equals to the contraction value. Thus we have
\begin{align}
    &\ket{A[n]}=\prod_{i=1}^n I_{J_i}\otimes M^{[v_i]K_iL_i}=\chi(T).\label{eq:chi_A}
\end{align}
\end{itemize}

To ease notations we define the \textit{swallowing operator} as 
\begin{align}\label{eq:swa}
O^{[v_i]} \coloneqq  I_{J_i}\otimes M^{[v_i]K_iL_i}.	
\end{align}

\subsection{$\BPP$-$\Completeness$ of  additive-error approximation}

According to the discussion in the previous section, one can compute $\chi(T)$ exactly by updating $\left| A[i]\right\rangle$ according to Eq.~\eqref{eq:update}.  It is well known that computing $\chi(T)$ exactly is $\#\PT$-$\hard$ even for a constant degree graph $G$, thus one cannot efficiently perform the exact version of the update Eq.~\eqref{eq:update}. 
However, interestingly 
\cite{arad2010quantum} showed that one can approximately perform the update 
efficiently using a quantum computer, where the approximation refers to an inverse polynomial additive error in the 2-norm of the tensors.

\begin{theorem}[Additive $2$-norm approximation of tensor networks~\cite{arad2010quantum}]\label{thm:BQP}
 Let $G=(V,E)$ be an $n$-vertex graph of constant degree. Let $T(G,M)$ be a tensor network on $G$ with bond dimension $d=O(poly(n))$. The following approximation problem is $\BQP$-$\Complete$: Given as input
 \begin{itemize}
     \item a tensor network $T(G,M)$, and a precision parameter $\epsilon=1/poly(n)$, and
     \item an ordering of the vertices $v_1,\ldots ,v_n$, and the corresponding swallowing operators $O^{[v_i]}$ defined in Eq.~\eqref{eq:swa},
 \end{itemize}
     output a complex number $\hat{\chi}(T)$ such that
  \begin{align}
	Pr\left(|\chi(T)-\hat{\chi}(T)|\leq \epsilon \Delta_2 \right) \geq 3/4,
\end{align}
	where
	\begin{align}
		\Delta_2\coloneqq  \prod_{i=1}^n \|O^{[v_i]}\|_2.
	\end{align}
\end{theorem}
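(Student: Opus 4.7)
The plan is to prove both containment in $\BQP$ and $\BQP$-hardness, using the swallowing algorithm (Algorithm~\ref{algo:swa}) as the bridge between tensor-network contraction and quantum circuits. From Eq.~\eqref{eq:chi_A}, $\chi(T)$ equals the evolution of an initial vector through the sequence of (generally non-unitary) linear maps $O^{[v_i]}$, so one direction turns this classical matrix product into a quantum circuit, and the other turns a quantum circuit into a tensor network.

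\textbf{Containment in $\BQP$.} The idea is to simulate the matrix product by block-encoded quantum evolution. First I would rescale $\tilde O^{[v_i]} \coloneqq O^{[v_i]}/\|O^{[v_i]}\|_2$ so that $\|\tilde O^{[v_i]}\|_2 \le 1$, after padding $M^{[v_i]K_iL_i}$ to a square matrix when $|K_i|\neq |L_i|$. By Naimark's dilation, each $\tilde O^{[v_i]}$ embeds as the $|0\rangle_a\!\langle 0|_a$ block of a unitary $U_i$ acting on the working register together with a single fresh ancilla qubit. Concatenating all $U_i$'s and post-selecting every ancilla on $|0\rangle$ implements $\prod_i \tilde O^{[v_i]}$ on the initial register, producing an amplitude equal to $\chi(T)/\Delta_2$. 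Since $G$ has constant degree and $d=\mathrm{poly}(n)$, the working register uses only $O(n\log d) = \mathrm{poly}(n)$ qubits, and a Hadamard test with $O(1/\epsilon^2)$ repetitions estimates the amplitude to additive error $\epsilon$; multiplying by the classically computable scalar $\Delta_2$ yields the claimed approximation.

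\textbf{$\BQP$-hardness.} Given any $\BQP$ verification circuit $U=U_T\cdots U_1$ of one- and two-qubit unitaries on $n$ qubits, I would assemble a tensor network $T$ whose graph mirrors the causal wiring of the circuit: each gate becomes a tensor of constant rank and bond dimension $2$, the input $|0^n\rangle$ and output projector $\langle 0^n|$ become rank-one boundary tensors, and each wire becomes an edge. The graph has constant degree, and $\chi(T) = \langle 0^n|U|0^n\rangle$. Because every gate is unitary, $\Delta_2=1$, so an $\epsilon \Delta_2 = 1/\mathrm{poly}(n)$ additive approximation of $\chi(T)$ decides any language in $\BQP$ via standard amplitude amplification.

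\textbf{Main obstacle.} The technical core is the block-encoding step in containment: one must verify that the Naimark dilation $U_i$ of $O^{[v_i]} = I_{J_i}\otimes M^{[v_i]K_iL_i}$ is efficiently implementable. Because $M^{[v_i]K_iL_i}$ acts on only $O(\log d)=O(\log n)$ qubits with $I_{J_i}$ trivial on the remaining $\mathrm{poly}(n)$ qubits, $U_i$ has local support of size $O(\log n)$ and can be compiled into $\mathrm{poly}(\log n)$ elementary gates via Solovay--Kitaev combined with explicit block-encodings of dense matrices. A secondary subtlety is that the post-selection success probability $|\chi(T)|^2/\Delta_2^2$ can be exponentially small, which is precisely why the natural quantum accuracy scale is additive in $\Delta_2$ rather than multiplicative; the Hadamard test sidesteps the post-selection by estimating the amplitude directly rather than sampling from the post-selected distribution.
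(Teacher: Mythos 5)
This theorem is cited from Arad and Landau~\cite{arad2010quantum}; the paper itself does not give a proof, only a high-level description (embedding each swallowing operator into a unitary with an ancilla, then estimating the resulting amplitude). Your sketch follows precisely that approach, so the overall structure is right. Two specific points need fixing, though.

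First, a technical slip in containment: a generic $\tilde O^{[v_i]}$ acts on the $O(\log d)=O(\log n)$ qubits carrying $K_i \cup L_i$, and its Naimark dilation is a dense unitary on $O(\log n)+1$ qubits. Compiling such a unitary to elementary gates costs $\mathrm{poly}(d)=\mathrm{poly}(n)$ gates, not $\mathrm{poly}(\log n)$ as you claim. This is harmless for the theorem ($\mathrm{poly}(n)$ is all you need for $\BQP$ containment), but the $\mathrm{poly}(\log n)$ figure via Solovay--Kitaev is not achievable for a dense $\mathrm{poly}(n)\times\mathrm{poly}(n)$ matrix, and you should not assert it.

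Second, a genuine gap in the $\BQP$-hardness direction. The quantity $\chi(T)=\langle 0^n\vert U\vert 0^n\rangle$ for a raw verifier circuit $U$ is not the acceptance probability of the verifier, so a $1/\mathrm{poly}(n)$ additive estimate of it does not directly decide the language, and ``amplitude amplification'' does not repair this. The missing step is the standard doubling trick: replace $U$ by $U' \coloneqq U^{\dagger}\,(\text{a Pauli or CNOT on the output qubit})\,U$, so that $\langle \psi_{\mathrm{in}}\vert U'\vert \psi_{\mathrm{in}}\rangle$ is an affine function of the acceptance probability (compare the paper's Appendix~\ref{appendix:BPPhard}, where the classical analogue $U=Q^{-1}\cdot\mathrm{CNOT}\cdot Q$ is used, and Figure~\ref{fig:Q}). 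Only after this doubling does $\Delta_2=1$ and an $\epsilon$-additive estimate of $\chi(T)$ distinguish the yes/no cases. With that construction inserted, your hardness argument is the same as Arad and Landau's.
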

Note that by Eq.~(\ref{eq:swa}), both the 2-norm $\|\cdot\|_2$ and 1-norm of $\|\cdot\|_1$ of $O^{[v_i]}$ equal to the corresponding norm of $M^{[v_i]K_iL_i}$.

We prove that  if the tensor network $T$ is a positive tensor network, then instead of using a quantum computer,   we can approximate the update Eq.~\eqref{eq:update} efficiently using a classical computer, where the approximation refers to an inverse polynomial  additive error in the matrix $1$-norm of the matrices $O^{[v_i]}$.
\begin{theorem}[Additive $1$-norm approximation of positive tensor networks]\label{thm:BPP}
 Let $G=(V,E)$ be an $n$-vertex graph of constant degree. Let $T(G,M)$ be a positive tensor network on $G$ with bond dimension $d=O(poly(n))$. The following approximation problem is $\BPP$-$\Complete$: Given as input
 \begin{itemize}
     \item a positive tensor network $T(G,M)$, and a precision parameter $\epsilon=1/poly(n)$, and
     \item an ordering of vertices $v_1,\ldots ,v_n$, and the corresponding swallowing operators $O^{[v_i]}$  defined in Eq.~\eqref{eq:swa},
 \end{itemize}
 output a complex number $\hat{\chi}(T)$ such that
  \begin{align}
	Pr\left(|\chi(T)-\hat{\chi}(T)|\leq \epsilon \Delta_1 \right) \geq 3/4,\label{eq:BPP}
\end{align}
	where
	\begin{align}
		\Delta_1\coloneqq  \prod_{i=1}^n \|O^{[v_i]}\|_1.	\label{eq:scale}
	\end{align}
\end{theorem}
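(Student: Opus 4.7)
The plan is to prove BPP-containment and BPP-hardness separately, mirroring the structure of Arad and Landau's BQP-completeness result. The key conceptual shift is that when all entries of $M^{[v]}$ are non-negative, the swallowing operators $O^{[v_i]}$ from Eq.~\eqref{eq:swa} are entry-wise non-negative matrices, and products of such matrices admit a natural probabilistic interpretation via absorbing Markov chains rather than requiring a unitary embedding.

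For the BPP upper bound, I would normalize each swallowing operator by defining $\tilde O^{[v_i]}\coloneqq O^{[v_i]}/\|O^{[v_i]}\|_1$, which is column-sub-stochastic (non-negative columns summing to at most $1$), and extend $\tilde O^{[v_i]}$ to a genuinely column-stochastic $P^{[v_i]}$ by adjoining a single absorbing ``graveyard'' state that soaks up the missing mass. I would then simulate an absorbing Markov chain that successively applies $P^{[v_1]},\ldots,P^{[v_n]}$, starting from the one-dimensional initial state $\ket{A[0]}=1$ and terminating in either the alive one-dimensional final subspace or the graveyard. Because $O^{[v_i]}=I_{J_i}\otimes M^{[v_i]K_iL_i}$ with $M^{[v_i]K_iL_i}$ of constant size $d^{|L_i|}\times d^{|K_i|}$ (both dimensions bounded by the fixed degree of $G$), each transition is sampled in polynomial time: carry the spectator index $j\in J_i$ unchanged, and sample the active index from the constant-sized column distribution of $M^{[v_i]K_iL_i}/\|M^{[v_i]K_iL_i}\|_1$. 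By Eq.~\eqref{eq:chi_A} the probability of surviving all $n$ steps is exactly $\chi(T)/\Delta_1$, which lies in $[0,1]$ by sub-multiplicativity of the $1$-norm applied to $\ket{A[n]}=O^{[v_n]}\cdots O^{[v_1]}\ket{A[0]}$. A Hoeffding bound then guarantees that $O(1/\epsilon^2)$ independent trials yield an empirical survival frequency approximating $\chi(T)/\Delta_1$ within additive error $\epsilon$ with probability at least $3/4$, and multiplying by the efficiently computable $\Delta_1$ delivers the estimator $\hat\chi(T)$ demanded in Eq.~\eqref{eq:BPP}.

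For BPP-hardness, I would encode an arbitrary BPP computation as the contraction of a positive tensor network with $\Delta_1=1$. Any polynomial-time randomized Turing machine can be rewritten as a polynomial-size randomized circuit over a gate set of Toffolis plus coin flips; each such gate implements a column-stochastic transition on a constant number of bits, giving a non-negative tensor whose associated swallowing operator is stochastic. Laying these gates out as vertices of a constant-degree tensor network and fixing input/output assignments by $0/1$-valued boundary tensors yields a positive tensor network $T$ whose contraction value equals the acceptance probability of the circuit. Since each $\|O^{[v_i]}\|_1=1$ and $\Delta_1=1$, an additive $\epsilon=1/6$ approximation to $\chi(T)$ already separates acceptance probability $\geq 2/3$ from $\leq 1/3$, establishing the reduction from any BPP language.

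The main obstacle I anticipate is ensuring that the random walk is both rigorously well-defined and genuinely polynomial-time, despite living on ambient spaces $\bC^{d^{|F_i|}}$ of exponentially growing dimension. The spectator register $J_i$ can grow linearly in $n$, so $P^{[v_i]}$ has super-polynomial size and cannot be materialized; efficiency hinges entirely on exploiting the tensor-product structure $O^{[v_i]}=I_{J_i}\otimes M^{[v_i]K_iL_i}$ to reduce every step to a constant-sized sampling sub-problem on $d^{|L_i|}$ outcomes while the $J_i$-index is carried forward deterministically. The second delicate point is the bookkeeping needed to show that the graveyard-augmented walk's survival probability equals $\chi(T)/\Delta_1$ exactly: this requires tracking how the factor $\|O^{[v_i]}\|_1=\|M^{[v_i]K_iL_i}\|_1$ at each step accumulates into $\Delta_1$, and verifying that the absorbing extension preserves column-stochasticity across steps where the state dimension changes from $d^{|K_i|+|J_i|}$ to $d^{|L_i|+|J_i|}$. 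No analytic difficulty is expected beyond this combinatorial bookkeeping.
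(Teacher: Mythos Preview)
Your proposal is correct and follows essentially the same approach as the paper: for containment, embed each normalized non-negative swallowing operator into a stochastic matrix and simulate the resulting random walk so that the survival probability equals $\chi(T)/\Delta_1$, then estimate by repeated trials; for hardness, encode a randomized circuit as a positive tensor network with all $\|O^{[v_i]}\|_1=1$ so that $\Delta_1=1$ and the contraction value equals the acceptance probability. The only cosmetic differences are that the paper adds one fresh ancillary bit per step (success being the all-zero ancilla string) rather than a single absorbing graveyard, uses Chebyshev rather than Hoeffding, and realizes randomness in the hardness reduction via $\ket{+}$ ancillas with reversible gates rather than explicit coin-flip gates.
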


\subsection{Proof of Theorem \ref{thm:BPP}}
The proof of the $\BPP$-$\hardness$ part for Theorem \ref{thm:BPP}
is similar to Section 4.2 in \cite{arad2010quantum}. For completeness we  give a proof sketch in Appendix \ref{appendix:BPPhard}. 
In the following we prove the ``inside $\BPP$'' part of  Theorem \ref{thm:BPP}, that is, we provide an efficient classical algorithm that achieves 
Eq.~\eqref{eq:BPP}. 
The main idea of the algorithm is to simulate non-negative matrix multiplication via a stochastic process.

We say that a matrix is \textit{non-negative} if all of its entries are non-negative. 
We first give a lemma which extends a non-negative matrix to a stochastic matrix. 

\begin{lemma}\label{lem:ext} 
 Let $M\neq 0$ be a non-negative matrix, that is  $M \in \bR_{\geq 0}^{m\times n}$. Then there exists
 $N\in \bR_{\geq 0}^{2m\times n}$ such that $N$ is a stochastic matrix\footnote{$N$ is a stochastic matrix iff $N$ is a non-negative matrix, and each column sums to $1$.}, and $\frac{M}{\|M\|_1}$ is the first $m$ rows of~$N$.  
\end{lemma}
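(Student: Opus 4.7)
\textbf{Proof proposal for Lemma \ref{lem:ext}.} The plan is to give an explicit construction of $N$ and verify the stochastic property column by column. Define $c_j := \sum_{i=1}^m \frac{M_{ij}}{\|M\|_1}$ for each column $j \in [n]$. By the definition of the matrix $1$-norm, $\|M\|_1 = \max_j \sum_i M_{ij}$, so every $c_j$ lies in $[0,1]$. Hence each column of $\frac{M}{\|M\|_1}$ already has non-negative entries summing to a value $c_j \leq 1$, and we need only to add $m$ further rows that supply the missing mass $1 - c_j \geq 0$ in each column without introducing negative entries.

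The construction is then immediate: set
\begin{equation*}
N_{ij} := \begin{cases} \dfrac{M_{ij}}{\|M\|_1}, & 1 \leq i \leq m,\\[4pt] \dfrac{1 - c_j}{m}, & m+1 \leq i \leq 2m, \end{cases}
\end{equation*}
for all $j \in [n]$. The first $m$ rows of $N$ are exactly $\frac{M}{\|M\|_1}$ by design. Non-negativity of every entry follows from $M \geq 0$ and $c_j \leq 1$. Each column of $N$ sums to $c_j + m \cdot \frac{1-c_j}{m} = 1$, so $N$ is a stochastic matrix. This completes the verification.

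There is essentially no obstacle here; the lemma is just a book-keeping fact about non-negative matrices, and the only nontrivial input is the identity $\|M\|_1 = \max_j \sum_i M_{ij}$ recalled in Section~\ref{sec:Notation}, which guarantees $c_j \leq 1$. If one prefers a more symmetric or sparser extension, one could alternatively place all of the slack $1 - c_j$ in a single designated row (with the remaining $m-1$ rows set to zero); any such choice works, and the only role of this lemma in the sequel is presumably to interpret column-normalized non-negative matrices as transition matrices of a random walk, which will drive the $\BPP$ algorithm in the rest of the proof of Theorem~\ref{thm:BPP}.
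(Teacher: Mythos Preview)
Your proof is correct and follows exactly the same approach as the paper: both use the identity $\|M\|_1=\max_j\sum_i M_{ij}$ to conclude that each column of $M/\|M\|_1$ sums to at most $1$, so the missing mass can be placed in the additional $m$ rows. The paper's proof simply asserts this embedding exists without writing down an explicit choice for the extra rows, whereas you spell one out.
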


\begin{proof}
	Since  
 $\|M\|_1 = \max_{1\leq j\leq n} \sum_{i=1}^m |M_{ij}|,$ thus for each column, the  column sum 
	 of $\frac{M}{\|M\|_1}$ lies in $[0,1]$, thus one can  embed $\frac{M}{\|M\|_1}$ as the first $m$ rows in a stochastic matrix~$N$.
\end{proof}

Before we state the $\BPP$ algorithm, we recall some notations. 
Recall that we have defined  $F_i,J_i,K_i,L_i, A[i],M^{[v_i]K_iL_i}$  in Section \ref{sec:SWA} and Figure \ref{fig:contraction}. To ease notation, we abbreviate 
$M^{[v_i]K_iL_i}$ as $M^{[v_i]}$.  
Since we are working with positive tensor network, 
 $M^{[v_i]}$  is a non-negative matrix and the entries are indexed by $K_i,L_i$.
  
  Lemma \ref{lem:ext} says that we can embed $M^{[v_i]}/\|M^{[v_i]}\|_1$ in a stochastic matrix $N^{[v_i]}$ by adding one ancillary bit, that is $N^{[v_i]}$ is indexed by $L_i\cup\{w_i\}$ and $K_i$, where $w_i\in\{0,1\}$ is an index such that $w_i=0$ (or 1) refers to the first (or second) $d^{|L_i|}$ rows of $N^{[v_i]}$. Define 
 \begin{align}
 	W_i \coloneqq  \{w_1,\ldots ,w_{i-1}\}.
 \end{align}

We first explain the high level idea of the $\BPP$ algorithm, and then give the pseudo code.
The idea of the $\BPP$ algorithm in Theorem \ref{thm:BPP}  is to mimic  non-negative matrix multiplication by stochastic methods. 
From a high level idea, we will embed the vector $|A[i]\rangle$ in a probability distribution, and embed the matrix $M^{[v_i]}$ in a stochastic matrix  $N^{[v_i]}$ by adding an ancillary bit. Then the update rule
	\begin{align}
 		\left| A[i+1]\right\rangle = I_{J_i}\otimes M^{[v_i]K_iL_i} \ket{A[i]},
 	\end{align}
 	is embedded in applying the stochastic matrix $N^{[v_i]}$ to the distribution $A[i]$, which can be simulated by a  random walk.

 Before writing down the pseudo codes we explain some notations.
 \begin{itemize}
     \item We use $\ket{k}_K$ to represent the (ordered) coloring $k\in[d]^{ |K|}$ of the edges in $K$. 
     When $K=\emptyset$ we view $\ket{k}_K$ as empty, that is writing down nothing.  Similarly for $\ket{j}_J,\ket{w}_W$.
     \item We will use $s_i$ to denote a computational basis whose distribution embed the
 vector $A[i-1]$.
 \item Recall that the free edges of $A[i-1]$ are $F_i=K_i\cup J_i$.  Also recall that $F_{i+1} = J_i\cup L_i$. 
 \end{itemize}

  The pseudo codes are stated as  Algorithm \ref{alg:trial} and Algorithm \ref{alg:BPP}.  Their performance is given in Corollary \ref{cor:BPP}. 
	 


\begin{algorithm}
\caption{Trial(T=T(G,M))}\label{alg:trial}
\begin{algorithmic}[1]
\State $i=1$, $s_1=\emptyset$.
\While{$i\leq n$}
	\State Interpret $s_{i} = \ket{k}_{K_i}\ket{j}_{J_{i}}\ket{w}_{W_{i}}$.
	\Comment{$ K_1=  J_1=W_1=\emptyset.$}
	\State Recall that $N^{[v_i]}$ has rows indexed by $L_i\cup \{w_{i}\}$, column indexed by $K_i$.
	\State Since $N^{[v_i]}$ is a stochastic matrix, each column of  $N^{[v_i]}$ corresponds to a distribution over the row index.
 \If{i=1}
 \State $N^{[v_1]}$ is a column vector. Denote the distribution according to  this  column as $\cD$.
 \Else{}
 \State Denote the distribution according to  the $k$-th column of $N^{[v_i]}$ as $\cD$.
 \EndIf
	\State Sample a row 
	index according to $\cD$. \label{line:11} \State Denote this row index as $\ket{lw'}_{L_i\cup \{w_{i}\}}$, where $l \in  [d]^{|L_i|},w'\in\{0,1\}.$
	\State Set $s_{i+1}\leftarrow \ket{l}_{L_i}\ket{j}_{J_{i}}\ket{ww'}_{W_{i+1}}$.
	\Comment{ $N^{[v_i]}$ maps register $K_i$ to $L_i\cup \{w_i\}$.}
	\State $i\leftarrow i+1$.
\EndWhile
\State Interpret  $s_{n+1} = \ket{w}_{W_{n+1}}$
\Comment{We know $L_{n}=J_{n}=\emptyset$}
\State Return Success if $w=00\ldots 00$, that is the all zero state; otherwise return Fail.
\end{algorithmic}
\end{algorithm}

\begin{algorithm}
\caption{Approximating positive tensor network}\label{alg:BPP}
\begin{algorithmic}[1]
\State Set $K=10 \epsilon^{-2}$
\State Run the  $Trial(T)$, that is Algorithm \ref{alg:trial}, for $K$ times. 
\State Count the number of Success as $\# Success$.
\State Return $\hat{\chi}(T)\leftarrow \frac{\# Success}{K} \cdot \Delta_1$ as the approximation of $\chi(T)$.
\end{algorithmic}
\end{algorithm}

\begin{lemma}\label{lem:trial}
	The probability of the Trial algorithm (Algorithm \ref{alg:trial}) to return ``Success'' is $\chi(T)/\Delta_1$ where  $\Delta_1\coloneqq  \prod_{i=1}^n \|O^{[v_i]}\|_1$. Its runtime is $poly(n)$.
\end{lemma}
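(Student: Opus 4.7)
The plan is to show that for each edge coloring $c \in [d]^{|E|}$ of the tensor network, there is a \emph{unique} trajectory of the Trial algorithm that both (i) is consistent with $c$ on all the ``physical'' registers $L_i$ and (ii) places all auxiliary bits to $0$. I will then show that this trajectory occurs with probability exactly $\frac{1}{\Delta_1}\prod_{v} M^{[v]}_c$, so that summing over $c$ gives $\chi(T)/\Delta_1$ by the definition of the contraction value.

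First, I would make the bookkeeping precise. Given any coloring $c$, let $c|_{K_i}$, $c|_{J_i}$, $c|_{L_i}$ denote its restrictions to the edge sets defined in Section~\ref{sec:SWA}. Observe that $c|_{K_i}$ and $c|_{J_i}$ together determine $c|_{F_i}$, which in turn, once combined with $c|_{L_i}$, determines $c|_{F_{i+1}}$ since $F_{i+1} = J_i \cup L_i$; this is precisely the register relabeling performed in the algorithm. Hence, once $c$ is fixed, the ``consistent trajectory'' is deterministic up to the auxiliary bits. I would then observe that at step $i$ the algorithm samples the row index $(l, w')$ from the column of $N^{[v_i]}$ indexed by the current $K_i$-register, which equals $c|_{K_i}$ along the consistent trajectory. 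By the construction of $N^{[v_i]}$ via Lemma~\ref{lem:ext}, the probability of sampling the specific row $(l, w') = (c|_{L_i}, 0)$ is
\begin{equation*}
N^{[v_i]}_{(c|_{L_i},0),\, c|_{K_i}} \;=\; \frac{M^{[v_i]}_{c|_{L_i},\, c|_{K_i}}}{\|M^{[v_i]}\|_1} \;=\; \frac{M^{[v_i]}_c}{\|O^{[v_i]}\|_1},
\end{equation*}
using that $\|O^{[v_i]}\|_1 = \|M^{[v_i]K_iL_i}\|_1$ and that the entry of $M^{[v_i]}$ read off by the coloring of its four adjacent edges is $M^{[v_i]}_c$.

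By the Markov property of the sampling (each step only depends on the current state $s_i$), the probability of the full consistent-and-successful trajectory for coloring $c$ is the product of these per-step probabilities, which telescopes to $\frac{1}{\Delta_1}\prod_v M^{[v]}_c$. At the last step, $L_n = J_n = \emptyset$, so $s_{n+1}$ only stores the $W_{n+1}$ register, and ``Success'' is exactly the event that all auxiliary bits equal zero. Since distinct colorings yield disjoint trajectories (they differ in the sampled $L_i$-values at some step), I can sum over $c$ to get
\begin{equation*}
\Pr[\text{Success}] \;=\; \sum_{c \in [d]^{|E|}} \frac{1}{\Delta_1}\prod_{v \in V} M^{[v]}_c \;=\; \frac{\chi(T)}{\Delta_1},
\end{equation*}
by the definition \eqref{eq:TN}.

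For the runtime, I would note that the graph has constant degree $\kappa$, so $|L_i| \leq \kappa$ and the stochastic matrix $N^{[v_i]}$ has at most $2d^{\kappa}$ rows. Each iteration requires (a) computing $\|M^{[v_i]}\|_1$ and the ancillary row of $N^{[v_i]}$, which costs $O(d^\kappa)$, and (b) sampling a row from a distribution of size $2d^\kappa$, which costs $O(d^\kappa)$. Since $d = O(\mathrm{poly}(n))$ and there are $n$ iterations, the total runtime is $\mathrm{poly}(n)$.

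The main obstacle, if any, is simply verifying that the index relabeling between steps correctly corresponds to partitioning $F_{i+1} = J_i \cup L_i = K_{i+1} \cup J_{i+1}$, so that the ``coloring sampled at step $i$'' on edges of $L_i$ really does appear as part of $K_{i'}$ at the later step $i' > i$ when that edge is consumed---i.e., that summing over sampled $l_i$'s at each step is exactly the same as summing over all edge colorings $c$. This is a purely combinatorial check on the swallowing ordering.
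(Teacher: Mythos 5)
Your proof is correct and takes essentially the same route as the paper: both identify the ``Success'' probability by tracking the stochastic process generated by the matrices $N^{[v_i]}$ and noting that the all-zero ancilla sector reproduces the normalized product $\prod_i O^{[v_i]}/\|O^{[v_i]}\|_1 = \chi(T)/\Delta_1$. The only difference is one of presentation: the paper's proof works at the level of the probability vectors $\ket{\pi_i}$ and unrolls the recursion $\ket{\pi_{i+1}} = (I_{J_i}\otimes N^{[v_i]})\ket{\pi_i}$ as a single matrix product, then reads off the coefficient of $\ket{0\cdots0}_{W_{n+1}}$ via Eq.~\eqref{eq:chi_A}; you instead expand that matrix product coordinate-wise into a sum over trajectories, establish a bijection between successful trajectories and edge colorings $c\in[d]^{|E|}$, and compute each trajectory's probability directly. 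This trajectory-level bookkeeping is a legitimate and equivalent unpacking of the matrix-algebraic version, and the ``combinatorial check'' you flag (that the $L_i$ partition $E$ and that the $K_i, J_i$ parts of each $s_i$ are determined by earlier $L_j$ samples, hence consistency with $c$ propagates) is exactly what the index accounting in Section~\ref{sec:SWA} guarantees. Your runtime argument also matches the paper's.
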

\begin{proof}
	To ease notation, for a set $Q\cup W$ where $Q$ is a set of edges of $G$, and $W$ is a set of ancillary indices $\{\ldots ,w_i,\ldots \}$, we use 
 $$String(Q\cup W) \coloneqq   [d]^{ |Q|} \times \{0,1\}^{ |W|} $$ where edges/indices in $Q$ take values in $[d]$, and ancillary indices in $W$ takes values in $\{0,1\}$.
	
	Define 
 $$S_i\coloneqq  String ( L_i \cup J_{i} \cup W_{i+1}),$$ define $\ket{\pi_{i+1}}$ as the probability distribution of $s_{i+1}$, which is a probability distribution over $S_i$. Note that
	\begin{align}
 \ket{\pi_2} &= I_{J_1}\otimes N^{[v_1]}, \text{ where }I_{J_1}=I_{\emptyset}=1 \\ 
			\ket{\pi_{i+1}} &= I_{J_i}\otimes N^{[v_i]} \ket{\pi_i} \nonumber\\
 &=  \prod_{h=1}^i I_{J_h}\otimes N^{[v_h]} \label{eq:update_BPP} \\
			 &=\prod_{h=1}^i \frac{I_{J_h}\otimes M^{[v_h]}}{\|M^{[v_h]}\|_1} \ket{0\ldots 0}_{W_{i+1}}  + \text{additional terms}   \label{eq:NV} \\
			 &=\prod_{h=1}^i \frac{O^{[v_h]}}{\|O^{[v_h]}\|_1} \ket{0\ldots 0}_{W_{i+1}}  + \text{additional terms}
	\end{align}
 where Eq.~\eqref{eq:update_BPP} is from Algorithm \ref{alg:trial}; Eq.~(\ref{eq:NV}) is from the definition of $N^{[v_h]}$ that $N^{[v_h]}$ embeds $M^{[v_h]}$; and
	 the last equality comes from definition of $O^{[v_h]}$,
	\begin{align}
		&O^{[v_h]} = I_{J_h}\otimes M^{[v_h]},\\
		&\|O^{[v_h]}\|_1=\|M^{[v]}\|_1.
	\end{align}
From Eq.~(\ref{eq:chi_A})	we conclude
	\begin{align}
	\label{eq:final outcome sampling}
			\ket{\pi_{n+1}}
			 &=\frac{\chi(T)}{ \Delta_1} \ket{0\ldots 0}_{W_{n+1}}  + \text{additional terms} 
	\end{align}
To prove the runtime is $poly(n)$, notice that in Theorem \ref{thm:BPP} we assume that $G$ is a graph of constant degree, thus $|L_i|,|K_i|$ are constants, thus $d^{|L_i|},d^{|K_i|}$ are $poly(n)$ whenever $d = poly(n)$  and 
$N^{[v_i]}$ is a matrix of size $poly(n)\times poly(n)$. Thus Line \ref{line:11} in Algorithm \ref{alg:trial} can be done efficiently.
\end{proof}

\begin{corollary}\label{cor:BPP}
	For $t=poly(n)$ sufficiently large,  the output $\hat{\chi}(T)$ in Algorithm \ref{alg:BPP} satisfies
		  \begin{align}
	&Pr\left(|\chi(T)-\hat{\chi}(T)|\leq \epsilon \Delta_1 \right) \geq 3/4
	\end{align}
\end{corollary}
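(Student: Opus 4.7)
The plan is a standard concentration-of-measure argument built directly on top of Lemma \ref{lem:trial}. By that lemma, a single run of $\mathrm{Trial}(T)$ is a Bernoulli trial with success probability $p \coloneqq \chi(T)/\Delta_1 \in [0,1]$. Algorithm \ref{alg:BPP} performs $K = 10\epsilon^{-2}$ independent such trials (the independence comes from independently resampling the random choices in Line \ref{line:11} each time), so $\# \mathrm{Success}$ is $\mathrm{Binomial}(K,p)$, and the estimator $\hat{p} \coloneqq \# \mathrm{Success}/K$ satisfies $E[\hat{p}] = p$ and $\mathrm{Var}(\hat p) = p(1-p)/K \leq 1/(4K)$.

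Next I would relate the empirical estimate to the quantity we actually want. By construction $\hat{\chi}(T) = \hat{p}\, \Delta_1$ and $\chi(T) = p\, \Delta_1$, so
\begin{align}
|\hat{\chi}(T)-\chi(T)| = \Delta_1\,|\hat p - p|.
\end{align}
Hence it suffices to show $\Pr(|\hat p - p|\leq \epsilon)\geq 3/4$. Chebyshev's inequality gives
\begin{align}
\Pr\bigl(|\hat p - p|\geq \epsilon\bigr) \leq \frac{\mathrm{Var}(\hat p)}{\epsilon^2} \leq \frac{1}{4K\epsilon^2} = \frac{1}{40} \leq \frac{1}{4},
\end{align}
where in the penultimate step I used $K = 10\epsilon^{-2}$. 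Combining with the previous display yields $\Pr(|\hat\chi(T)-\chi(T)|\leq \epsilon\Delta_1) \geq 39/40 \geq 3/4$, which is the desired bound. (A Hoeffding or Chernoff bound would work equally well and in fact give an exponentially small failure probability, but Chebyshev is already more than enough for the constant $3/4$ threshold.)

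Finally, I would verify the runtime. Lemma \ref{lem:trial} shows that a single trial runs in $\mathrm{poly}(n)$ time, using that the bond dimension $d$ and maximum degree of $G$ are at most polynomial and constant respectively, so sampling from each column of $N^{[v_i]}$ takes $\mathrm{poly}(n)$ time. Since $K = 10\epsilon^{-2} = \mathrm{poly}(n)$ as $\epsilon = 1/\mathrm{poly}(n)$, the total runtime of Algorithm \ref{alg:BPP} is $\mathrm{poly}(n)$. There is no real obstacle here: the conceptual work was all done in the embedding into a stochastic matrix via Lemma \ref{lem:ext} and the telescoping of the update rule culminating in Eq.~\eqref{eq:final outcome sampling}; the corollary itself is just a one-line concentration bound on a Bernoulli average.
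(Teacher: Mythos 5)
Your proof is correct and follows essentially the same route as the paper: invoke Lemma \ref{lem:trial} to identify each run of $\mathrm{Trial}(T)$ as a Bernoulli variable with success probability $\chi(T)/\Delta_1$, take the empirical mean over $K=10\epsilon^{-2}$ independent trials, and apply Chebyshev's inequality. The only difference is that you use the tight Bernoulli variance bound $p(1-p)\leq 1/4$, which yields failure probability $1/40$, whereas the paper contents itself with the cruder bound $\mathrm{Var}(X_i)\leq 2$ and failure probability $1/5$; both comfortably clear the $3/4$ threshold.
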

where $\Delta_1\coloneqq  \prod_{i=1}^n \|O^{[v_i]}\|_1.$

\begin{proof}
The proof directly follows from Eq.~\eqref{eq:final outcome sampling} and Chebyshev's inequality.
Write $X_i$ as the result of the $i$-th trial, where $X_i=1$ if the Trial algorithm (Algorithm~\ref{alg:trial}) returns Success and $=0$ otherwise. By Lemma \ref{lem:trial}, we have 
$$E [X_i]=\chi(T)/\Delta_1.$$
Besides, note that $|\chi(T)/\Delta_1|\leq 1$ since it corresponds to a probability, we have  
\begin{align}
	|Var(X_i)| &\leq E|X_i|^2 +| E [X_i] |^2\leq E|X_i|^2+ |\chi(T)/\Delta_1|^2\leq 2.
\end{align}

Define $X\coloneqq (X_1 +\ldots +X_K)/K$, we have 
\begin{align}
	&E(X)= 	\frac{\chi(T)}{\Delta_1},\\ 
	&Var(X) =\frac{1}{K}Var(X_1) \leq \frac{2}{K}
 \end{align}
 By definition of $\hat{\chi}(T)$ in Algorithm \ref{alg:BPP}, 
use Chebyshev's inequality we have
\begin{align}
	Pr\left(|\chi(T)-\hat{\chi}(T)|\geq \epsilon \Delta_1 \right) & = Pr\left(|\frac{\chi(T)}{\Delta_1}- X|\geq \epsilon \right)\\
	& \leq \frac{Var(X)}{\epsilon^2}\\
	& \leq  \frac{2}{K\epsilon^2}\\
	&= 1/5.
\end{align}
\end{proof}

\subsection{$\StoqMA$-$\hardness$ of multiplicative-error approximation}\label{sec:StoqMA}

In this section, we consider the task of approximating a  positive tensor network up to a multiplicative error. We show that this approximation is $\StoqMA$-$\hard$ up to exponentially close to 100\% error.

\begin{theorem}\label{thm:StoqMA}
Let $G=(V,E)$ be an $n$-vertex graph of constant degree. Let $T(G,M)$ be a positive tensor network on $G$ with bond dimension $d=poly(n)$.  Consider the following approximation problem: Given as inputs a tensor network $T(G,M)$ and a precision parameter $\epsilon \leq 1- \exp(-n)$, output a complex number $\hat{\chi}(T)$ such that 
	\begin{align}
		Pr\left( \left|\chi(T) - \hat{\chi}(T)\right| \leq \epsilon |\chi(T)|  \right)\geq 3/4.	
	\end{align}
	If there exists a $poly(n)$-time  randomized algorithm for solving the above approximation problem, then there exists a  $poly(n)$-time randomized  algorithm for solving $\StoqMA$ with probability greater than $3/4$.
\end{theorem}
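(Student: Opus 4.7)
My plan is to reduce from the canonical $\StoqMA$-complete problem, namely deciding the ground state energy of a $k$-local \emph{stoquastic} Hamiltonian $H=\sum_j h_j$ on $n$ qubits, where stoquastic means each local term has non-positive off-diagonal entries in the computational basis. After a scalar shift $H\mapsto H-aI$ (which preserves stoquasticity), the promise becomes $E_0(H)\leq 0$ in the yes case and $E_0(H)\geq \Delta$ in the no case, with $\Delta=1/\mathrm{poly}(n)$. I would encode this decision problem into the multiplicative approximation of the partition function $Z(\beta)=\mathrm{Tr}(e^{-\beta H})$, which satisfies $Z(\beta)\geq 1$ in the yes case and $Z(\beta)\leq 2^n e^{-\beta\Delta}$ in the no case. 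Taking $\beta$ a sufficiently large polynomial in $n/\Delta$ forces the yes/no ratio of $Z(\beta)$ to be super-exponential in $n$, so that even a multiplicative error $\epsilon=1-e^{-n}$, which places $\hat\chi(T)$ in $[e^{-n}\chi(T),(2-e^{-n})\chi(T)]$, can distinguish the two cases by thresholding.

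The central step is to exhibit a positive tensor network whose contraction value tracks $Z(\beta)$ well enough. Using a first-order Lie--Trotter--Suzuki decomposition I would set $\widetilde U\coloneqq\prod_j e^{-\beta h_j/L}$ for polynomially large $L$ and define $\widetilde Z\coloneqq\mathrm{Tr}(\widetilde U^{L})$. Each local factor $e^{-\beta h_j/L}$ has only non-negative computational-basis entries: writing $-h_j=N_j-c_j I$ with $N_j$ entrywise non-negative, one has $e^{-\beta h_j/L}=e^{-\beta c_j/L}\cdot e^{\beta N_j/L}$, and the second factor is non-negative as a convergent series of non-negative matrices. Inserting resolutions of identity between the factors and closing the trace yields a positive tensor network $T$ on a spacetime lattice with constant degree and bond dimension~$2$, satisfying $\chi(T)=\widetilde Z$. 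Identifying $\widetilde U\simeq e^{-\beta\widetilde H/L}$ via the Baker--Campbell--Hausdorff expansion with $\|\widetilde H-H\|=O(\beta m^2/L)$ where $m=\mathrm{poly}(n)$ is the number of terms, and choosing $L=\Theta(\beta m^2/\Delta)=\mathrm{poly}(n)$, one shifts the ground energy by at most $\Delta/4$. Consequently $\widetilde Z\geq e^{-\beta\Delta/4}$ in the yes case and $\widetilde Z\leq 2^{n}e^{-3\beta\Delta/4}$ in the no case, preserving a super-exponential yes/no ratio.

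The main technical obstacle is the effective-Hamiltonian step above: since $\widetilde U$ is not Hermitian, the clean bound $|E_0(\widetilde H)-E_0(H)|\leq\|\widetilde H-H\|$ must be imported with care, either by treating the anti-Hermitian part as a small perturbation of a Hermitian reference, or by directly comparing the spectra of $\widetilde U$ and $e^{-\beta H/L}$ via a Bauer--Fike-type perturbation theorem to obtain two-sided bounds on $\widetilde Z$. Once this is handled, invoking the hypothesised polynomial-time approximation algorithm on $T$ and thresholding $\hat\chi(T)$ decides the stoquastic local Hamiltonian problem in randomised polynomial time, placing every language in $\StoqMA$ inside $\BPP$ and establishing the claimed $\StoqMA$-hardness; the statement that any $\epsilon\leq 1-\exp(-n)$ suffices follows because enlarging $\beta$ (still polynomially) can be used to absorb any such $\epsilon$.
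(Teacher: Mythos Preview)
Your approach is genuinely different from the paper's. The paper works directly with the verifier-based definition of $\StoqMA$: for a stoquastic verifier $V$ it forms the acceptance operator $M_x=\langle\phi|U^\dagger\Pi_{out}U|\phi\rangle$ on the witness register and represents $\mathrm{tr}(M_x^k)$ as a tensor network by concatenating $k$ copies of the circuit. Since the allowed gates are $\mathrm{X}$, $\mathrm{CNOT}$, $\mathrm{Toffoli}$ and the ancillas are $|0\rangle,|+\rangle$, every tensor entry is automatically non-negative and the network \emph{exactly} equals $\mathrm{tr}(M_x^k)$; amplification to absorb $\epsilon=1-e^{-n}$ is achieved simply by taking $k=\mathrm{poly}(n)$ large enough that $b^k(1-\epsilon)>2^{n_w}a^k(1+\epsilon)$. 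No approximation or perturbation analysis is needed. Your route via the stoquastic local-Hamiltonian problem and a Trotterized partition function is the natural ``world-line'' alternative, and the positivity argument for each factor $e^{-\tau h_j}$ is correct.

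The obstacle you flag is real, and with first-order Trotter it is not merely cosmetic. Bauer--Fike (with $H$ as the normal reference) gives that every eigenvalue of the effective generator $\widetilde H$ lies within $\delta$ of the spectrum of $H$, which cleanly yields the upper bound $|\widetilde Z|\le 2^n e^{-\beta(\Delta-\delta)}$ in the no case. But in the yes case you need a \emph{lower} bound on $\widetilde Z$, and since $\widetilde U=\prod_j e^{-\tau h_j}$ is not Hermitian, complex eigenvalues of $\widetilde U^L$ can partially cancel in the trace; non-negativity of $\widetilde U^L$ entrywise only gives $\widetilde Z\ge 0$, not $\widetilde Z\gtrsim e^{-\beta E_0}$. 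The clean fix is to replace first-order Trotter by the symmetric splitting $S(\tau)=A A^\dagger$ with $A=\prod_{j=1}^{m}e^{-\tau h_j/2}$: then $S(\tau)$ is Hermitian PSD (and still entrywise non-negative), its eigenvalues are perturbations of those of $e^{-\tau H}$ by Weyl's inequality applied to $\|S(\tau)-e^{-\tau H}\|=O(\tau^2)$, and both the upper and lower bounds on $\widetilde Z=\mathrm{Tr}(S(\tau)^L)$ follow immediately. With that modification your argument goes through; the paper's verifier-based construction simply sidesteps the whole issue by never introducing an approximation.
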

Note that $\epsilon\leq 1-\exp(-n)$ means we allow very large (close to 100\%) multiplicative error.
Recall that $\StoqMA$ is a subclass of $\QMA$  which is related to  deciding ground energy  for stoquastic Hamiltonians~\cite{bravyi2006merlin}. 
For our purpose, we use an equivalent definition of $\StoqMA$ that makes use of the notion of a stoquastic verifier.

\begin{definition}[$\StoqMA$,  from \cite{bravyi2006merlin}]\label{def:StoqMA}
	A stoquastic verifier is a tuple $V=(n,n_w,n_0,n_+,U)$, where
 \begin{itemize}
     \item $n$ is the number of input bits, $n_w$ is the number of input witness qubits.
     \item  $n_0$ is the number of input ancillas $\ket{0}$, $n_+$ is the number of input ancillas $\ket{+}$.
     \item $U$ is a quantum circuit on $n+n_w+n_0+n_+$ qubits with $\rm{X}$, $\rm{CNOT}$, and $\rm{Toffoli}$ gates. 
 \end{itemize}
   The acceptance probability of a stoquastic verifier $V$ on input string $x\in\Sigma^n$ and witness state $\ket{\psi}\in (\bC^2)^{n_w}$ is defined as 
	\begin{align}
		& Pr(V;x,\psi) = \langle \psi_{in} |U^\dagger \Pi_{out} U |\psi_{in}\rangle,\\
	\text{where }	& \ket{\psi_{in}} = \ket{x}\otimes \ket{\psi} \otimes \ket{0}^{\otimes n_0} \otimes \ket{+}^{\otimes n_+},\\
		&\Pi_{out} = \ket{+}\bra{+}_1\otimes I_{else}. 
	\end{align}
	A promise problem $L=L_{yes}\cup L_{no}\subseteq \Sigma^*$ belongs to $\StoqMA$ iff there exists a uniform family of stoquastic verifier $V$ which uses at most $poly(n)$ qubits and gates, and obeys the following:
\begin{itemize}
	\item Completeness. If $x\in L_{yes}$, then there exists $\ket{\psi}$ such that $Pr(V;x,\psi)\geq b$;
	\item Soundness. If $x\in L_{no}$, then for any $\ket{\psi}$ we have $Pr(V;x,\psi)\leq a$;
\end{itemize}
where $0\leq a < b \leq 1$ and $b-a\geq 1/poly(n)$.
\end{definition}

\begin{figure}[H]
    \centering
\includegraphics[width=1\textwidth]{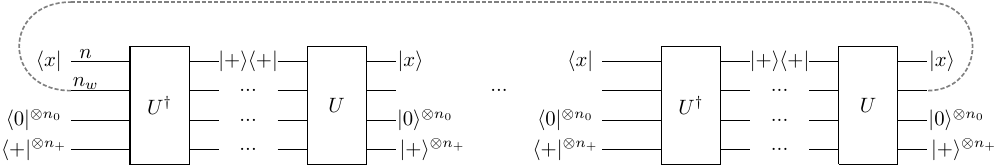}
    \caption{The above figure contains  two copies of $M_x$. When connecting the right side of the first $M_x$ and the left side of the second $M_x$, we get the
 operator $M_x^2$. Further connecting the left side of the first $M_x$  and the right side of the second $M_x$ by the dashed line, we get $tr(M_x^2)$.}
    \label{fig:StoqMA}
\end{figure}

The proof of Theorem \ref{thm:StoqMA} is adapted from the folklore proof of $\QMA\subseteq \PP$, where the adaption is mainly translating matrix operations (multiplication, trace, etc) to  tensor network operations. We only give a proof sketch here.

\begin{proof}[Proof of Theorem \ref{thm:StoqMA}]
In this proof we use the notions in Definition \ref{def:StoqMA}. 
Consider a language $L=L_{yes}\cup L_{no}$ in $\StoqMA$ with stoquastic verifier $V$ in Definition \ref{def:StoqMA}. For input $x$, as pictured in Figure \ref{fig:StoqMA} we
define a positive semi-definite Hermitian operator acting on $n_w$ qubits
as
\begin{align}
		&M_x \coloneqq  \langle \phi |U^\dagger\Pi_{out} U  |\phi\rangle,\\
	\text{where }	&|\phi\rangle = \ket{x} \otimes \ket{0}^{\otimes n_0}\otimes \ket{+}^{\otimes n_+}.
\end{align}
Denote the maximum eigenvalue of $M_x$ as $\lambda_{\max}(x)$.
By assumption we have that 
\begin{itemize}
	\item If $x\in L_{yes}$, then $\lambda_{\max}(x)\geq b.$
	\item If $x\in L_{No}$, then $\lambda_{\max}(x)\leq a.$
\end{itemize}
Since $M_x$ is positive semi-definite, then for any $k\in \bN$ we have
\begin{itemize}
	\item If $x\in L_{yes}$, then $tr(M_x^k) \geq b^k$.
	\item If $x\in L_{No}$, then $tr(M_x^k) \leq  2^{n_w} a^k$.
\end{itemize}
Recall that $\epsilon\leq 1- \exp(-n)$ is the precision parameter. By the assumption in Theorem \ref{thm:StoqMA}, there is a $poly(n)$-time randomized algorithm $\cA$ such that  with  probability at least $3/4$, the algorithm returns
\begin{align}
	(1-\epsilon) |\chi(T)| \leq |\hat{\chi}(T)| \leq (1+\epsilon) 	|\chi(T)|.
\end{align}
To  distinguish the yes and no cases,
 set 
 $$k\geq \left( n_w\ln 2 +  \ln \frac{1+\epsilon}{1-\epsilon} \right) / \ln \frac{b}{a} =poly(n),$$
 We have
\begin{align}
		2^{n_w}a^k (1+\epsilon) < b^k (1-\epsilon).
\end{align}
Thus  one can distinguish whether $x\in L_{yes}$ or $x\in L_{no}$ by approximating $tr(M_x^k)$  using the algorithm $\cA$.

It remains to explain  that $tr(M_x^k)$ can be represented by a positive tensor network $T=T(G,M)$ with $poly(n)$ bond dimension,  where $G$ is a  $poly(n)$-vertex graph of constant degree. 

First notice that 
similarly as Section \ref{appendix:BPPhard} or Section 4.2 in \cite{arad2010quantum},  one can naturally represent $M_x$ as a  tensor network $T=T(G,M)$ with $poly(n)$ bond dimension. Since the gates in $U$  are $\rm{X}$, $\rm{CNOT}$, and $\rm{Toffoli}$, and the ancillas are computational basis or $\ket{+}$, one can check that $T(G,M)$ is a positive tensor network. Further, since $U$ has $poly(n)$ gates and each gate  has constant number of input qubits and output qubits, we have that $G$ is a $poly(n)$-vertex graph of constant degree.  

To represent $tr(M_x^k)$ as a tensor network, as in Figure \ref{fig:StoqMA},
it suffices to additionally notice that
\begin{itemize}
\item The tensor network for the operator $M_x^2$ can be represented by putting $2$ copies of $M_x$ in a line, then connecting the right side of the  first $M_x$ and the left side of the second $M_x$, that is 
 contracting the free edges w.r.t register $n_w$ for the first and second copy. 
 \item Similarly as explained in Eq.~(\ref{eq:trace}) in Section \ref{sec:Notation}, if we further connect the left side of the first $M_x$  and the right side of the second $M_x$ by the dash line, we get $tr(M_x^2)$.
	\item The tensor network for the operator $tr(M_x^k)$ can be represented similarly. That is  putting $k$ copies of $M_x$ in a line, and then contracting the free edges w.r.t register $n_w$ sequentially.
\end{itemize}
\end{proof}

\section{Acknowledgements}
We thank Garnet Chan and  Zeph Landau for helpful discussions.
Part of this work was conducted while the authors were visiting the Simons Institute for the Theory of Computing during summer 2023 and spring 2024, supported by DOE QSA grant $\#$FP00010905.
D.H.\ acknowledges financial support from the US DoD through a QuICS Hartree fellowship.
N.S.\  acknowledges financial support by the Austrian Science Fund FWF (Grant DOIs \href{https://doi.org/10.55776/COE1}{10.55776/COE1}  and \href{https://doi.org/10.55776/F71}{10.55776/F71}) and the European Union’s Horizon 2020 research and innovation programme through Grant No.\ 863476 (ERC-CoG SEQUAM).
Jiaqing Jiang is supported by MURI Grant FA9550-18-1-0161 and the IQIM, an     NSF Physics Frontiers Center (NSF Grant PHY-1125565). Jielun Chen is supported by the US National Science Foundation under grant CHE-2102505.

\appendix

\section{$\#\PT$-$\hardness$ of exactly contracting random 2D tensor networks}\label{appendix:exact-case}

Here we prove that the exact contraction of  the random 2D tensor network with a positive mean remains $\#\PT$-$\hard$. 

Firstly we prove some properties of standard Gaussian distribution, while the finite precision Gaussian distribution behaves similarly up to $O(\exp(-n))$ derivation in the error bounds.
Recall that we use $X\sim \cN_{\bC}(\mu,\sigma^2)$ (or $X\sim \cN_{\bR}(\mu,\sigma^2)$) to denote that the random variable $X$ is sampled from  the complex (or real) Gaussian distribution with mean $\mu$ and standard derivation $\sigma$. We use $\vec{X}=(X_1,X_2,\ldots ,X_m)\sim \prod_{i=1}^m \cN_{\bC}(\mu_i,\sigma^2)$ to denote the random variable $\vec{X}$ where each $X_i$ is independently sampled from $\cN_{\bC}(\mu_i,\sigma^2)$. When $\mu_i =\mu, \forall i$, we abbreviate the notation as $\vec{X}\sim \cN_{\bC}(\mu,\sigma^2)^m$.
For two distribution $\cD_1,\cD_2$, we use $\|\cD_1-\cD_2\|$ to denote the total variation distance.

\begin{lemma}[Analogy of Lemma 5 in \cite{haferkamp2020contracting}]\footnote{There is a remark on the notation difference.  \cite{haferkamp2020contracting}  uses $\cN_\bC(\mu,\sigma)$ to denote Gaussian distribution with mean value $\mu$ and standard derivation $\sigma$. In this manuscript we denote this distribution as $\cN_{\bC}(\mu,\sigma^2)$ which is the more standard notation.} \label{lem:error}   
For $\mu_i\in\bC$. It holds that
\begin{align}
    &\|\cN_{\bC}(\mu,(1-\epsilon)^2\sigma^2)^m - \cN_{\bC}(\mu,\sigma^2)^m \| \leq 4m\epsilon\\
    &\|\prod_{i=1}^m \cN_{\bC}(\mu_i,\sigma^2)- \cN_{\bC}(\mu,\sigma^2)^m\| \leq \frac{2}{\sigma} \left(|\mu_1-\mu|+\ldots +|\mu_m-\mu|\right) 
\end{align}
\end{lemma}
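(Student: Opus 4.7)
The plan is to reduce both bounds to the one-dimensional ($m=1$) case by using the standard subadditivity of total variation distance for product measures,
\begin{align}
\Bigl\|\prod_{i=1}^m P_i - \prod_{i=1}^m Q_i\Bigr\| \le \sum_{i=1}^m \|P_i - Q_i\|,
\end{align}
and the fact that a single complex Gaussian $\cN_{\bC}(\mu,\sigma^2)$ factors as the product of two independent real Gaussians $\cN_{\bR}(\Re(\mu),\sigma^2/2)\times \cN_{\bR}(\Im(\mu),\sigma^2/2)$. Together these reduce each of the two inequalities to bounding the TV distance between a pair of one-dimensional real Gaussians, with $m$ and the complex-to-real doubling contributing an overall combinatorial factor.

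For the first inequality I need a bound of the form $\|\cN_{\bR}(\nu,(1-\epsilon)^2 s^2) - \cN_{\bR}(\nu,s^2)\|=O(\epsilon)$ whenever two real Gaussians share a mean and their standard deviations differ by a factor $1-\epsilon$. The most streamlined route is via the Hellinger distance, for which there is the closed form
\begin{align}
H^2\bigl(\cN_{\bR}(\nu,s_1^2),\cN_{\bR}(\nu,s_2^2)\bigr) = 1 - \sqrt{\tfrac{2s_1 s_2}{s_1^2+s_2^2}},
\end{align}
combined with the inequality $\|P-Q\|\le \sqrt{2}\,H(P,Q)$. Expanding at $\epsilon=0$ with $s_1=(1-\epsilon)s_2$ gives $H^2=O(\epsilon^2)$, hence a bound of the form $c\epsilon$ per real component; multiplying by the factor of two coming from splitting the complex Gaussian into its real and imaginary parts and then by $m$ via subadditivity yields the stated bound $4m\epsilon$ after absorbing constants. (As an alternative one can integrate $|f_{s_1}(x)-f_{s_2}(x)|$ directly: the densities cross at $\pm x_0$ with $x_0^2 = 2s_1^2 s_2^2\log(s_2/s_1)/(s_2^2-s_1^2)$, after which the TV integral is elementary.)

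For the second inequality I will use the well-known closed form
\begin{align}
\|\cN_{\bR}(a,s^2) - \cN_{\bR}(b,s^2)\| = 2\Phi\!\left(\tfrac{|a-b|}{2s}\right)-1 \;\le\; \tfrac{|a-b|}{s\sqrt{2\pi}},
\end{align}
which follows because the two densities cross exactly at $(a+b)/2$, so the TV integral can be evaluated in closed form. Applying this to the real and imaginary parts of $\mu_i-\mu$, each with standard deviation $\sigma/\sqrt{2}$, gives a per-index bound of at most $2|\mu_i-\mu|/\sigma$; summing over $i$ via subadditivity then yields the claim. The main obstacle is nothing conceptual but simply the bookkeeping of constants through the complex-to-real reduction; since the statement is presented as an analog of Lemma~5 in~\cite{haferkamp2020contracting}, the proof will closely mirror the real-Gaussian argument given there, with the finite-precision variant $\overline{\cN}_{\bC}$ introducing only an additional $O(\exp(-n))$ term that is dominated by the stated bounds.
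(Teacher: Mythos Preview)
Your proposal is correct, but it takes a different route from the paper. The paper's proof is a two-line reduction: it observes that shifting every coordinate by the same vector leaves the total variation distance unchanged, so
\[
\Bigl\|\prod_{i=1}^m \cN_{\bC}(\mu_i,\sigma_i^2)-\cN_{\bC}(\mu,\sigma^2)^m\Bigr\|
=\Bigl\|\prod_{i=1}^m \cN_{\bC}(\mu_i-\mu,\sigma_i^2)-\cN_{\bC}(0,\sigma^2)^m\Bigr\|,
\]
and then simply invokes Lemma~5 of~\cite{haferkamp2020contracting} for the zero-mean real case, picking up an extra factor of $2$ because a complex Gaussian is two independent real ones. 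No Hellinger distance, no explicit crossing-point computation, no per-coordinate subadditivity argument appears; the $m$-dependence and the concrete constants are inherited wholesale from the cited result.

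Your approach instead unpacks the bound from first principles: you reduce to $m=1$ by subadditivity of TV over product measures, split the complex Gaussian into its two real marginals, and then bound each one-dimensional real TV distance either via the Hellinger closed form (for the variance perturbation) or via the explicit error-function formula (for the mean shift). This is more self-contained and actually exhibits why the constants $4$ and $2/\sigma$ suffice, whereas the paper treats them as a black box. The trade-off is length and a bit of care with constants: your ``absorbing constants'' step in the Hellinger route is correct (one gets roughly $\sqrt{2}\,m\epsilon\le 4m\epsilon$ for moderate $\epsilon$, and the bound is vacuous once $4m\epsilon\ge 1$), but in a write-up you should make that inequality explicit rather than leaving it as an asymptotic remark. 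Your final comment about $\overline{\cN}_{\bC}$ is extraneous here, since the lemma as stated concerns the exact Gaussian; the finite-precision correction enters only when the lemma is later applied.
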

\begin{proof}
Recall that
for $\mu\in \bC$, we use $\Re(\mu),\Im(\mu)\in \bR$ for the real and imaginary part of $\mu$, that is $\mu= \Re(\mu) +\Im(\mu) i$. Besides, $X\sim \cN_{\bC}(\mu,\sigma^2)$ iff $\Re(X)\sim \cN_{\bR}(\Re(\mu),\frac{\sigma^2}{2}), \Im(X)\sim \cN_{\bR}(\Im(\mu),\frac{\sigma^2}{2})$.
 It suffices to notice that
    \begin{align}
          &\| \prod_{i=1}^m \cN_{\bC}(\mu_i,\sigma_i^2)^m - \cN_{\bC}(\mu,\sigma^2)^m\|\\ 
           & = \frac{1}{2}
\int_{(x_1,\ldots ,x_m)} \left| \prod_{i=1}^m  \frac{1}{ 2\pi (\sigma_i/\sqrt{2})^2} \exp^{-\frac{1}{2} |\frac{x_i-\mu_i}{\sigma_i/\sqrt{2}}|^2} -  
\prod_{i=1}^m  \frac{1}{2\pi (\sigma/2)^2} \exp^{-\frac{1}{2} |\frac{x_i-\mu}{\sigma/2} |^2} 
 \right|dx_1 \ldots  d x_m
\\
 & = \frac{1}{2}
\int_{(x_1,\ldots ,x_m)} \left| \prod_{i=1}^m  \frac{1}{2\pi (\sigma_i/\sqrt{2})^2} \exp^{-\frac{1}{2} |\frac{x_i-\mu_i+\mu}{\sigma_i/\sqrt{2}} |^2} -  
\prod_{i=1}^m  \frac{1}{2\pi (\sigma/2)^2} \exp^{-\frac{1}{2} |\frac{x_i}{\sigma/2}|^2} 
 \right|dx_1 \ldots  d x_m
\\
          &=\| \prod_{i=1}^m\cN_{\bC}(\mu_i-\mu,\sigma_i^2)^m - \cN_{\bC}(0,\sigma^2)^m\|
    \end{align}

   Thus by Lemma 5 in \cite{haferkamp2020contracting}, we have
    \begin{align}
        \| \cN_{\bC}(\mu,(1-\epsilon)^2\sigma^2)^m - \cN_{\bC}(\mu,\sigma^2)^m\|   &=   \| \cN_{\bC}(0,(1-\epsilon)^2\sigma^2)^m - \cN_{\bC}(0,\sigma^2)^m\| \\
       &\leq 2\times 2m\epsilon\\
        \|\prod_{i=1}^m \cN_{\bC}(\mu_i,\sigma^2)- \cN_{\bC}(\mu,\sigma^2)^m\| &= \|\prod_{i=1}^m \cN_{\bC}(\mu_i-\mu,\sigma^2)- \cN_{\bC}(0,\sigma^2)^m\|\\
        & \leq 2\times\frac{1}{\sigma}\left(|\mu_1-\mu|+\ldots +|\mu_m-\mu|\right)
    \end{align}
    where we add a $2\times$ since we are working with $\cN_{\bC}$ while Lemma 5 in \cite{haferkamp2020contracting} is with $\cN_{\bR}$.
     \end{proof}

\begin{proof}[Proof of Theorem \ref{thm:exactH}] The proof follows directly from  the  proof idea of Theorem 2 in \cite{haferkamp2020contracting}. Here we only give a proof sketch.

 Firstly 
\cite{haferkamp2020contracting,schuch2007computational}
showed that  
one can encode any $n$-variable boolean function  $f(x_1,\ldots ,x_n)$ into a projected entangled-pair states (PEPS) of $poly(n)$ vertices, which describes an un-normalized state $\ket{\psi}$, such that  computing $\langle \psi|\psi\rangle$ exactly is equivalent to  computing the value 
$$s(f)= |\{x\in\{0,1\}^n:f(x)=1\}|,$$
which  is $\#\PT$-$\Complete$. One can check that in this case $\langle \psi|\psi\rangle$ equals to the contraction value of a 2D\footnote{ $\langle \psi|\psi\rangle$ is a stack of two PEPS.  One can  transform it into a 2D tensor network by contracting the stack of two PEPS via free boundaries of the two PEPS.} tensor network of $poly(n)$ vertices, where  the 2D tensor network has bond dimension $d=O(poly(n))$, and 
every entry of the tensor network is bounded by a constant. 
one can further make the underlying 2D lattice for the 2D tensor network to  
have periodic boundary condition, by adding edges connecting boundaries and 
slightly modify the tensors near the boundary to make sure the contraction value remains invariant. 
Denote the final 2D lattice with periodic boundary condition
 as $G$. 
Denote the final  2D tensor network   which encodes the fixed boolean function $f$ as 
$$T\left(G,(P^{[v]})_v\right)$$ where $P^{[v]}$ is the tensor on vertex $v$. Note that $P^{[v]}$ has $d^4$ entries thus  $(P^{[v]})_v$ are described by in total $d^4 \times n$ entries.  For convinience, we assign an arbitrary order to those entries and denoted them as $\{p_i\}_{i=1}^{d^4 n}$. Recall that by construction we have $|p_i|\leq c$ for constant $c$.

 Theorem \ref{thm:exactH} is  proved by an argument of    average to worse case reduction  via interpolation.  Here we define the polynomial for the interpolation.
Set 
 \begin{align}
 	\epsilon &\coloneqq  \min\left\{\frac{1}{4 (c+\mu+1) d^4 n^3}\,,\,\frac{1}{2}\right\}.\\
 	k &= poly(n) \text{ be sufficiently large.} 	\\
 	\text{Let } &\text{$S=\{t_i\}_{i\in [k]}$ be the set of $k$ equidistant points in $[0,\epsilon]$.  }
 \end{align}
Recall that $0\leq u\leq poly(n)$ thus $\epsilon=1/poly(n)$.

We randomly sample a 2D $(\mu,n,d)$-Gaussian tensor network $T\left(G,(Q^{[v]})_v\right)$. 
 Let $t\in S$, define a new 2D tensor network $T(t)$, where for vertex $v$ the tensor $R(t)^{[v]}$ is defined as
 \begin{align}
 	 R(t)^{[v]}  \coloneqq  t  P^{[v]}  + (1-t)  Q^{[v]}.	\label{eq:159}
 \end{align}
Denote the exact contraction value of $T(t)$  as $q(t)$. Note that $q(t)$ is a degree-$n$ polynomial of $t$. Besides, from construction we know that computing $q(1)$ will solve $\#\PT$-$\Complete$ problem.

In the following, we show that if one can compute the exact contraction value of finite precision 2D Gaussian tensor network with high probability, then we can compute $q(1)$ with high probability by interpolation. More specifically,
For input $T(t)$, 
denote the value returned by the algorithm $\cA$ in Theorem \ref{thm:exactH} as $\cA(t)$. 

\textbf{($\romannumeral1$) First we prove that since $t$ is small, $\cA(t)$ is a good approximation of $q(t)$.} 
Specifically, define 
\begin{align}
	\mu_i= tp_i+ (1-t)u.
\end{align}
By Eq.~(\ref{eq:159}), we know that the entries of $\left( R(t)^{[v]} \right)_v$ are sampled from distribution
\begin{align}
	\cD \coloneqq  \prod_{i=1}^{d^4 n} \overline{\cN}_{\bC}(\mu_i, (1-t)^2).	
\end{align}
Since $\overline{\cN}_\bC$ approximates $\cN_\bC$ within exponential precision, we know that
\begin{align}
\|\cD - \overline{\cN}_{\bC}(\mu,1)^{d^4 n}\| & \leq O(\exp(-n)) + \|\prod_{i=1}^{d^4 n} \cN_{\bC}(\mu_i, (1-t)^2) - \cN_{\bC}(\mu,1)^{d^4 n}\| \label{eq:N_finite}
\end{align}
where by Lemma \ref{lem:error} we have
\begin{align}
	\|\prod_{i=1}^{d^4 n} \cN_{\bC}(\mu_i, (1-t)^2) - \cN_{\bC}(\mu,1)^{d^4 n}\| &\leq   \| \prod_{i=1}^{d^4 n} \cN_{\bC}(\mu_i, (1-t)^2) - \cN_{\bC}(\mu,(1-t)^2)^{d^4 n}  
	\| \nonumber\\
 &+ \|\cN_{\bC}(\mu,(1-t)^2)^{d^4 n} -  \cN_{\bC}(\mu,1)^{d^4 n}
	\| \nonumber\\
	&\leq  \frac{2}{(1-t)} \left( |\mu_1-\mu| + \ldots  |\mu_{d^4 n}-\mu|  \right) + 4 \cdot d^4 n \cdot t, \nonumber\\
		& \leq 4 \cdot d^4n \cdot (c+\mu) \epsilon + 4 \cdot d^4 n\cdot \epsilon,\label{eq:aa}\\
	&\leq 4\cdot d^4 n\cdot (c+\mu+1)\epsilon \nonumber\\
	& \leq \frac{1}{n^2} \label{eq:t_same}
\end{align}
where Eq.~(\ref{eq:aa}) comes from the facts that
\begin{align}
    &t\leq \epsilon \leq 1/2,\\
    &|\mu_i-\mu|= |t(p_i-\mu)|\leq (c+\mu)\epsilon.
\end{align}
 Eqs.~(\ref{eq:N_finite})(\ref{eq:t_same}) together imply 
 \begin{align}
     \|\cD - \overline{\cN}_{\bC}(\mu,1)^{d^4 n}\| & \leq O(\exp(-n)) + \frac{1}{n^2}.\label{eq:almost_same}
 \end{align}
 In other words, for any $t_i\in S$, the distribution of $\left( R(t)^{[v]} \right)_v $ is almost the same as the finite precision  2D $(\mu,n,d)$-Gaussian tensor network.  
 Let $n$ and $k=poly(n)$ be sufficiently large. By Eq.~(\ref{eq:almost_same})  and the assumption of the performance of $\cA$ we have
\begin{align}
 &Pr\left(
 \cA(t_i) = q(t_i)   \right)
 \geq \frac{3}{4} + \frac{1}{n} - O(\exp(-n))-\frac{1}{n^2}  \geq \frac{3}{4} + \frac{1}{n^2}.\\
 & E |\{i:\cA(t_i) = q(t_i)\}| \geq   \left(\frac{3}{4} + \frac{1}{n^2}\right)k
\end{align}
where in the second inequality $E$ refers to expectation. 
By Chernouff bound we know that  for sufficiently large  $k=poly(n)$, 
\begin{align}
Pr\left( |\{i:\cA(t_i) = q(t_i)\}|\geq \frac{k+n}{2} \right) &\geq 1 - \exp(-n). \label{eq:BW}
\end{align}

 \textbf{($\romannumeral2$)
 We then use the following theorem to recover the degree $n$ polynomial $q(t)$} 
 \begin{theorem}[Berlekamp-Welch~\cite{movassagh2018efficient}]\label{thm:BW}
 Let $q$ be a degree-$n$ polynomial over any field $\bF$. Suppose we are given $k$ pairs of elements $\{(x_i,y_i)\}_{i=1}^k$ with all $x_i$ distinct, and with the promise that $y_i=q(x_i)$ for at least $\max(n+1,(k+n)/2)$ points. Then, one can recover $q$ exactly in $poly(k,n)$ deterministic time.	
\end{theorem}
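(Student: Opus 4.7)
The approach is the classical Berlekamp–Welch decoder: we encode the recovery task as a linear system whose solutions furnish the polynomial as a ratio. Set $e := \lfloor (k-n)/2 \rfloor$, let $S \subseteq \{1,\ldots,k\}$ be the ``good'' set of indices with $y_i = q(x_i)$, and let $B := \{1,\ldots,k\} \setminus S$, so by hypothesis $|B| \leq e$. The plan is to find a pair of polynomials $(E,Q) \in \bF[x]^2$ with $E \not\equiv 0$, $\deg E \leq e$ and $\deg Q \leq n+e$, satisfying the $k$ linear constraints
\begin{equation*}
 E(x_i)\, y_i \;=\; Q(x_i), \qquad i=1,\ldots,k,
\end{equation*}
and then output $Q/E$ as the recovered polynomial.

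First I would show existence of a nontrivial solution. Define $E^{\ast}(x) := \prod_{i \in B} (x-x_i)$, padded (by multiplying with an arbitrary monic polynomial) so that $\deg E^{\ast} = e$, and $Q^{\ast}(x) := E^{\ast}(x)\,q(x)$. Then $\deg Q^{\ast} \leq n+e$, and the defining constraints hold: for $i \in S$ we have $Q^{\ast}(x_i) = E^{\ast}(x_i)\, q(x_i) = E^{\ast}(x_i)\, y_i$, and for $i \in B$ both sides vanish. Hence the homogeneous linear system in the $(e+1) + (n+e+1)$ unknown coefficients has a nonzero solution with $E \not\equiv 0$, which Gaussian elimination extracts in $\mathrm{poly}(k,n)$ time — for instance by normalizing $E$ to be monic of the appropriate degree and solving for the remaining coefficients.

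Next I would argue that any such solution actually recovers $q$. Consider $R(x) := Q(x) - E(x)\, q(x)$: this polynomial has degree at most $n+e$, and for every $i \in S$, $R(x_i) = Q(x_i) - E(x_i)\, y_i = 0$. Since the $x_i$ are distinct and $|S|$ exceeds $n+e$ under the hypothesis $|S| \geq \max(n+1, (k+n)/2)$, the polynomial $R$ has strictly more roots than its degree and must vanish identically. Therefore $E(x)\, q(x) = Q(x)$ as elements of $\bF[x]$, and $q(x) = Q(x)/E(x)$ can be recovered by polynomial long division, which succeeds because we know a priori that $E$ divides $Q$ exactly with quotient of degree $n$. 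Long division, like Gaussian elimination, takes $\mathrm{poly}(k,n)$ deterministic field operations.

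The main technical point is the root-counting inequality $|S| > n + e$ used in the uniqueness step: the two terms inside the $\max$ are calibrated precisely for this. The $(k+n)/2$ branch handles the ``many evaluations, few errors'' regime through the choice $e = \lfloor (k-n)/2 \rfloor$, while the $n+1$ branch covers the degenerate case $k \leq n+1$, in which the algorithm collapses to ordinary Lagrange interpolation on $n+1$ correct distinct nodes. Nothing in the argument uses properties of $\bF$ beyond field arithmetic, so the procedure works uniformly over any field with efficient arithmetic, as claimed.
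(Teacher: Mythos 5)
The paper does not prove this statement at all --- it is quoted verbatim from \cite{movassagh2018efficient} and used as a black box in Appendix~\ref{appendix:exact-case} --- so your proposal, which is the standard Berlekamp--Welch decoder (error-locator polynomial $E$, the linear system $E(x_i)y_i=Q(x_i)$, existence via $E^*=\prod_{i\in B}(x-x_i)$, uniqueness via root counting on $R=Q-Eq$, then division), is exactly the textbook argument behind the cited result, and all of its components are set up correctly: $|B|\leq e=\lfloor (k-n)/2\rfloor$, the degree bounds, the consistency of the monic-normalized system, and the polynomial-time claims are all fine.

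The one genuine gap is the sentence asserting that ``$|S|$ exceeds $n+e$ under the hypothesis $|S|\geq \max(n+1,(k+n)/2)$.'' Root counting needs $|S|\geq n+e+1$. When $k-n$ is odd this does follow, since $|S|$ is an integer and $(k+n)/2$ is not, so $|S|\geq (k+n+1)/2=n+e+1$. But when $k-n$ is even and $|S|$ equals $(k+n)/2$ exactly, you only get $|S|=n+e$, and $R$ may have as many roots as its degree, so the uniqueness step collapses. This is not something you can patch: at that borderline the statement itself fails. Take $n=1$, $k=3$, data $(0,0),(1,1),(2,0)$; both $q(x)=x$ and $q(x)=2-x$ agree with the data on $2=\max(n+1,(k+n)/2)$ points, so no algorithm can ``recover $q$ exactly.'' The standard (correct) formulation of Berlekamp--Welch requires agreement on \emph{strictly more} than $(k+n)/2$ points, under which your argument goes through verbatim; the version quoted in the paper inherits the loose ``at least'' from the reference, and in the paper's application this is harmless because the Chernoff bound gives roughly $3k/4\gg (k+n)/2$ correct points. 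You should either state and use the strict inequality, or explicitly restrict to $k-n$ odd / $|S|\geq (k+n)/2+1$, rather than claiming the strict bound follows from the hypothesis as written.
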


In Theorem \ref{thm:BW} let
\begin{align}
    &x_i\coloneqq  t_i,\\
    &y_i\coloneqq  \cA(t_i).
\end{align}
 Then by Eq.~(\ref{eq:BW}) we can recover $q(t)$ with probability $1 - \exp(-n)$ in $poly(n)$-time.

\textbf{($\romannumeral3)$ Finally}, we have $q(t)$ in hand, which is correct with probability $1 - \exp(-n)$. Since  $q(t)$ is a degree $n$ polynomial, we can easierly compute $q(1)$, which solves a $\#\PT$-$\Complete$ problem.

\end{proof}

\section{More on Barvinok's method}
\label{appendix:barvinok}

\begin{proof}[Proof of Lemma \ref{lem:root}]
The proof uses  Jensen's formula and follows the idea from \cite{eldar2018approximating}.
Let $z_1,\ldots ,z_j,\ldots $ be the roots of $h_A(z)$,  Jensen's formula establishes the connection between the roots in the disk $\cB(r)$, and the average of $\ln |h_A(z)|$ on the boundary of $\cB(r)$: 
\begin{align}
    \sum_{|z_j|\leq r} \ln \frac{r}{|z_j|} + \ln |h_A(0)| = E_\theta \ln |h_A(re^{i\theta})|. \label{eq:188}
\end{align}
First notice that
\begin{align}
     \sum_{|z_j|\leq r} \ln \frac{r}{|z_j|} &\geq \sum_{|z_j|\leq r(1-\nw)} \ln \frac{r}{|z_j|}\\
     &\geq \sum_{|z_j|\leq r(1-\nw)}  \ln \frac{r}{r(1-\nw)}\\
     &\geq \nw\cdot N_A (r(1-\nw)). \label{eq:191}
\end{align}
where in the last inequality we use $\ln \frac{1}{(1-\nw)}\geq \nw$ for $\nw\leq 1/2$.
Thus by Eqs.~(\ref{eq:191})(\ref{eq:188}), we have
\begin{align}
    E_A \left[N_A(r-r\nw)\right] &\leq \frac{1}{\nw} E_\theta E_A \ln \frac{ |h_A(re^{i\theta})|}{|h_A(0)|}\\
    &= \frac{1}{2\nw} E_\theta E_A \ln \frac{|h_A(re^{i\theta})|^2}{|h_A(0)|^2}\\
    &\leq \frac{1}{2\nw} \ln E_\theta E_A \frac{|h_A(re^{i\theta})|^2}{|h_A(0)|^2}.
\end{align}
where the last inequality holds since  $\ln$ is a concave function.
\end{proof}

\begin{lemma}[Derivatives of composite function]\label{lem:deriv_fg}  Let $G(z)$ and $\phi(z)$ be two functions   satisfying $\phi(0)=0$. Let $m$ be an integer. Suppose the first $m$ derivatives $\{G^{(k)}(0)\}_{k=0}^m$ and $\{\phi^{(k)}(0)\}_{k=0}^m$ can be computed in time $t(n)$ where $n$ is a parameter. Then the first  $m$ derivatives of the composite function $G(\phi(z))$ at $z=0$, denoted as 
$$\left\{\frac{\partial^k }{\partial z^k}G(\phi (z))\Big|_{z=0}\right\}_{k=1}^m$$
can be computed in time $t(n)+O(m^4)$.
\end{lemma}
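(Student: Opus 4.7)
The plan is to reduce the problem to Taylor series composition. Since $\phi(0)=0$, write the truncated Taylor expansions
\[
\phi(z)\equiv\sum_{k=1}^{m}a_k z^k\pmod{z^{m+1}},\qquad G(w)\equiv\sum_{k=0}^{m}b_k w^k\pmod{w^{m+1}},
\]
with $a_k=\phi^{(k)}(0)/k!$ and $b_k=G^{(k)}(0)/k!$. Because $\phi$ has no constant term, the series $\phi(z)^k$ starts at order $z^k$; therefore, modulo $z^{m+1}$, only the first $m+1$ terms of $G$ contribute to $G(\phi(z))$, and
\[
G(\phi(z))\equiv\sum_{k=0}^{m}b_k\,\phi(z)^k\pmod{z^{m+1}}.
\]
The first $m$ derivatives $\{(G\circ\phi)^{(k)}(0)\}_{k=1}^m$ are then read off from the coefficients of $z,z^2,\ldots,z^m$ in this truncated sum by multiplying by the appropriate factorials.

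Algorithmically, I would proceed in four steps. First, compute the numbers $\{a_k\}_{k=0}^m$ and $\{b_k\}_{k=0}^m$ from the given derivatives, in time $t(n)+O(m)$. Second, iteratively compute the truncated powers $P_k(z)\equiv\phi(z)^k\pmod{z^{m+1}}$ for $k=1,\ldots,m$ via $P_k=P_{k-1}\cdot\phi$ discarding all terms of degree $>m$; each truncated multiplication of two polynomials of degree at most $m$ takes $O(m^2)$ operations, giving $O(m^3)$ overall. Third, form the weighted sum $\sum_{k=0}^m b_k P_k(z)$, which takes $O(m^2)$ additions, yielding the Taylor coefficients of $G(\phi(z))$ modulo $z^{m+1}$. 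Fourth, multiply the $k$-th coefficient by $k!$ to obtain $(G\circ\phi)^{(k)}(0)$.

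The total running time is $t(n)+O(m^3)$, which is within the claimed $t(n)+O(m^4)$. Correctness follows from the identity displayed above, which itself follows from the fact that $G$ and $\phi$, viewed as formal power series, satisfy the composition rule, and all manipulations are performed modulo $z^{m+1}$ so no information about derivatives of order $>m$ is lost or needed. There is no real obstacle here: the lemma is essentially a routine observation about manipulating truncated power series, and the only thing to be careful about is to \emph{truncate} at each intermediate multiplication so that the powers $\phi^k$ remain polynomials of degree at most $m$, keeping each multiplication cost at $O(m^2)$ rather than blowing up with $k$.
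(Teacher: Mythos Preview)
Your proof is correct and in fact gives a slightly stronger bound ($t(n)+O(m^3)$) than the stated $t(n)+O(m^4)$. However, your route differs from the paper's. The paper invokes Fa\`a di Bruno's formula,
\[
\frac{\partial^k}{\partial z^k}G(\phi(z))\Big|_{z=0}=\sum_{r=1}^k G^{(r)}(0)\,B_{k,r}\bigl(\phi^{(1)}(0),\ldots,\phi^{(k-r+1)}(0)\bigr),
\]
and then computes the Bell polynomials $B_{k,r}$ via the standard recurrence for partial ordinary Bell polynomials, arriving at an $O(m^4)$ cost for all of them. Your argument bypasses Bell polynomials entirely by working directly with truncated power series: iteratively squaring/multiplying $\phi(z)^k\bmod z^{m+1}$ and taking the $b_k$-weighted sum. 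This is more elementary, avoids introducing the Bell-polynomial machinery, and shaves a factor of $m$ off the running time; the paper's approach, on the other hand, makes the combinatorial structure of the derivatives explicit, which can be useful if one wants individual $B_{k,r}$ values rather than just the composed series. Either way the conclusion of the lemma holds.
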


\begin{proof}
For integer $k$ and $r$, define the Bell polynomial to be
\begin{multline}
    B_{k,r}(\phi^{(1)}(z),\phi^{(2)}(z),\ldots ,\phi^{(k-r+1)}(z)) \\= \sum \frac{k!}{j_1!j_2!\ldots j_{k-r+1}!} \left(\frac{\phi^{(1)}(z)}{1!}\right)^{j_1}\left(\frac{\phi^{(2)}(z)}{2!}\right)^{j_2}\ldots \left(\frac{\phi^{(k)}(z)}{k!}\right)^{j_{k-r+1}} \label{eq:Bell}
\end{multline}
where the summation is 
\begin{align}
    j_1+j_2+\ldots +j_{k-r+1}=r \text{ and }  j_1+2j_2+\ldots +(k-r+1)j_{k-r+1}=k. \label{eq:cond_j}
\end{align} 
To compute the derivative of the composite function $G(\phi(z))$,
we will use the Faa di Bruno's formula which states that
\begin{align}
    \frac{\partial^k }{\partial z^k}G(\phi (z))\Big|_{z=0} &= \sum_{r=1}^k G^{(r)}(\phi(0)) \cdot B_{k,r}(\phi^{(1)}(0),\phi^{(2)}(0),\ldots ,\phi^{(k-r+1)}(0))\\
    &= \sum_{r=1}^k G^{(r)}(0) \cdot B_{k,r}(\phi^{(1)}(0),\phi^{(2)}(0),\ldots ,\phi^{(k-r+1)}(0)).\label{eq:faa}
\end{align}
  where the notation  $G^{(r)}(\phi(0))$ refers to $G^{(r)}(z)|_{z=\phi(0)}$ and
    the last equality comes from $\phi(0)=0$. 
For each $B_{k,r}$, we define the corresponding partial ordinary Bell polynomials as
\begin{equation}
    \hat{B}_{k,r}(y_{1}, \ldots, y_{k-r+1})=\frac{r!}{k!}B_{k,r}(x_{1},\ldots ,x_{k-r+1})
\end{equation}
where $y_i = \frac{x_i}{i!}$. They satisfy the recurrence formula
\begin{equation}
    \hat{B}_{k,r}(y_{1},\ldots ,y_{k-r+1})=\sum_{i=1}^{k-r+1}y_{i} \hat{B}_{k-i,r-1}(y_{1},\ldots ,y_{k-r+1-i}).
\end{equation}
Then after computing the first $m$ derivatives $\{G^{(k)}(0)\}_{k=0}^m$ and $\{\phi^{(k)}(0)\}_{k=0}^m$ in time $t(n)$, Algorithm \ref{alg:bellpoly} computes $B_{k,r}$ in time $O(k^2r)$. It suffices to compute $B_{m,r}$ for $r=1,\ldots ,m$ since all lower orders can be computed along the way, which takes total time $O(m^4)$. Therefore, the first $m$ derivatives of $G(\phi(z))$ at $z=0$ can be computed in time $t(n) + O(m^4)$.
\begin{algorithm}
\caption{Compute Bell polynomials \cite{taghavian2023fastalgorithmcomputingbell}}\label{alg:bellpoly}
\begin{algorithmic}[1]
\State Set $y_i \leftarrow x_i/i!$, $i=1,\ldots ,k-r+1$ 
\State Set $\hat{B}_{0,0}=1$, $\hat{B}_{i,0} = 0$, $i=1,\ldots ,k-r$
\For{$l=1,\ldots ,r$}
    \For{$i=l,\ldots ,k-r+l$}
        \State $\hat{B}_{i,l}(y_1,\ldots ,y_{k-r+1}) \leftarrow \sum_{j=1}^{i-l+1} y_j \hat{B}_{i-j,l-1}(y_1,\ldots ,y_{k-r+1})$
    \EndFor
\EndFor
\State Set $B_{k,r}(x_1,\ldots ,x_{k-r+1}) \leftarrow \frac{k!}{r!}\hat{B}_{k,r}(y_1,\ldots ,y_{k-r+1})$
\end{algorithmic}
\end{algorithm}
\end{proof}
 
\section{$\BPP$-$\hardness$ of additive-error approximation (Theorem \ref{thm:BPP})}	\label{appendix:BPPhard}
In this section, we prove the approximation problem in Theorem \ref{thm:BPP} is $\BPP$-$\hard$.
This proof is similar to Section 4.2 in \cite{arad2010quantum} and here we give a proof sketch. 

First we embed classical randomized computations into quantum circuits. Given a parameter $n$,
consider a quantum circuit of following form:
\begin{itemize}
    \item  Takes input as $\ket{0}^p\ket{+}^q$ for $p,q=poly(n)$.
    \item Applies a sequence of gates $Q=Q_L\ldots Q_1$, where $L=poly(n)$ and $\{Q_i\}_i$ are reversible gates on constant qubits.
    \item  Measure the first qubit in computational basis.
\end{itemize}
  Denote $p_0$ as the probability of getting measurement outcome $0$ in the first qubit. Suppose it is promised that either one of the following holds: 
  \begin{itemize}
      \item  Yes case: $p_0\geq 2/3$ 
      \item  No case: $p_0\leq 1/3$.
  \end{itemize}
One can check that the problem of given such a circuit, 
output Yes/No correctly with probability
greater than $2/3$ is $\BPP$-$\hard$.\footnote{Readers who are not familiar with randomized reduction may read Definition 7.19 in \cite{arora2009computational}.}  In other word 
\begin{claim}\label{claim:p0}
    An algorithm for estimating $p_0$ with high probability is $\BPP$-$\hard$.
\end{claim}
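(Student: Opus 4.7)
The plan is to give a randomized polynomial-time reduction from an arbitrary $\BPP$ language to the task of estimating $p_0$ for a circuit of the form specified. Let $L\in \BPP$ and, by standard amplification, fix a randomized algorithm $A$ that on input $x$ uses $q=\mathrm{poly}(|x|)$ uniformly random bits $y\in\{0,1\}^q$ and outputs a single bit $A(x;y)$ satisfying $\Pr_y[A(x;y)=0]\ge 2/3$ if $x\in L$ and $\Pr_y[A(x;y)=0]\le 1/3$ otherwise. By the Bennett reversibilization trick, the deterministic map $y\mapsto A(x;y)$ (with $x$ hard-wired) can be implemented by a reversible circuit $Q=Q_L\cdots Q_1$ of size $L=\mathrm{poly}(|x|)$ built from $\mathrm{X}$, $\mathrm{CNOT}$ and $\mathrm{Toffoli}$ gates, using $p=\mathrm{poly}(|x|)$ ancillae initialized to $\ket{0}$; by inserting an $\mathrm{X}/\mathrm{SWAP}$ rewiring at the end of $Q$ one can arrange that the output bit is written on wire $1$, i.e.\ $Q\,\ket{0}^{\otimes p}\ket{y} = \ket{A(x;y)}\otimes\ket{g(y)}$ for some garbage register $\ket{g(y)}$ in the computational basis.

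The reduction feeds $\ket{0}^{\otimes p}\ket{+}^{\otimes q}$ into $Q$, which fits exactly the circuit template from the excerpt. Expanding $\ket{+}^{\otimes q} = 2^{-q/2}\sum_{y\in\{0,1\}^q}\ket{y}$ and using that $Q$ maps computational basis states to computational basis states,
\begin{align*}
Q\,\ket{0}^{\otimes p}\ket{+}^{\otimes q}
 \;=\; 2^{-q/2}\sum_{y\in\{0,1\}^q} \ket{A(x;y)}\otimes\ket{g(y)}.
\end{align*}
Since $y\mapsto (A(x;y),g(y))$ is injective (as $Q$ is a permutation of basis states), the branches are mutually orthogonal, so measuring wire $1$ in the computational basis gives outcome $0$ with probability
\begin{align*}
p_0 \;=\; 2^{-q}\,\bigl|\{y\in\{0,1\}^q : A(x;y)=0\}\bigr|
\;=\; \Pr_y[A(x;y)=0].
\end{align*}
In particular $p_0\ge 2/3$ when $x\in L$ and $p_0\le 1/3$ when $x\notin L$, so the constructed circuit satisfies the promise of the problem in the claim.

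Hence, any algorithm that, given the description of such a circuit, returns an estimate $\hat p_0$ with $|\hat p_0-p_0|<1/6$ with success probability at least $2/3$ immediately decides $L$ with the same success probability under the reduction, which is a bona fide randomized polynomial-time many-one reduction (as discussed via \cite{arora2009computational} in the paper's reference); this establishes $\BPP$-hardness of the $p_0$-estimation problem. There is no genuine obstacle: the content is the combination of Bennett's reversible simulation with the standard observation that feeding $\ket{+}^{\otimes q}$ into a classical-reversible circuit is operationally equivalent to sampling uniform classical randomness, since the only measurement is in the computational basis on a wire whose value is, branch by branch, a deterministic function of $y$. The sole point requiring care is that the branches do not interfere at wire $1$, which is guaranteed by the injectivity of $(A(x;y),g(y))$ on the reversible side together with the computational-basis measurement on the readout side.
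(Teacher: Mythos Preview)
Your proposal is correct. The paper itself does not prove this claim beyond the assertion ``One can check'' and a footnote pointing to the definition of randomized reductions in Arora--Barak; your argument fills in precisely the standard reduction the paper leaves implicit, namely Bennett-reversibilizing a $\BPP$ machine with the input hard-wired, using the $\ket{+}^{\otimes q}$ register to supply the random bits, and observing that classical-reversible gates permute computational basis states so that no interference occurs and $p_0$ equals the acceptance probability of the underlying randomized algorithm.
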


Similarly as \cite{arad2010quantum}, to write $p_0$ as a tensor network, we first define a related circuit $U$ on $p+q+1$ qubits: As shown in Figure \ref{fig:Q}, U firstly applies $Q$ to $|0^{\otimes(p+1)},+^{\otimes q}\rangle$, then copies the first qubit of $Q$ to the additional qubit by CNOT, and then applies $Q^{-1}$. 

\begin{figure}[H]
    \centering
\includegraphics[width=0.5\textwidth]{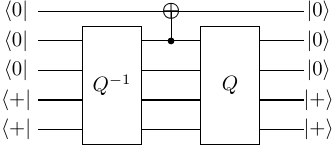}
    \caption{Circuit for $p_0$}
    \label{fig:Q}
\end{figure}

One can check that 
$$\langle 0^{\otimes(p+1)},+^{\otimes q} \big| U \big| 0^{\otimes(p+1)},+^{\otimes q}\rangle=p_0.$$
One can transform $\langle 0^{\otimes(p+1)},+^{\otimes q} \big| U \big| 0^{\otimes(p+1)},+^{\otimes q}\rangle$ to a tensor network  $T$ similarly as \cite{arad2010quantum},  then \begin{align}
    \chi(T) = \langle 0^{\otimes(p+1)},+^{\otimes q} \big| U \big| 0^{\otimes(p+1)},+^{\otimes q}\rangle=p_0,
\end{align} where 
\begin{itemize}
    \item Each reversible gate $Q_i$ on constant qubits is translated to a tensor $M^{[Q_i]}$, which is of constant rank (constant degree) and bond dimension $2$. Note that since $Q_i$ is a reversible gate, which is a permutation, thus we have $\|Q_i\|_1=1$. 
    \item We pair the input qubits on the left and right in Figure \ref{fig:Q}.  $|0\rangle\langle 0|$  is translated into a tensor  $M^{[0]}=\begin{bmatrix}
  1&0\\0&0
    \end{bmatrix}$, $|+\rangle\langle +|$  is translated into a tensor $M^{[+]}=\frac{1}{2}\begin{bmatrix}
   1&1\\1&1
    \end{bmatrix}$. Note that $\|M^{[0]}\|_1=\|M^{[+]}\|_1=1$.
\end{itemize}
Thus the approximation scale $\Delta_1$ in Eq.~(\ref{eq:scale}) is equal to $1$.
Thus for $\epsilon=1/poly(n)$,  the approximation problem in Theorem \ref{thm:BPP}, that is Eq.~(\ref{eq:BPP}), requires approximating  $\chi(T)=p_0$ within  precision $\epsilon$ with high probability, thus is $\BPP$-$\hard$ by Claim \ref{claim:p0}.

\bibliographystyle{alpha}
\bibliography{ref.bib}
\end{document}